\newtheorem{thm}{Theorem}
\newtheorem{lem}{Lemma}
\newtheorem{rem}{Remark}
\newtheorem{defi}{Definition}
\newtheorem*{proof}{Proof}
\newtheorem{coro}{Corollary}
\begin{document}
\newcommand{\e}[1]{\boldsymbol{#1}}

\title{Privacy-preserving Channel Estimation in Cell-free Hybrid Massive MIMO Systems}

\IEEEoverridecommandlockouts

\author{Jun~Xu,
        Xiaodong~Wang,~\IEEEmembership{Fellow,~IEEE},
        Pengcheng~Zhu, ~\IEEEmembership{Member,~IEEE},
        and~Xiaohu~You,~\IEEEmembership{Fellow,~IEEE}\\
\thanks{J. Xu, P. Zhu and X. You are all with National Mobile Communications Research Laboratory, Southeast University, Nanjing, China ( emails: \{xujunseu, p.zhu, xhyu\}@seu.edu.cn).}
\thanks{X.Wang is with the Department of Electrical Engineering, Columbia University, New York, NY 10027 (e-mail:wangx@ee.columbia.edu).}
}

\maketitle

\begin{abstract}
We consider a cell-free hybrid massive multiple-input multiple-output (MIMO) system with $K$ users and $M$ access points (APs), each with $N_a$ antennas and $N_r< N_a$ radio frequency (RF) chains. When $K\ll M{N_a}$, efficient uplink channel estimation and data detection with reduced number of pilots can be performed based on low-rank matrix completion. However, such a scheme requires the central processing unit (CPU) to collect received signals from all APs, which may enable the CPU to infer the private information of user locations. We therefore develop and analyze privacy-preserving channel estimation schemes under the framework of differential privacy (DP). As the key ingredient of the channel estimator, two joint differentially private noisy matrix completion algorithms based respectively on Frank-Wolfe iteration and singular value decomposition are presented. We provide an analysis on the tradeoff between the privacy and the channel estimation error. In particular, we show that the estimation error can be mitigated while maintaining the same privacy level by increasing the payload size with fixed pilot size; and the scaling laws of both the privacy-induced and privacy-independent error components in terms of payload size are characterized. Simulation results are provided to further demonstrate the tradeoff between privacy and channel estimation performance. 
\end{abstract}
\begin{IEEEkeywords}
Cell-free, hybrid massive MIMO, channel estimation, location privacy, joint differentially private, matrix completion, Frank-Wolfe, singular value decomposition.
\end{IEEEkeywords}

\section{Introduction}
Due to the high spectral and energy efficiencies, the cell-free massive MIMO has emerged as a promising wireless technology, where a large number of access points (APs) are distributed over a geographical area, and collaboratively serve users using the same time-frequency resource\cite{XHWang20IOT}. To reduce the high cost associated with equipping each antenna with a radio frequency (RF) chain that contains a high-resolution analog-to-digital converter (ADC)\cite{SHanCM15}, hybrid analog/digital architectures are typically employed where with analog combining based on switches or phase shifters, antennas are randomly connected to a reduced number of RF chains and ADCs\cite{SLiangTWC19}.

To enable cell-free hybrid massive MIMO systems, it is crucial to obtain accurate channel state information (CSI). In \cite{SLiangTWC19}, a semi-blind channel estimation method based on low-rank matrix completion was proposed for hybrid massive MIMO, with the salient feature that the number of pilots is proportional to the number of users, instead of the number of antennas; and the estimation error reduces with the increase of the data payload size. In order to apply such channel estimation scheme in a cell-free system, each AP needs to send its observed received signal to a central processing unit (CPU), which performs channel estimation and data detection for all users. However, this may lead to the leakage of users' location information to the CPU, since the large scale fading of channels are determined by the locations of users and APs according to the path loss law. 

Nowadays, the privacy awareness of the public has been significantly increased when using smart mobile devices and services. From the view point of users, privacy in 5G network can be divided into three
main categories: data privacy, location privacy and identity privacy\cite{Liyanage}. Locations are usually
regarded as one of the most important sensitive information for most people, the leakage of which may pose
threats to other sensitive information (e.g., home address, work place) and even personal safety\cite{WTong}. Hence, it is crucial to provide high-quality services without disclosing the users’ location privacy in 5G mobile networks.

Differential privacy (DP) is a probabilistic framework based on the notion of indistinguishability\cite{Dwork10}. In particular, observing an output of a differentially private algorithm, one cannot infer whether any specific user contributed to the data. In this framework, privacy is mainly achieved by randomizing the released statistics. DP has been accepted as a standard privacy model and widely adopted in many fields, such as recommender system\cite{mcsherry2009differentially}, deep learning\cite{abadi2016deep}, distributed optimization\cite{huang2015differentially}, data mining\cite{mohammed2011differentially}, ridesharing services\cite{WTong}, etc. In addition, applications of DP in communication networks include data-driven caching in information-centric networks\cite{XZhang18}, and big data analytics in edge computing\cite{MDuCM18,CXuCM18}. 
However, so far there is no work addressing DP for physical-layer signal processing. A particular challenge is that unlike the above-mentioned higher-layer applications, the physical-layer is much more sensitive to the perturbation noise added to achieve DP.

In this paper, we aims to design privacy-preserving channel estimation algorithms for cell-free hybrid massive MIMO systems. The major contributions are summarized as follows:
\begin{itemize}
    \item To the best of our knowledge, this is the first work that integrates DP with physical-layer signal processing. 
    \item We propose two privacy-preserving channel estimators based on Frank-Wolfe (FW) iteration and singular value decomposition (SVD), respectively. 
    \item We show that both channel estimation algorithms are joint differentially private. We also analyze the estimation error bounds for the two algorithms, and characterize the scaling laws of the estimation error in terms of data payload size. 
    \item Through extensive simulations, we illustrate the tradeoff between privacy and channel estimation and data detection performance for the two algorithms. 
\end{itemize}

The remainder of this paper is organized as follows. Section \uppercase\expandafter{\romannumeral2} describes the cell-free hybrid massive MIMO system under consideration and provides some background on DP. Two privacy-preserving channel estimation algorithms are proposed in Section \uppercase\expandafter{\romannumeral3}. Section \uppercase\expandafter{\romannumeral4} presents the analysis on the privacy and channel estimation performance of the two algorithms. Simulation results are provided in Section \uppercase\expandafter{\romannumeral5}. Finally, Section \uppercase\expandafter{\romannumeral6} concludes the paper.

\emph{Notations:} Boldface letters denote matrices (upper case) or vectors (lower case). The transpose, conjugate transpose and trace operators are denoted by ${\left(  \cdot  \right)^ \textrm{T} }$, ${\left(  \cdot  \right)^ \textrm{H} }$ and ${\rm{tr}}\left(  \cdot  \right)$ respectively. $\|\cdot\|_{\mathcal{F}}$, $\|\cdot\|_{2}$ and $\|\cdot\|_{\rm{nuc}}$ denote the Frobenius norm, spectral norm and nuclear norm of a matrix, respectively. Assuming the singular values of a matrix $\mathbf{A}\in\mathbb{C}^{m\times n}$ are $\lambda_1,\cdots, \lambda_{\min\left(m,n\right)}$ in descending order, then we have $\|\mathbf{A}\|_{\mathcal{F}}=\sqrt{\sum\nolimits_{i=1}^m\sum\nolimits_{j=1}^n{\mathbf{A}^2\left(i,j\right)}}=\sqrt{\sum\nolimits_{i=1}^{\min\left(m,n\right)}\lambda_i^2}$; $\|\mathbf{A}\|_{2}=\lambda_1$; $\|\mathbf{A}\|_{\rm{nuc}}=\sum\nolimits_{i=1}^{\min\left(m,n\right)}{\lambda_i}$. ${\rm{Diag}}\left(\mathbf{d}\right)$ returns a diagonal matrix whose diagonal elements are given by a vector $\mathbf{d}$. ${{\bf{I}}_M}$, $\otimes$ and $\mathbb{E}\left\{  \cdot  \right\}$ respectively represent the $M\times M$ identity
matrix, the Kronecker product and the expectation operator. $\mathcal{\mathcal{N}_{\rm{c}}}(\mu,\sigma^2)$ and $\mathcal{\mathcal{N}}(\mu,\sigma^2)$ respectively denote the complex and real circularly symmetric Gaussian distribution with mean $\mu$
and variance $\sigma^2$. $f(n)=\Theta\left(g(n)\right)$ means $f$ is bounded below by $g$ asymptotically; $f(n)=O\left(g(n)\right)$ means $f$ is bounded above by $g$ asymptotically; $f(n)=\omega\left(g(n)\right)$ means $f$ dominates $g$ asymptotically.

\section{System Descriptions and Background}
\subsection{Signal Model}
We consider a cell-free massive MIMO system, in which $M$ distributed APs each equipped with $N_a$ antennas collaboratively serve $K$ single-antenna users using the same time-frequency resource, as shown in Fig. \ref{fig:system}. We denote ${\cal M} = \left\{ {1, \ldots M} \right\}$ and ${\cal K} = \left\{ {1, \ldots K} \right\}$ as the sets of APs and users respectively. Each AP employs an analog structure with $N_r$ RF chains to combine the incoming signal in the RF band. Each RF chain contains a high-resolution ADC and forwards the data stream to the baseband processor that performs only simple signal processing.  All APs are connected to a CPU through perfect backhaul links, which performs computation-intensive signal processing. 

\begin{figure}
    \centering
    \includegraphics[width=1\textwidth]{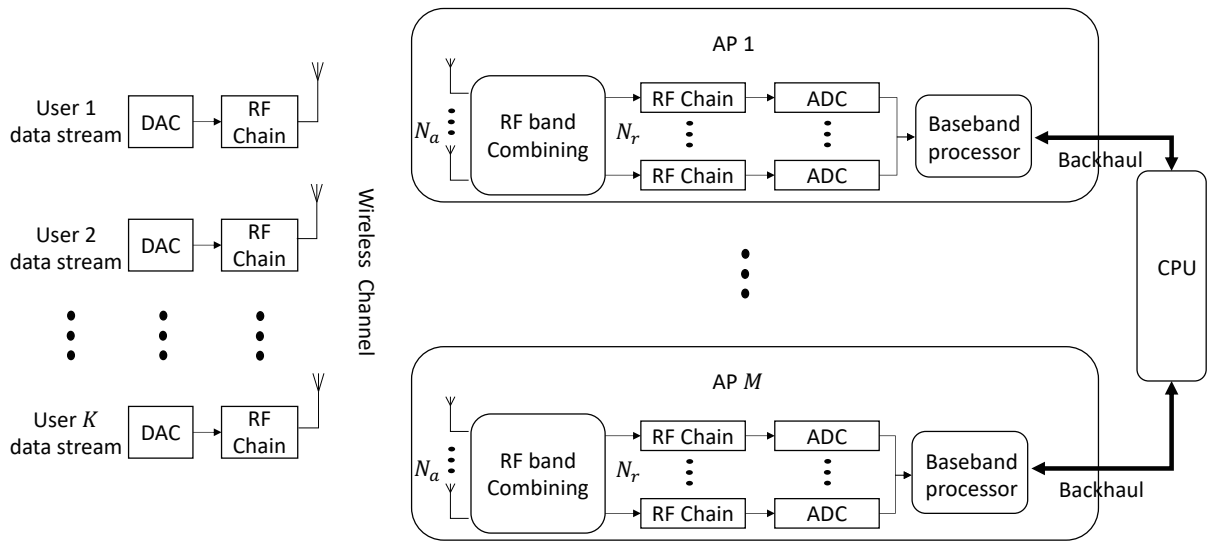}
    \caption{The cell-free multiuser massive MIMO uplink system. Left: $K$ single-antenna users. Right: $M$ APs each with a hybrid structure consisting of $N_a$ antennas and $N_r$ RF chains. Each AP has a baseband processor with limited computational capability and all APs are connected to a CPU via perfect backhual links. }
    \label{fig:system}
\end{figure}

We assume a block flat-fading channel between each user-AP pair. Let $\mathcal{T}_{\rm{c}}=\left\{1,\cdots,\tau_{\rm{c}}\right\}$ denote the set of time slots within a coherence interval $\tau_{\rm{c}}$ when the channel coefficients remain constant. Throughout the paper, we assume that $M{N_a}>\tau_{\rm{c}}$, which can be satisfied in massive MIMO. The first $
\tau_{\rm{p}}$ time slots denoted by $\mathcal{T}_{\rm{p}}=\left\{1,\cdots,\tau_{\rm{p}}\right\}$ are used for uplink channel estimation, and the remaining $\tau_{\rm{d}}=\tau_{\rm{c}}-\tau_{\rm{p}}$ time slots denoted by $\mathcal{T}_{\rm{d}}=\left\{\tau_{\rm{p+1}},\cdots,\tau_{\rm{c}}\right\}$ are used for uplink data transmission. The channel vector from user $k$ to AP $m$ can be modeled as
\begin{equation}
\label{equ_channel}
    {{\mathbf{h}}_{k,m}}=\sqrt{\beta_{k,m}}{{\mathbf{g}}_{k,m}}\in {{\mathbb{C}}^{N_a\times {1}  }},
\end{equation}
where $\beta_{k,m}$ represents the large-scale fading and ${{\mathbf{g}}_{k,m}}\in{{\cal{N}}_{\rm{c}}}\left( {0,\mathbf{I}_{N_a}} \right)$ models the small-scale fast fading. 

Let $\mathbf{s}\left[t\right]\sim \mathcal{N}_{\rm{c}}\left(0,\mathbf{I}_{K}\right)$ denote the transmitted signal from $K$ users at time slot $t$, i.e., $\mathbf{s}\left[t\right]$ corresponds to pilots for $t\in{\mathcal{T}}_{\rm{p}}$ and data symbols for $t\in{\mathcal{T}}_{\rm{d}}$. The received signal $\mathbf{r}_m\left[t\right]\in\mathbb{C}^{{N_a}\times 1}$ across ${N_a}$ antennas at AP $m$ is given by
\begin{equation}
\label{equ_rm}
 \mathbf{r}_m\left[t\right]=\mathbf{H}_m\mathbf{s}\left[t\right]+\mathbf{n}_m\left[t\right], \forall t\in \mathcal{T}_{\rm{c}},\forall m\in\mathcal{M},
\end{equation}
where $\mathbf{H}_m=\left[\mathbf{h}_{1,m},\cdots,\mathbf{h}_{K,m}\right]\in\mathbb{C}^{{N_a}\times K}$ denotes the channel matrix between AP $m$ and all users. $\mathbf{n}_m\left[t\right]\sim\mathcal{N}_{\rm{c}}\left(0,\sigma^2{\mathbf{I}_{{N_{a}}}}\right)$ is the received noise sample at AP $m$ at time slot $t$. Denote $\mathbf{R}_m\buildrel\Delta\over=\left[\mathbf{r}_m\left[1\right],\cdots,\mathbf{r}_m\left[\tau_{\rm{c}}\right]\right]\in\mathbb{C}^{{N_a}\times \tau_{\rm{c}}}$, $\mathbf{N}_m\buildrel\Delta\over=\left[\mathbf{n}_m\left[1\right],\cdots,\mathbf{n}_m\left[\tau_{\rm{c}}\right]\right]\in\mathbb{C}^{{N_a}\times \tau_{\rm{c}}}$, $\mathbf{P}\buildrel\Delta\over=\left[\mathbf{s}\left[1\right],\cdots,\mathbf{s}\left[\tau_{\rm{p}}\right]\right]\in\mathbb{C}^{K\times \tau_{\rm{p}}}$, $\mathbf{D}\buildrel\Delta\over=\left[\mathbf{s}\left[\tau_{\rm{p}}+1\right],\cdots,\mathbf{s}\left[\tau_{\rm{c}}\right]\right]\in\mathbb{C}^{K\times \tau_{\rm{d}}}$, $\mathbf{S}= \left[\mathbf{P},\mathbf{D}\right]\in\mathbb{C}^{K\times{\tau_{\rm{c}}}}$. Then (\ref{equ_rm}) can be rewritten as
\begin{equation}
\label{equ_Rm}
    \mathbf{R}_m=\mathbf{H}_m\mathbf{S}+\mathbf{N}_m,\forall m\in\mathcal{M}.
\end{equation}
By stacking the signals from all APs and denoting $\mathbf{R}=\left[\mathbf{R}_1^{\rm{T}},\cdots,\mathbf{R}_M^{\rm{T}}\right]^{\rm{T}}\in\mathbb{C}^{M{N_a}\times{\tau_{\rm{c}}}}$, $\mathbf{H}=\left[\mathbf{H}_1^{\rm{T}},\cdots,\mathbf{H}_M^{\rm{T}}\right]^{\rm{T}}\in\mathbb{C}^{M{N_a}\times K}$ and $\mathbf{N}=\left[\mathbf{N}_1^{\rm{T}},\cdots,\mathbf{N}_M^{\rm{T}}\right]^{\rm{T}}\in\mathbb{C}^{M{N_a}\times{\tau_{\rm{c}}}}$, we then have
\begin{equation}
\label{equ_R}
    \mathbf{R}=\mathbf{H}\mathbf{S}+\mathbf{N}.
\end{equation}
Note that ${\rm{rank}}\left(\mathbf{H}\mathbf{S}\right)\le K$ and we assume that $M{N_a}\gg K$ and $\tau_{\rm{c}}\gg K$, hence $\mathbf{R}$ is a noisy version of a low-rank matrix.

Then $\mathbf{r}_m\left[t\right]$ will pass through analog structures and be combined in the RF band. In this paper, we consider the analog combining based on switches, where each RF chain is randomly connected to one of $N_a$ antennas through a switch at each time slot\cite{SLiangTWC19}. Such antenna selection can capture many advantages of massive MIMO and has low power consumption\cite{SSanayeiCM2004}. 
Denote $\Omega_m$ as the set of indices $(n,t)$ such that the $n$-th element of ${\mathbf{r}}_m\left[t\right]$, ${\mathbf{R}}_m(n,t)$ is observed. Denote ${\mathbf{Y}}_m=(\mathbf{R}_m)_{\Omega_m}\in\mathbb{C}^{{N_a}\times \tau_{\rm{c}}}$ as the sampled version of ${\mathbf{R}}_m$ such that 
\begin{equation}
\label{equ_YmR}
    \mathbf{Y}_m\left(n,t\right)=\left\{\begin{matrix}
{{\mathbf{R}}_m}\left(n,t\right), & {\rm{if}} \left(n,t\right)\in\Omega_m,\\ 
 0, & {\rm{if}} \left(n,t\right)\notin\Omega_m.
\end{matrix}\right.    
\end{equation}
Note that each column of $\mathbf{Y}_m$ has exactly $N_r$ non-zero elements.

Traditionally when there is no privacy concern, to fully exploit the low-rank structure  of 
$\mathbf{R}$ in (4) and the computing power of CPU, each AP $m$ will send its received signal $\mathbf{Y}_m$ to the CPU. The CPU will then estimate the channels $\left\{\mathbf{H}_m, m \in \mathcal{M} \right\}$ and the user data payload $\mathbf{D}$, based on 
$\left\{\mathbf{Y}_m, m \in \mathcal{M}\right\}$ and the pilots $\mathbf{P}$.  
However, note that the large-scale fading coefficient $\beta_{k,m}$ in (\ref{equ_channel}) contains the path loss information which is in turn determined by the distance between the user-AP pair. Since the locations of APs are fixed, the location of user $k$ can be accurately estimated if its distances from more than three APs are known. Hence if each AP $m$ directly sends $\mathbf{Y}_m$ to the CPU, then the location information of users
might be compromised once the CPU obtains accurate estimates of the channels $\left\{\mathbf{H}_m, m \in \mathcal{M} \right\}$.
Hence, in order to protect the location privacy of users, each AP $m$ cannot send its $\mathbf{Y}_m$ directly to the CPU. Instead, the APs and the CPU should collaborate in a privacy-preserving way such that the estimate of channel $\mathbf{H}_m$ is only available to AP $m$ but not to the CPU or other APs. 

This paper focuses on the design and analysis of such privacy-preserving channel estimation schemes. In the next subsection, we provide a general overview of the notion of differential privacy (DP) and basic approaches to achieving DP.

\subsection{Background on Differential Privacy (DP)}
Recall that $\mathbf{Y}_m$ and $\mathbf{H}_m$ denote the received signal and the channel at AP $m$, respectively. Denote $\widehat{\mathbf{H}}_m$ as an estimate of $\mathbf{H}_m$, $\mathbf{Y}=\left[\mathbf{Y}_1^{\rm{T}},\cdots,\mathbf{Y}_M^{\rm{T}}\right]^{\rm{T}}\in\mathbb{C}^{M{N_a}\times {\tau_{\rm{c}}}}$ and $\mathbf{\widehat{H}}=\left[\mathbf{\widehat{H}}_1^{\rm{T}},\cdots,\mathbf{\widehat{H}}_M^{\rm{T}}\right]^{\rm{T}}\in\mathbb{C}^{M{N_a}\times K}$. Let $\mathcal{A}: \mathbb{C}^{M{N_a}\times {\tau_{\rm{c}}}}\rightarrow{\mathbb{C}^{M{N_a}\times K}}$ be a randomized channel estimation algorithm which takes $\mathbf{Y}$ as input, and outputs $\mathbf{\widehat{H}}$. In addition, $\mathcal{A}_{-m}: \mathbb{C}^{M{N_a}\times {\tau_{\rm{c}}}}\rightarrow\mathbb{C}^{(M-1){N_a}\times K}$ outputs $\mathbf{\widehat{H}}_{-m}$, which denotes the estimated channels for all APs other than AP $m$. Similarly denote $\mathbf{Y}_{-m}\in\mathbb{C}^{(M-1){N_a}\times {\tau_{\rm{c}}}}$ as $\mathbf{Y}$ with $\mathbf{Y}_{m}$ removed. Recall that our goal is to devise the channel estimation algorithm ${\mathcal{A}}$ such that 
$\widehat{\mathbf{H}}_m$ is available only to AP $m$, but not to the CPU or other APs. 
\begin{defi}[Standard DP\cite{dwork2006calibrating} and Joint DP\cite{kearns2014mechanism}]
Given $\epsilon, \delta>0$, $\mathcal{A}$ is $\left(\epsilon,\delta\right)$-differentially private if for any AP $m$, any two possible values $\mathbf{Y}_m,\mathbf{Y}_m^{'}$ of the received signal at AP $m$, any value $\mathbf{Y}_{-m}$ of the received signal at all other APs, and any subset $S\in\mathbb{C}^{M{N_a}\times K}$, we have
\begin{equation}
    {P\left[{\mathcal{A}}\left(\left[\mathbf{Y}_m,\mathbf{Y}_{-m}\right]\right)\in{S}\right]}\le{e^{\epsilon}}{P\left[{\mathcal{A}}\left(\left[\mathbf{Y}_m^{'},\mathbf{Y}_{-m}\right]\right)\in{S}\right]}+\delta,
\end{equation}
where the probability $P\left[\cdot\right]$ is over the randomness of $\mathcal{A}$. Moreover, $\mathcal{A}$ is $\left(\epsilon,\delta\right)$-joint differentially private if for any subset $S\in\mathbb{C}^{(M-1){N_a}\times K}$, we have
\begin{equation}
    {P\left[{\mathcal{A}}_{-m}\left(\left[\mathbf{Y}_m,\mathbf{Y}_{-m}\right]\right)\in{S}\right]}\le{e^{\epsilon}}{P\left[{\mathcal{A}}_{-m}\left(\left[\mathbf{Y}_m^{'},\mathbf{Y}_{-m}\right]\right)\in{S}\right]}+\delta.
\end{equation}
\end{defi}

The meaning of the above standard DP is that a change of the received signal $\mathbf{Y}_m$ at any AP $m$ has a negligible impact on the estimated channel $\mathbf{\widehat{H}}$. Hence, one cannot infer much about the private data $\mathbf{Y}_m$ from the output $\mathbf{\widehat{H}}$ and the data $\mathbf{Y}_{-m}$, which however, implies that the estimated $\mathbf{\widehat{H}}_m$ of AP $m$ should not depend strongly on $\mathbf{Y}_m$. Obviously such channel estimator will not be of practical value. On the other hand, joint DP means that the estimate $\mathbf{\widehat{H}}_m$ for any particular AP $m$ can depend strongly on its data $\mathbf{Y}_m$; but the output of all other APs $\mathbf{\widehat{H}}_{-m}$ and $\mathbf{Y}_{-m}$ do not reveal much about $\mathbf{Y}_m$, which is a meaningful notion of privacy in the context of channel estimation and will be adopted hereafter. Here, $\epsilon$ and $\delta$ represent the worst-case privacy loss and smaller values of them imply stronger privacy guarantee.

Next we review two important properties of DP, which hold for both standard DP and joint DP.
\begin{lem}[Post-Processing\cite{dwork2014algorithmic}]
\label{lem_PP}
Let $\mathcal{A}: \mathcal{T}\rightarrow{\mathcal{S}
}$ and $\mathcal{B}: \mathcal{S}\rightarrow{\mathcal{Q}
}$ be randomized algorithms. Define an algorithm $\mathcal{A}^{'}: \mathcal{T}\rightarrow{\mathcal{Q}
}$ by $\mathcal{A}^{'}=\mathcal{B}\left(\mathcal{A}\right)$. If $\mathcal{A}$ satisfies $\left(\epsilon,\delta\right)$-(joint) DP, then $\mathcal{A}^{'}$ also satisfies $\left(\epsilon,\delta\right)$-(joint) DP.
\end{lem}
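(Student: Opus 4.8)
The plan is to reduce the randomized post-processing $\mathcal{B}$ to the deterministic case and then invoke the DP guarantee of $\mathcal{A}$ through a preimage argument; I would prove the standard DP claim in detail, and the joint DP claim then follows by an identical argument. First I would treat the case where $\mathcal{B}=f$ is a deterministic measurable map $f:\mathcal{S}\rightarrow\mathcal{Q}$. Fix any AP $m$, any two values $\mathbf{Y}_m,\mathbf{Y}_m'$, any $\mathbf{Y}_{-m}$, and any measurable $T\subseteq\mathcal{Q}$. Setting $S:=f^{-1}(T)\subseteq\mathcal{S}$, which is measurable because $f$ is, I would write $P[\mathcal{A}'([\mathbf{Y}_m,\mathbf{Y}_{-m}])\in T]=P[\mathcal{A}([\mathbf{Y}_m,\mathbf{Y}_{-m}])\in S]$, apply the $(\epsilon,\delta)$-DP inequality of $\mathcal{A}$ to the event $S$, and then rewrite the right-hand side back in terms of $\mathcal{A}'$ to obtain the desired bound for deterministic $\mathcal{B}$.

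Next I would extend to a genuinely randomized $\mathcal{B}$ by making its internal randomness explicit: $\mathcal{B}$ can be modeled as $\mathcal{B}(s)=f(s,\omega)$ for a measurable $f$ and an auxiliary seed $\omega$ drawn independently of the randomness of $\mathcal{A}$. Conditioning on $\omega$ renders $f(\cdot,\omega)$ deterministic, so the previous step yields, for each fixed $\omega$, the inequality $P[f(\mathcal{A}([\mathbf{Y}_m,\mathbf{Y}_{-m}]),\omega)\in T]\le e^{\epsilon}P[f(\mathcal{A}([\mathbf{Y}_m',\mathbf{Y}_{-m}]),\omega)\in T]+\delta$. Taking expectation over $\omega$ and using the tower property together with the independence of $\omega$ from $\mathcal{A}$ — so that the constants $e^{\epsilon}$ and $\delta$ pass unchanged through the expectation — gives $P[\mathcal{A}'([\mathbf{Y}_m,\mathbf{Y}_{-m}])\in T]\le e^{\epsilon}P[\mathcal{A}'([\mathbf{Y}_m',\mathbf{Y}_{-m}])\in T]+\delta$, which is precisely the claim for $\mathcal{A}'$.

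The main obstacle is not any sharp inequality but the measure-theoretic bookkeeping: ensuring that the preimage $f^{-1}(T)$ is an admissible event for the DP guarantee of $\mathcal{A}$, and justifying the interchange of the expectation over the post-processing seed $\omega$ with the per-$\omega$ DP inequality, which relies precisely on $\omega$ being independent of the randomness of $\mathcal{A}$. For the joint DP case, the same two steps apply verbatim with $\mathcal{A}_{-m}$ in place of $\mathcal{A}$ and $\mathbb{C}^{(M-1)N_a\times K}$ as the relevant output space, since the joint DP condition in Definition~1 has exactly the same multiplicative-plus-additive form as the standard one, and the preimage and mixture arguments make no use of which inequality is invoked.
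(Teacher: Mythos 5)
The paper does not prove this lemma at all: it is quoted as a known result with a citation to Dwork and Roth, so there is no in-paper argument to compare against. Your two-step argument (preimage reduction for deterministic $\mathcal{B}$, then averaging over an independent internal seed $\omega$ to handle randomized $\mathcal{B}$) is exactly the standard proof from that reference and is correct, including the observation that the joint-DP case follows verbatim with $\mathcal{A}_{-m}$ in place of $\mathcal{A}$ because the inequality has the same multiplicative-plus-additive form.
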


Hence any operation performed on the output of a (joint) differentially private algorithm, without accessing the raw data, remains (joint) differentially private with the same level of privacy.

\begin{lem}[$T$-fold composition\cite{Dwork10,dwork2014algorithmic}]
\label{lem_TFC}
We assume that there are $T$ independent randomized algorithms $\mathcal{A}_1,\cdots, \mathcal{A}_T$, where each algorithm $\mathcal{A}_t:\mathcal{D}\rightarrow{\mathcal{R}_t}$ is $\left(\epsilon,\delta\right)$-(joint) differentially private. For all $\epsilon, \delta, \delta'>0$, an algorithm $\mathcal{A}$ defined as $\mathcal{A}\left(\mathcal{D}\right)=\left(\mathcal{A}_1\left(\mathcal{D}\right),\cdots,\mathcal{A}_T\left(\mathcal{D}\right)\right)$ satisfies $\left(\epsilon',T\delta+\delta'\right)$-(joint) DP with
\begin{equation}
    \epsilon'=\epsilon\sqrt{2T\ln{\left(1/\delta'\right)}}+T\epsilon\left(e^{\epsilon}-1\right).
\end{equation}
In particular, given target privacy parameters $0<\epsilon'< 1$ and $\delta'>0$, $\mathcal{A}$ satisfies $\left(\epsilon',T\delta+\delta'\right)$-(joint) DP if each algorithm is $\left(\epsilon'/\sqrt{8T\ln{\left(1/\delta'\right)}},\delta\right)$-(joint) differentially private.
\end{lem}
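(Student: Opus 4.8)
The plan is to prove this as an instance of the \emph{advanced composition theorem}, whose engine is the \emph{privacy loss random variable} combined with a concentration inequality. Since joint DP is defined by the identical indistinguishability inequality, only applied to the map $\mathcal{A}_{-m}$ rather than $\mathcal{A}$, and since the $-m$ view of the composition coincides with the composition of the $-m$ views, $\mathcal{A}_{-m}=\left((\mathcal{A}_1)_{-m},\ldots,(\mathcal{A}_T)_{-m}\right)$, the argument is word-for-word the same in both cases; I therefore carry it out for standard DP and invoke it verbatim for joint DP. Fix a mechanism $\mathcal{A}_t$ and a neighboring input pair $\mathbf{Y},\mathbf{Y}'$ (differing in one AP's received signal, as in Definition~1); for an output $o$ define the privacy loss
\begin{equation}
    L_t(o)=\ln\frac{P\left[\mathcal{A}_t(\mathbf{Y})=o\right]}{P\left[\mathcal{A}_t(\mathbf{Y}')=o\right]},
\end{equation}
with $o$ distributed according to $\mathcal{A}_t(\mathbf{Y})$.

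First I would reduce the $\left(\epsilon,\delta\right)$ guarantee to an almost-pure one. A standard characterization of $\left(\epsilon,\delta\right)$-DP is that there is an event $E_t$ of probability at most $\delta$ such that, conditioned on its complement, $\mathcal{A}_t$ behaves like an $\epsilon$-DP mechanism, i.e. $|L_t|\le\epsilon$ holds off $E_t$. Taking a union bound over the $T$ independent mechanisms, all the bad events $E_1,\ldots,E_T$ are avoided except with probability at most $T\delta$; this is precisely where the $T\delta$ term in the final budget originates. On the good event I may treat each $L_t$ as almost surely bounded by $\epsilon$.

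Next I would control the sum $B=\sum_{t=1}^T L_t$ of the conditioned privacy losses. Two facts are needed: the almost-sure bound $|L_t|\le\epsilon$ just established, and the mean bound $\mathbb{E}[L_t]\le\epsilon(e^{\epsilon}-1)$, which follows from $\mathbb{E}[L_t]$ being a KL divergence together with the elementary inequality $p\ln(p/q)+(1-p)\ln\!\left((1-p)/(1-q)\right)\le\epsilon(e^{\epsilon}-1)$ valid whenever $e^{-\epsilon}\le p/q\le e^{\epsilon}$. Because the $T$ mechanisms are independent, $B$ is a sum of $T$ independent variables each supported on an interval of length $2\epsilon$ with $\mathbb{E}[B]\le T\epsilon(e^{\epsilon}-1)$, so Hoeffding's (equivalently Azuma's) inequality gives
\begin{equation}
    P\left[B\ge T\epsilon(e^{\epsilon}-1)+z\right]\le\exp\!\left(-\frac{z^2}{2T\epsilon^2}\right).
\end{equation}
Choosing $z=\epsilon\sqrt{2T\ln(1/\delta')}$ makes the right-hand side equal to $\delta'$ and the threshold equal to $\epsilon'=T\epsilon(e^{\epsilon}-1)+\epsilon\sqrt{2T\ln(1/\delta')}$. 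Thus, off the bad events, the total privacy loss exceeds $\epsilon'$ with probability at most $\delta'$; translating this tail bound back into the set inequality via $P[\mathcal{A}(\mathbf{Y})\in S]\le e^{\epsilon'}P[\mathcal{A}(\mathbf{Y}')\in S]+P[B>\epsilon']$ and folding in the $T\delta$ from the union bound yields exactly $(\epsilon',T\delta+\delta')$-(joint) DP.

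Finally, the ``in particular'' statement is pure algebra: substituting the per-mechanism level $\epsilon_0=\epsilon'/\sqrt{8T\ln(1/\delta')}$ into the formula, the deviation term becomes $\epsilon_0\sqrt{2T\ln(1/\delta')}=\epsilon'/2$, while for $0<\epsilon'<1$ the level $\epsilon_0$ is small enough that $e^{\epsilon_0}-1\le 2\epsilon_0$, so the mean term is at most $2T\epsilon_0^2=(\epsilon')^2/\!\left(4\ln(1/\delta')\right)\le\epsilon'/2$; summing the two halves gives the overall budget $\epsilon'$. I expect the main obstacle to be the $\delta$-handling step: the clean bounded-privacy-loss/Hoeffding argument is native to \emph{pure} $\epsilon$-DP, and making the reduction from approximate $\left(\epsilon,\delta\right)$-DP rigorous---exhibiting the failure events $E_t$ and verifying that conditioning on their complements genuinely restores an $\epsilon$-bounded privacy loss while leaking only a total of $T\delta$---is the delicate part that must be argued with care rather than waved through.
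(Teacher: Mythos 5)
The paper offers no proof of this lemma; it is imported verbatim from the cited references (it is the advanced composition theorem, Theorem 3.20 and Corollary 3.21 of Dwork--Roth), so there is no in-paper argument to compare against. Your reconstruction follows the standard proof from those sources and is essentially correct: the reduction of joint DP to standard DP via $\mathcal{A}_{-m}=((\mathcal{A}_1)_{-m},\ldots,(\mathcal{A}_T)_{-m})$ is valid for independent components, the mean bound $\mathbb{E}[L_t]\le\epsilon(e^{\epsilon}-1)$ via the KL-divergence inequality is the right ingredient, the Hoeffding calculation with $z=\epsilon\sqrt{2T\ln(1/\delta')}$ checks out, and the algebra in the ``in particular'' clause is correct (with the benign implicit assumption $\ln(1/\delta')\ge\epsilon'/2$, also present in the original corollary). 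The one step you rightly flag as delicate is the only place where your sketch, as written, is not literally true: an $(\epsilon,\delta)$-DP mechanism does not in general admit an event $E_t$ with $P[E_t]\le\delta$ off which the privacy loss is bounded by $\epsilon$; the correct statement (Lemma 3.17 of Dwork--Roth) is a coupling result producing auxiliary random variables within statistical distance $O(\delta)$ of the true outputs whose pairwise privacy loss is pointwise bounded, and the $T\delta$ term arises from a union bound over these couplings rather than over literal failure events. Since you identify exactly this as the part needing care, the proposal is a faithful and essentially complete account of the proof the paper delegates to its citations.
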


In the context of channel estimation, if $\mathbf{Y}$ is accessed by CPU $T$ times, each denoted by $\mathcal{A}_t\left(\mathbf{Y}\right), t=1,\cdots, T$, then the information released to CPU is $\mathcal{A}\left(\mathbf{Y}\right)=\left(\mathcal{A}_1\left(\mathbf{Y}\right),\cdots,\mathcal{A}_T\left(\mathbf{Y}\right)\right)$. To make $\mathcal{A}\left(\mathbf{Y}\right)$ joint differentially private,
we need to guarantee that each access $\mathcal{A}_t\left(\mathbf{Y}\right)$ is joint differentially private. In addition, the difference in the privacy level between $\mathcal{A}\left(\mathbf{Y}\right)$ and $\mathcal{A}_t\left(\mathbf{Y}\right)$ stated in this lemma will be useful in the design of the privacy-preserving channel estimator.

The following important lemma provides us a way to achieve joint DP by standard DP.
\begin{lem}[Billboard Lemma\cite{hsu2016private}]
Suppose $\mathcal{A}: \mathcal{D}=\left(\mathcal{D}_1,\cdots,\mathcal{D}_M\right)\rightarrow{\mathcal{R}
}$ is $\left(\epsilon,\delta\right)$-differentially private, where $\mathcal{D}_m$ denotes the data of AP $m$. If a randomized algorithm $\mathcal{B}$ has $M$ components with the $m$-th component $\mathcal{B}_m\left(\mathcal{D}_m, \mathcal{A}\left(\mathcal{D}\right)\right)$, where $\mathcal{B}_m: \mathcal{D}_m \times {\mathcal{R}
}\rightarrow{\mathcal{Q}_m},\forall m\in\mathcal{M}$, then $\mathcal{B}$ is $\left(\epsilon,\delta\right)$-joint differentially private.
\end{lem}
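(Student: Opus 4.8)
The plan is to reduce the joint-DP guarantee of $\mathcal{B}$ to the standard-DP guarantee of the billboard $\mathcal{A}$ via the Post-Processing Lemma (Lemma~\ref{lem_PP}). The starting observation is that joint DP only constrains the tuple $\mathcal{B}_{-m}=\left(\mathcal{B}_{m'}\right)_{m'\neq m}$ of outputs of all APs other than $m$, and that each such component $\mathcal{B}_{m'}\left(\mathcal{D}_{m'},\mathcal{A}\left(\mathcal{D}\right)\right)$ accesses the full dataset $\mathcal{D}$ through only two channels: its own private data $\mathcal{D}_{m'}$, and the common billboard signal $\mathcal{A}\left(\mathcal{D}\right)$. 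Since the neighboring relation underlying joint DP changes only $\mathcal{D}_m$ while holding $\mathcal{D}_{-m}$ fixed, each private input $\mathcal{D}_{m'}$ with $m'\neq m$ is unchanged, so the sole route by which the perturbed component $\mathcal{D}_m$ can influence $\mathcal{B}_{-m}$ is through the billboard.

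First I would fix an arbitrary AP $m$, arbitrary neighboring values $\mathcal{D}_m,\mathcal{D}_m'$, and an arbitrary fixed $\mathcal{D}_{-m}$. I would then define a single randomized post-processing map $g$ from the billboard range $\mathcal{R}$ to the product output space $\prod_{m'\neq m}\mathcal{Q}_{m'}$ by $g\left(a\right)=\left(\mathcal{B}_{m'}\left(\mathcal{D}_{m'},a\right)\right)_{m'\neq m}$. The critical point is that $g$ is built purely from the fixed data $\{\mathcal{D}_{m'}\}_{m'\neq m}$ and the billboard argument $a$; it never references $\mathcal{D}_m$, so it is literally the same map whether the input is $\left[\mathcal{D}_m,\mathcal{D}_{-m}\right]$ or $\left[\mathcal{D}_m',\mathcal{D}_{-m}\right]$. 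By construction one then has the identity $\mathcal{B}_{-m}\left(\mathcal{D}\right)=g\left(\mathcal{A}\left(\mathcal{D}\right)\right)$ for both neighboring inputs.

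With this identity in place, the result follows directly from Lemma~\ref{lem_PP}: since $\mathcal{A}$ is $\left(\epsilon,\delta\right)$-differentially private and $g$ is a post-processing step applied without any further access to the raw data, the composition $g\circ\mathcal{A}$ inherits the $\left(\epsilon,\delta\right)$ guarantee with respect to the single-component change $\mathcal{D}_m\to\mathcal{D}_m'$. Spelling this out, for any measurable $S\subseteq\prod_{m'\neq m}\mathcal{Q}_{m'}$ one obtains $P\left[\mathcal{B}_{-m}\left(\left[\mathcal{D}_m,\mathcal{D}_{-m}\right]\right)\in S\right]\le e^{\epsilon}P\left[\mathcal{B}_{-m}\left(\left[\mathcal{D}_m',\mathcal{D}_{-m}\right]\right)\in S\right]+\delta$, which is precisely the joint differential privacy condition; since $m$ and the neighboring pair were arbitrary, $\mathcal{B}$ is $\left(\epsilon,\delta\right)$-joint differentially private.

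The step demanding the most care—rather than the most computation—is the bookkeeping of randomness that makes $g$ a genuine post-processing operator in the sense of Lemma~\ref{lem_PP}. One must assume that the internal randomness of the components $\{\mathcal{B}_{m'}\}$ is independent of the randomness used by $\mathcal{A}$, so that conditioning on a realized billboard value $a$ renders the conditional law of $\mathcal{B}_{-m}$ a function of $a$ and the fixed $\mathcal{D}_{-m}$ alone, and in particular independent of whether $\mathcal{D}_m$ or $\mathcal{D}_m'$ produced that value. Once this independence is granted the decoupling is clean and no quantitative estimate is required; the whole argument is a structural reduction to post-processing.
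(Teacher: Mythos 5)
The paper states the Billboard Lemma without proof, importing it from the cited reference, so there is no in-paper argument to compare against; your reduction is exactly the standard proof of this lemma (and matches the one in the cited source). The argument is correct: fixing the neighboring pair so that $\mathcal{D}_{-m}$ is constant makes $g(a)=\left(\mathcal{B}_{m'}\left(\mathcal{D}_{m'},a\right)\right)_{m'\neq m}$ a single data-independent randomized post-processing of $\mathcal{A}(\mathcal{D})$, and Lemma~\ref{lem_PP} then yields precisely the joint-DP inequality; your explicit flagging of the independence between the internal randomness of the $\mathcal{B}_{m'}$ and that of $\mathcal{A}$ is the right caveat and is the only hypothesis the lemma statement leaves implicit.
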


Next we review the definition of ${\ell _2}$-sensitivity and a well-known approach to achieving standard DP.
\begin{defi}[${\ell _2}$-Sensitivity\cite{dwork2014algorithmic}]
\label{def_sen}
Let $f\left(\mathbf{Y}\right)$ be an arbitrary function on the received signal $\mathbf{Y}=\left[\mathbf{Y}_1^{\rm{T}},\cdots,\mathbf{Y}_M^{\rm{T}}\right]^{\rm{T}}$. Then its ${\ell _2}$-sensitivity is defined as the maximum difference in the function values when the received signals differ only at one AP, i.e., 
\begin{equation}
    \Delta_f\buildrel \Delta \over =\max_{1\leq m\leq M}\max_{\mathbf{Y}_m\neq\mathbf{Y}_m^{'}\atop\mathbf{Y}_i=\mathbf{Y}_i^{'},\forall i\neq m}\left\|f\left(\mathbf{Y}\right)-f\left(\mathbf{Y}^{'}\right)\right\|_{\mathcal{F}}.
\end{equation}
\end{defi}

\begin{lem}[Gaussian Mechanism\cite{dwork2014algorithmic}]
\label{lem_GM}
Assuming that the information released to CPU during channel estimation is
\begin{equation}
{\mathcal{A}}\left(\mathbf{Y}\right)=f\left(\mathbf{Y}\right)+\mathbf{G},
\end{equation}
where $f\left(\mathbf{Y}\right)$ has the ${\ell _2}$-sensitivity $\Delta_f$ and
\begin{equation}
    \mathbf{G}{(i,j)}\stackrel{\mbox{i.i.d.}} \sim{\mathcal{N}_{\rm{c}}\left(0,\frac{\Delta_f^2{2\ln(1.25/\delta)}}{\epsilon^2}\right)}.
\end{equation}
Then the released ${\mathcal{A}}\left(\mathbf{Y}\right)$ satisfies $\left(\epsilon,\delta\right)$-DP.
\end{lem}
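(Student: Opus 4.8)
The plan is to reduce the statement to the classical real-valued Gaussian mechanism of \cite{dwork2014algorithmic} and then replicate its proof, taking care of the complex-to-real bookkeeping. First I would identify the output space $\mathbb{C}^{MN_a\times K}$ with the real Euclidean space $\mathbb{R}^{2MN_aK}$ through the map that stacks the (vectorized) real and imaginary parts of a matrix. Under this identification the Frobenius norm coincides with the Euclidean $\ell_2$ norm, so the $\ell_2$-sensitivity $\Delta_f$ from Definition~\ref{def_sen} is unchanged, and the circularly symmetric complex Gaussian perturbation $\mathbf{G}$ becomes an isotropic real Gaussian whose per-coordinate variance is determined by the variance of $\mathbf{G}(i,j)$. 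Thus it suffices to establish the real Gaussian mechanism for a vector-valued $f$ at the corresponding real noise level.

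Next I would fix any two neighboring inputs $\mathbf{Y}$ and $\mathbf{Y}'$ (differing only in the block of one AP, as in Definition~\ref{def_sen}) and analyze the \emph{privacy loss} random variable $L(\mathbf{Z})=\ln\big(p(\mathbf{Z})/q(\mathbf{Z})\big)$, where $p$ and $q$ are the output densities of $\mathcal{A}(\mathbf{Y})$ and $\mathcal{A}(\mathbf{Y}')$. Writing $\mathbf{Z}=f(\mathbf{Y})+\mathbf{G}$ and $V=f(\mathbf{Y})-f(\mathbf{Y}')$, a direct substitution into the Gaussian densities shows that $L$ is an affine function of the Gaussian inner product $\Re\langle\mathbf{G},V\rangle$, and is therefore itself Gaussian, with mean and variance proportional to $\|V\|_{\mathcal{F}}^2\le\Delta_f^2$ and inversely proportional to the noise variance. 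By the rotational invariance of the isotropic Gaussian one may assume $V$ points along a single coordinate, which collapses the high-dimensional computation to a one-dimensional one and makes the worst case $\|V\|_{\mathcal{F}}=\Delta_f$ transparent.

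It then remains to show that, with the prescribed variance, the privacy loss rarely exceeds $\epsilon$: concretely, $P_{\mathbf{Z}\sim p}[\,L(\mathbf{Z})>\epsilon\,]\le\delta$. I would obtain this from a sharp Gaussian tail estimate applied to the Gaussian law of $L$ derived above. Given this tail bound, the conclusion follows from the usual ``bad-event'' decomposition: letting $B=\{\mathbf{Z}:L(\mathbf{Z})>\epsilon\}$, for any measurable set $S$ one splits $\int_S p=\int_{S\cap B}p+\int_{S\setminus B}p\le P[B]+e^{\epsilon}\int_S q\le\delta+e^{\epsilon}P[\mathcal{A}(\mathbf{Y}')\in S]$, which is exactly the defining inequality of $(\epsilon,\delta)$-DP.

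I expect the delicate part to be the tail estimate that pins down the precise constant $2\ln(1.25/\delta)$. Unlike the mean/variance computation, this step is not a one-line Chernoff bound: it requires the sharper Gaussian tail inequality together with the small-$\epsilon$ regime (i.e.\ $\epsilon\le 1$) implicitly assumed in \cite{dwork2014algorithmic}, and it must be combined carefully with the factor relating the variance of the complex noise $\mathbf{G}(i,j)$ to the variance of its real and imaginary components. Getting this bookkeeping right is what ensures that the stated complex noise variance $\Delta_f^2\,2\ln(1.25/\delta)/\epsilon^2$ for each entry yields exactly $(\epsilon,\delta)$-DP rather than a bound off by a constant factor.
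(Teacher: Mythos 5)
The paper does not actually prove this lemma: it is imported verbatim from \cite{dwork2014algorithmic} as a known tool, so there is no in-paper argument to compare against. Your outline is a faithful reconstruction of the standard proof of the Gaussian mechanism (Theorem~A.1 of \cite{dwork2014algorithmic}): identify the complex output space with a real Euclidean space of twice the dimension, note that the privacy-loss random variable is an affine function of a Gaussian inner product and hence itself Gaussian with mean and variance controlled by $\|f(\mathbf{Y})-f(\mathbf{Y}')\|_{\mathcal{F}}\le\Delta_f$, reduce to one dimension by rotational invariance, apply the sharp Gaussian tail estimate (which is where the constant $1.25$ and the implicit restriction $\epsilon<1$ enter), and finish with the bad-event decomposition $\int_S p\le P[B]+e^{\epsilon}\int_S q$. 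All of these steps are correct and in the right order.

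The one substantive point you raise but do not resolve is precisely where the lemma as stated is fragile: the complex-to-real variance bookkeeping. Under the usual convention that $\mathcal{N}_{\rm c}(0,\sigma^2)$ means $\mathbb{E}\left|\mathbf{G}(i,j)\right|^2=\sigma^2$, each real coordinate of the stacked representation carries variance $\sigma^2/2$, while the sensitivity is unchanged by the identification. The real Gaussian mechanism requires per-real-coordinate variance at least $2\ln(1.25/\delta)\Delta_f^2/\epsilon^2$, which translates into a per-complex-entry variance of $4\ln(1.25/\delta)\Delta_f^2/\epsilon^2$ --- twice what the lemma states. Equivalently, a direct computation in the complex domain gives a privacy loss distributed as $\mathcal{N}\bigl(\|V\|_{\mathcal{F}}^2/\sigma^2,\,2\|V\|_{\mathcal{F}}^2/\sigma^2\bigr)$, matching a real mechanism with per-coordinate variance $\sigma^2/2$. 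To close your proof you must either adopt the convention that $\sigma^2$ is the variance of each real component of $\mathbf{G}(i,j)$ (in which case the stated constant is correct but the total second moment of each entry is $2\sigma^2$), or accept a factor-of-two correction in the calibrated variance. This constant propagates into $\mu$ in (\ref{equ_mu}) and $\nu$ in (\ref{equ_nu}) and hence into all downstream error bounds (only by a constant factor, so no scaling law changes), so it is worth pinning down explicitly rather than leaving it as delicate bookkeeping.
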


The above lemma helps us to calibrate the Gaussian perturbation noise to achieve $\left(\epsilon,\delta\right)$-DP. It can be seen that larger perturbation noise is required to achieve stronger privacy level, i.e., smaller $\epsilon$ and/or $\delta$, which is intuitive because larger noise variance increases the uncertainty about the released information and hence improves privacy.

\section{Privacy Preserving Channel Estimation}
In this section, we first show that the key component of the privacy-preserving channel estimator is privacy-preserving matrix completion.  We then outline two such matrix completion algorithms. 

\subsection{Channel Estimation Based on Matrix Completion}
Recall that in (\ref{equ_R}), $\mathbf{X}=\mathbf{H}\mathbf{S}\in\mathbb{C}^{{M{N_a}}\times{\tau_{\rm{c}}}}$ is low-rank, i.e., ${\rm{rank}}\left(\mathbf{X}\right)\leq K$, and $\mathbf{Y}$ is an incomplete observation of $\mathbf{X}$ corrupted by channel noise $\mathbf{N}$.
We first design a privacy-preserving algorithm $\mathcal{A}$ to solve a noisy matrix completion problem, which takes $\mathbf{Y}$ as input and outputs a low-rank matrix  $\widehat{\mathbf{X}}=\left[\left(\widehat{\mathbf{X}}_1\right)^{\rm{T}},\cdots,\left(\widehat{\mathbf{X}}_M\right)^{\rm{T}}\right]^{\rm{T}}\in\mathbb{C}^{{M{N_a}}\times{\tau_{\rm{c}}}}$ as an estimate of $\mathbf{X}=\left[\mathbf{X}_1^{\rm{T}},\cdots,\mathbf{X}_M^{\rm{T}}\right]^{\rm{T}}$, where $\mathbf{X}_m=\mathbf{H}_m{\mathbf{S}}$. Then, the estimation of channel $\mathbf{H}_m$ can be performed locally at AP $m$ based on $\widehat{\mathbf{X}}_m$.  

To satisfy $\left(\epsilon,\delta\right)$-joint DP, the matrix completion algorithm $\mathcal{A}$ consists of a global component $\mathcal{A}^{\rm{G}}$ at the CPU and $M$ local components $\mathcal{A}_m^{\rm{L}}=\left(\mathcal{A}_m^{{\rm{L}},1},\mathcal{A}_m^{{\rm{L}},2}\right),m=1,\cdots,M$ at $M$ APs. The local component at AP $m$ first performs a privacy-preserving computation $\mathcal{A}_m^{\rm{L},1}$ on private $\mathbf{Y}_m$ to get $\mathcal{A}_m^{\rm{L},1}\left(\mathbf{Y}_m\right)$, and then transmits it to the CPU through backhaul links. The global component $\mathcal{A}^{\rm{G}}$ aggregates the received information $\mathbf{Z}=\sum\limits_{m=1}^M\mathcal{A}_m^{\rm{L},1}\left(\mathbf{Y}_m\right)$, and computes $\mathcal{A}^{\rm{G}}\left(\mathbf{Z}\right)$, which is then broadcast to all APs through backhaul links. Based on the public $\mathcal{A}^{\rm{G}}\left(\mathbf{Z}\right)$ and private $\mathbf{Y}_m$, the local component at AP $m$ then performs a computation $\mathcal{A}_m^{\rm{L},2}$ to get a complete matrix $\widehat{\mathbf{X}}_m=\mathcal{A}_m^{\rm{L},2}\left(\mathcal{A}^{\rm{G}}\left(\mathbf{Z}\right),\mathbf{Y}_m\right)$. Such a global-local computation model for privacy-preserving noisy matrix completion is depicted in Fig. \ref{fig:compModel}.
\begin{figure}
    \centering
    \includegraphics[width=1\textwidth]{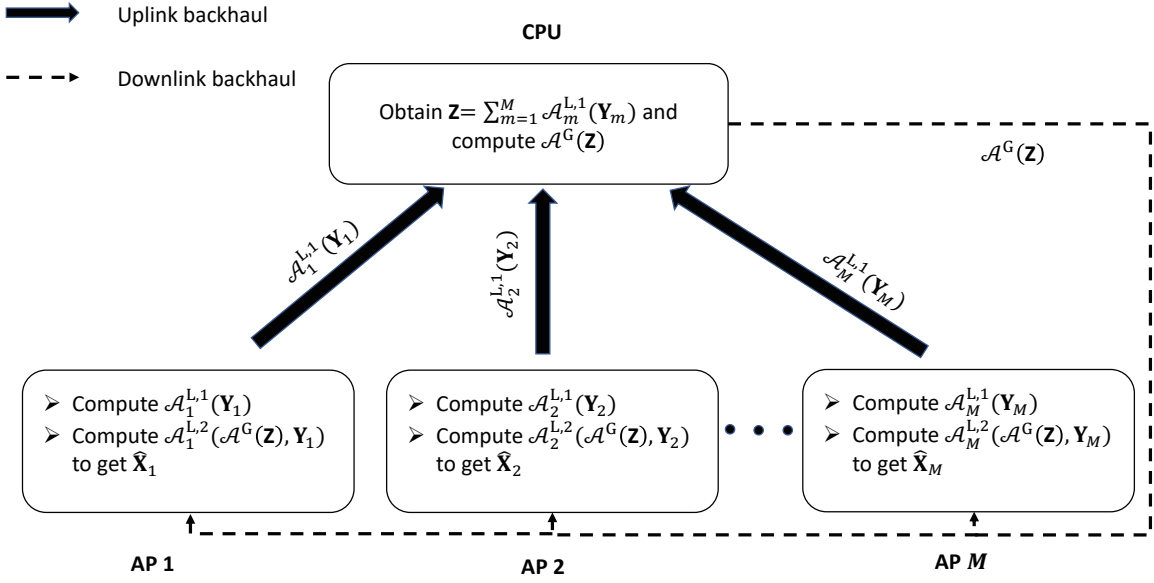}
    \caption{The global-local computation model of the proposed privacy-preserving noisy matrix completion algorithm.}
    \label{fig:compModel}
\end{figure}

With $\widehat{\mathbf{X}}_m$, each AP $m$ can proceed to estimate its channel $\mathbf{H}_m$ as follows. Recall that $\widehat{\mathbf{X}}_m\equiv \left[\widehat{\mathbf{X}}_{m,{\rm{p}}},\widehat{\mathbf{X}}_{m,{\rm{d}}}\right]$, where $\widehat{\mathbf{X}}_{m,{\rm{p}}}\in\mathbb{C}^{{N_{\rm{a}}}\times \tau_{\rm{p}}}$ and
$\widehat{\mathbf{X}}_{m,{\rm{d}}}\in\mathbb{C}^{{N_{\rm{a}}}\times \tau_{\rm{d}}}$ denote the estimates of
$\mathbf{X}_{m,{\rm{p}}}=\mathbf{H}_m\mathbf{P}$ and $\mathbf{X}_{m,{\rm{d}}}=\mathbf{H}_m\mathbf{D}$ respectively. Then the estimate of $\mathbf{H}_m$ is given by
\begin{equation}
\label{eq_ce}
    \widehat{\mathbf{H}}_m=\widehat{\mathbf{X}}_{m,{\rm{p}}}{\mathbf{P}}^{\dagger},\forall m\in\mathcal{M},
\end{equation}
where ${\mathbf{P}}^{\dagger}$ denotes the pseudo-inverse of $\mathbf{P}$ which is pre-stored in each AP. Finally, for data detection, the optimal scheme is to compute $\widehat{\mathbf{D}}=\mathbf{\widehat{H}}^{\dagger}{\widehat{\mathbf{X}}_{\rm{d}}}$ at the CPU. However, since the privacy constraint prevents the CPU from having access to $\{\widehat{\mathbf{H}}_m\}$, we let each AP compute the local statistic
\begin{equation}
\label{eq_dt}
    \widehat{\mathbf{D}}_m={\widehat{\mathbf{H}}_m}^{\dagger }\widehat{\mathbf{X}}_{m,{\rm{d}}},\forall m\in\mathcal{M}.
\end{equation}
Then $\left\{\widehat{\mathbf{D}}_m\right\}$ are sent to the CPU which performs data detection based on the combined statistic
\begin{equation}
\label{equ_dtc}
    \widehat{\mathbf{D}}=\frac{1}{M}\sum\limits_{m=1}^M    \widehat{\mathbf{D}}_m.
\end{equation}

Next we provide two privacy-preserving noisy matrix completion algorithms based on the FW algorithm and SVD algorithm, respectively, both of which are in the form of the global-local computation model.

\subsection{FW-based Privacy-preserving Matrix Completion}
Recall that in the absence of noise, we have $\mathbf{Y} = \left(\mathbf{X}\right)_\Omega$ and ${\rm{rank}}\left(\mathbf{X}\right)\leq K$. 
However, the rank constraint is nonconvex.
A popular approach to noisy matrix completion is based on the following least-squares formulation with the nuclear norm constraint\cite{freund2017extended,jain2017differentially,zheng2018distributed,wai2017decentralized}
\begin{equation}
\label{equ_Pnorm}
\widehat{\mathbf{X}}=\arg\min_{\left\|\mathbf{X}\right\|_{\rm{nuc}}\le K}  \left\|\left(\mathbf{X}\right)_{\Omega}-\mathbf{Y}\right\|_{\mathcal{F}}^2,
\end{equation}
which is a convex problem.
The Frank-Wolfe (FW) algorithm is an iterative procedure to solve (\ref{equ_Pnorm}), given by
\begin{equation}
\label{equ_Yupd}
    {\mathbf{X}}^{(n)}=\left(1-\eta^{(n)}\right)\mathbf{X}^{(n-1)}-\frac{\eta^{(n)}{K}}{\lambda^{(n)}}{\mathbf{J}^{(n-1)}}\mathbf{v}^{(n)}\left(\mathbf{v}^{(n)}\right)^{\rm{H}},
\end{equation}
where $\eta^{(n)}$ is the step size at the $n$-th iteration; $\mathbf{X}^{(0)}=\mathbf{0}_{M{N_{\rm{a}}}\times {\tau_{\rm{c}}}}$; $\mathbf{J}^{(n-1)}=\left(\mathbf{X}^{(n-1)}\right)_{\Omega}-\mathbf{Y}$. $\lambda^{(n)}$ and $\mathbf{v}^{(n)}\in\mathbb{C}^{{\tau_{\rm{c}}}\times 1}$ are respectively the largest singular value and right-singular vector of ${\mathbf{J}^{(n-1)}}$. Hence each FW update is in terms of a rank-one matrix with nuclear norm at most $K$. After $T$ iterations, the rank of ${\mathbf{X}}^{(T)}$ is at most $T$. In addition, it is known that ${\mathbf{X}}^{(T)}$ can approach the optimal solution to (\ref{equ_Pnorm}) when $T$ is large\cite{jain2017differentially}.

Recall that $M{N_a}>\tau_{\rm{c}}$, thus $\left(\lambda^{(n)}\right)^2$ and $\mathbf{v}^{(n)}$ are the largest eigenvalue and eigenvector of $\left({\mathbf{J}^{(n-1)}}\right)^{\rm{H}}{\mathbf{J}^{(n-1)}}=\sum\limits_{m=1}^M\left({\mathbf{J}_m^{(n-1)}}\right)^{\rm{H}}{\mathbf{J}_m^{(n-1)}}$, respectively, where
\begin{equation}
\label{equ_Jm}
    \mathbf{J}_m^{(n-1)}=\left(\mathbf{X}_m^{(n-1)}\right)_{\Omega_m}-\mathbf{Y}_m,
\end{equation}
which can be computed at AP $m$. Hence, (\ref{equ_Yupd}) can be rewritten into a global-local structure as
\begin{equation}
\label{equ_Ymupd}
    {\mathbf{X}}_{m}^{(n)}=\left(1-\eta^{(n)}\right)\mathbf{X}_m^{(n-1)}-\frac{\eta^{(n)}{K}}{\lambda^{(n)}}{\mathbf{J}_m^{(n-1)}}\mathbf{v}^{(n)}\left(\mathbf{v}^{(n)}\right)^{\rm{H}}, \forall m\in\mathcal{M}.
\end{equation}
At the $(n-1)$th iteration, each AP $m$ computes
\begin{equation}
\label{equ_JJm}
    \mathbb{J}_m^{(n-1)}={\left(\mathbf{J}_m^{(n-1)}\right)^{\rm{H}}\mathbf{J}_m^{(n-1)}},
\end{equation}
and sends it to the CPU. Then the CPU computes the largest eigenvalue $\lambda^{(n)}$ and corresponding eigenvector $\mathbf{v}^{(n)}$ of  $\mathbb{W}^{(n-1)}=\sum\limits_{m=1}^M{\mathbb{J}_m^{(n-1)}}$. At the $n$-th iteration, $\lambda^{(n)}$ and $\mathbf{v}^{(n)}$ are broadcast to all APs and then each AP $m$ computes ${\mathbf{X}}_{m}^{(n)}$ according to (\ref{equ_Ymupd}). However, because ${\mathbb{J}_m^{(n-1)}}$ contains the private data $\mathbf{Y}_m$, it should not be sent directly from AP $m$ to the CPU. 

\begin{algorithm}
\caption{FW-based Privacy-preserving Matrix Completion Algorithm}
\KwIn{Privacy parameters $\left(\epsilon,\delta\right)$, number of users $K$, sampled matrix $\mathbf{Y}_m$ in each AP $m$, bound $L$ on $\left\|\mathbf{Y}_m\right\|_{\mathcal{F}}$, number of APs $M$, number of time slots $\tau_{\rm{c}}$, number of iterations $T$}

\KwOut{$\widehat{\mathbf{X}}_m,\forall m\in\mathcal{M}$}

Set
$\mathbf{X}^{(0)}=\mathbf{0}_{M{N_{\rm{a}}}\times {\tau_{\rm{c}}}}$
, $\eta^{(1)}=1$, $\eta^{(n)}=\frac{1}{T}, n=2,\cdots,T$ and $\mu$ given by (\ref{equ_mu})

\For{$n=1:T$}
{
     
    
    $\mathcal{A}_m^{{\rm{L}},1}$: Each AP $m$ computes $\mathbb{J}_m^{(n-1)}$ according to (\ref{equ_JJm}) and (\ref{equ_Jm}) and sends ${\mathbb{\widehat{J}}_m^{(n-1)}}={\mathbb{J}_m^{(n-1)}}+\mathbf{G}_m^{(n-1)}$ to CPU

    $\mathcal{A}^{\rm{G}}$:  CPU receives ${\mathbb{\widehat{J}}_m^{(n-1)}}$ from each AP $m$ and computes $\mathbb{\widehat{W}}^{(n-1)}=\sum\limits_{m=1}^M{\mathbb{\widehat{J}}_m^{(n-1)}}$, then 
    \begin{itemize}
        \item computes the largest eigenvalue $\left(\widehat{\lambda}^{(n)}\right)^2$ and eigenvector $\mathbf{\widehat{v}}^{(n)}$ of $\mathbb{\widehat{W}}^{(n-1)}$\\
        \item computes $\widetilde{\lambda}^{(n)}$ according to (\ref{equ_wlu})
    \end{itemize}
    
    CPU sends $\left(\mathbf{\widehat{v}}^{(n)},\widetilde{\lambda}^{(n)}\right)$ to all APs
    
    $\mathcal{A}_m^{{\rm{L}},2}$: Each AP $m$ computes ${\mathbf{X}}_{m}^{(n)}$ according to (\ref{equ_XmFWPP})
     
}
\label{algo}
\end{algorithm}

To preserve the privacy, we add Gaussian noise with calibrated variance to perturb ${\mathbb{J}_m^{(n-1)}}$. Each AP $m$ sends the perturbed matrix ${\mathbb{\widehat{J}}_m^{(n-1)}}={\mathbb{J}_m^{(n-1)}}+\mathbf{G}_m^{(n-1)}$ to the CPU, where $\mathbf{G}_m^{(n-1)}$ is a $\tau_{\rm{c}}\times \tau_{\rm{c}}$ Hermitian noise matrix, whose upper triangular and diagonal elements are respectively i.i.d.  $\mathcal{N}_{\rm{c}}\left(0,\mu^2\right)$ and $\mathcal{N}\left(0,\mu^2\right)$ samples, and the lower triangular elements are complex conjugates of the upper triangular counterparts. To calibrate the perturbation noise variance $\mu$, we need an upper bound on  $\left\|\mathbf{Y}_m\right\|_{\mathcal{F}},\forall m\in\mathcal{M}$, which can be obtained as follows. Note that $\left\|\mathbf{Y}_m\right\|_{\mathcal{F}}\le \left\|\mathbf{R}_m\right\|_{\mathcal{F}}\le\left\|\mathbf{H}_m{\mathbf{S}}\right\|_{\mathcal{F}}+\left\|\mathbf{N}_m\right\|_{\mathcal{F}}\le \left\|\mathbf{H}_m\right\|_{\mathcal{F}}\left\|{\mathbf{S}}\right\|_{\mathcal{F}}+\left\|\mathbf{N}_m\right\|_{\mathcal{F}}$.
In the massive MIMO regime, due to channel hardening, we have 
\begin{equation}
\label{equ_Hmap}
    {\left\|\mathbf{H}_m\right\|_{\mathcal{F}}^2}\rightarrow{\mathbb{E}\left\{\left\|\mathbf{H}_m\right\|_{\mathcal{F}}^2\right\}}={N_a}\sum\nolimits_{k=1}^K{\beta_{k,m}}.
\end{equation}
Moreover, since each data symbol in $\mathbf{S}$ has unit power and each noise sample in $\mathbf{N}_m$ has variance 
$\sigma^2$, by the law of large numbers, we have
\begin{subequations}
\begin{align}
    {\left\|{\mathbf{S}}\right\|_{\mathcal{F}}^2}&\rightarrow{K{\tau_{\rm{c}}}},\label{equ_Sap}\\
    {\left\|{\mathbf{N}_m}\right\|_{\mathcal{F}}^2}&\rightarrow{{{N_a}{\tau_{\rm{c}}}}\sigma^2}.
\end{align}
\end{subequations}
Hence, when $K{\tau_{\rm{c}}}$ and ${N_a}{\tau_{\rm{c}}}$ are sufficiently large, we can bound $\left\|\mathbf{Y}_m\right\|_{\mathcal{F}},\forall m\in\mathcal{M}$ by
\begin{equation}
\label{equ_L}
    L=\max_{m\in\mathcal{M}}\sqrt{K{\tau_{\rm{c}}}{N_a}\sum\nolimits_{k=1}^K{\beta_{k,m}}}+ \sqrt{{N_a}{\tau_{\rm{c}}}\sigma^2}.
\end{equation}
Note that each AP $m$ sends a perturbed matrix which contains the private signal $\mathbf{Y}_m$ to the CPU, for a total of $T$ iterations. Hence, the information released to the CPU can be regarded as a $T$-fold composition. Then by Lemma \ref{lem_TFC} and \ref{lem_GM}, in order to achieve $\left(\epsilon,\delta\right)$-joint DP, the perturbation noise variance is calibrated as
\begin{equation}
\label{equ_mu}
    \mu= {16L^2\sqrt{\frac{T}{M}\ln{\left(\frac{2.5T}{\delta}\right)}\ln{\left(\frac{2}{\delta}\right)}}/{\epsilon}}.
\end{equation}

Next the CPU computes the largest eigenvalue $\left(\widehat\lambda^{(n)}\right)^2$ and eigenvector $\mathbf{\widehat{v}}^{(n)}$  of  $\mathbb{\widehat{W}}^{(n-1)}=\sum\limits_{m=1}^M{\mathbb{\widehat{J}}_m^{(n-1)}}$. To control the error introduced by the perturbation noise, $\widehat\lambda^{(n)}$ is replaced by\cite{jain2017differentially}
\begin{equation}
\label{equ_wlu}
    \widetilde{\lambda}^{(n)}=\widehat\lambda^{(n)}+ \sqrt{\mu} \left({M\tau_{\rm{c}}}\right)^{1/4}.
\end{equation} 
Finally, the privacy-preserving implementation of (\ref{equ_Ymupd}) can be written as
\begin{equation}
\label{equ_XmFWPP}
    {\mathbf{X}}_{m}^{(n)}=\Xi_{L,\Omega_m}\left(\left(1-\eta^{(n)}\right)\mathbf{X}_m^{(n-1)}-\frac{ \eta^{(n)}{K}}{\widetilde{\lambda}^{(n)}}{\mathbf{J}_m^{(n-1)}\mathbf{\widehat{v}}^{(n)}}\left(\mathbf{\widehat{v}}^{(n)}\right)^{\rm{H}}\right),\forall m\in\mathcal{M},
\end{equation}
where the operator $\Xi_{L,\Omega_m}\left({\mathbf{M}}\right)=\min\left\{\frac{L}{\left\|\left({\mathbf{M}}\right)_{\Omega_m}\right\|_{\mathcal{F}}},1\right\}{{\mathbf{M}}}$ ensures $\left\|\left(\mathbf{X}_m^{(n)}\right)_{\Omega_m}\right\|_{\mathcal{F}}\le L$.

The proposed FW-based privacy-preserving matrix completion  algorithm is summarized in Algorithm 1.


\subsection{SVD-based Privacy-preserving Matrix Completion Algorithm}
Here we consider another approach to low-rank matrix completion based on singular value decomposition (SVD) that yields a solution with bounded error \cite{keshavan2010matrix}. It first trims $\mathbf{Y}$ to obtain $\widetilde{\mathbf{Y}}$ by setting to zero all rows with more than ${2{N_r}{\tau_{\rm{c}}}}/{N_a}$ non-zero entries, where recall that $N_r$ is the number of RF chains at each AP, i.e., the number of non-zero elements in each column of $\mathbf{Y}$. Then it performs SVD on $\widetilde{\mathbf{Y}}$ and denotes $\mathbf{{V}}_K\in\mathbb{C}^{\tau_{\rm{c}}\times K}$ as the matrix consisting of the right singular vectors corresponding to the $K$ largest singular values. 
Finally, the completed matrix is given by 
\begin{equation}
\label{equ_Xsvd}
    \widehat{\mathbf{X}}=\frac{N_a}{N_r}\widetilde{\mathbf{Y}}\mathbf{V}_K\mathbf{V}_K^{\rm{H}}.
\end{equation}
It is clear that ${\rm{rank}}\left(\widehat{\mathbf{X}}\right)=K$.

Note that $\mathbf{V}_K$ can also be obtained as the eigenvectors corresponding to the $K$ largest eigenvalues of $\widetilde{\mathbf{Y}}^{\rm{H}}\widetilde{\mathbf{Y}}=\sum\limits_{m=1}^M\widetilde{\mathbf{Y}}_m^{\rm{H}}\widetilde{\mathbf{Y}}_m$, where $\widetilde{\mathbf{Y}}_m^{\rm{H}}\widetilde{\mathbf{Y}}_m$ is computed at AP $m$ and then sent to the CPU. However, because ${\mathbb{J}}_{m}=\widetilde{\mathbf{Y}}_m^{\rm{H}}\widetilde{\mathbf{Y}}_m$ contains the private data $\mathbf{Y}_m$, it should not be sent directly from AP $m$ to the CPU. Similarly, we also add Gaussian noise to perturb it. Hence, the released data from each AP $m$ to the CPU is ${\mathbb{\widehat{J}}_m}={\mathbb{J}_m}+\mathbf{G}_m$, where
$\mathbf{G}_m$ is a $\tau_{\rm{c}}\times \tau_{\rm{c}}$ Hermitian noise matrix, whose upper triangular and diagonal elements are respectively i.i.d.  $\mathcal{N}_{\rm{c}}\left(0,\nu^2\right)$ and $\mathcal{N}\left(0,\nu^2\right)$ samples, and the lower triangular elements are complex conjugates of the upper triangular counterparts. Hence, the variance of total perturbation noise at the CPU is $M\nu^2$. By Lemma \ref{lem_GM}, in order to achieve $\left(\epsilon,\delta\right)$-joint DP, $\nu$ is calibrated as
\begin{equation}
\label{equ_nu}
    \nu= {L^2\sqrt{\frac{2}{M}\ln{\left(\frac{1.25}{\delta}\right)}}/{\epsilon}}.
\end{equation}
Then the CPU computes the $K$ largest eigenvectors of $\mathbf{\widehat{V}}_K$ of the matrix $\mathbb{\widehat{W}}=\sum\limits_{m=1}^M{\mathbb{\widehat{J}}_m}$ and broadcasts it to all APs. Hence, (\ref{equ_Xsvd}) can be implemented in a privacy-preserving form as
\begin{equation}
\label{equ_XmSVD}
    \widehat{\mathbf{X}}_m=\frac{{N_a}}{N_r}\widetilde{\mathbf{Y}}_m{\mathbf{\widehat{V}}_K}{\mathbf{\widehat{V}}_K^{\rm{H}}}, \quad\forall m\in\mathcal{M}.
\end{equation}

The proposed SVD-based privacy-preserving matrix completion algorithm is summarized in Algorithm 2.

\begin{algorithm}[htb]
\caption{SVD-based Privacy-preserving Matrix Completion Algorithm}
\KwIn{Privacy parameters $\left(\epsilon,\delta\right)$, number of users $K$, sampled matrix $\mathbf{Y}_m$ in each AP $m$, bound $L$ on $\left\|\mathbf{Y}_m\right\|_{\mathcal{F}}$, number of APs $M$, number of time slots $\tau_{\rm{c}}$}

\KwOut{$\widehat{\mathbf{X}}_m,\forall m\in\mathcal{M}$}

Set $\nu$ given by (\ref{equ_nu}) 

$\mathcal{A}_m^{{\rm{L}},1}$: AP $m$ trims $\mathbf{Y}_m$ to obtain $\widetilde{\mathbf{Y}}_m$, then computes ${\mathbb{J}_m}=\widetilde{\mathbf{Y}}_m^{\rm{H}}\widetilde{\mathbf{Y}}_m$ and sends ${\mathbb{\widehat{J}}_m}={\mathbb{J}_m}+\mathbf{G}_m$ to the CPU

$\mathcal{A}^{{\rm{G}}}$:  CPU receives ${\mathbb{\widehat{J}}_m}$ from each AP $m$ and computes the $K$ largest eigenvectors $\mathbf{\widehat{V}}_K$ of $\mathbb{\widehat{W}}=\sum\limits_{m=1}^M{\mathbb{\widehat{J}}_m}$. Then $\mathbf{\widehat{V}}_K$ is broadcast to all APs\

$\mathcal{A}_m^{{\rm{L}},2}$: AP $m$ computes
$\widehat{\mathbf{X}}_m$ according to (\ref{equ_XmSVD})\
\label{algo}
\end{algorithm}

\subsection{Computational Complexity and Communication Overhead}
For both Algorithm 1 and Algorithm 2, the computations $\mathcal{A}_m^{{\rm{L}},1}$ and $\mathcal{A}_m^{{\rm{L}},2}$ at each AP $m$ involve only additions and multiplications. The computation $\mathcal{A}^{{\rm{G}}}$ at the CPU in Algorithm 1 involves computing the largest eigen component of an $\tau_{\rm{c}}\times \tau_{\rm{c}}$ matrix in each iteration, for a total of $T$ iterations; whereas in Algorithm 2, $\mathcal{A}^{{\rm{G}}}$ involves computing the $K$ largest eigen components of an $\tau_{\rm{c}}\times \tau_{\rm{c}}$ matrix only once. 

We next compare the communication overheads of the two algorithms. For Algorithm 1, in each iteration, each AP sends the ${\tau_{\rm{c}}\times \tau_{\rm{c}}}$ matrix ${\mathbb{\widehat{J}}_m^{(n)}}$ to the CPU; and the CPU broadcasts the scalar $\widetilde\lambda^{(n)}$ and the ${\tau_{\rm{c}}\times 1}$ vector $\mathbf{\widehat{v}}^{(n)}$ to all APs, for a total of $T$ iterations. For Algorithm 2, each AP sends the ${\tau_{\rm{c}}\times \tau_{\rm{c}}}$ matrix ${\mathbb{\widehat{J}}_m}$ to the CPU only once; and the CPU broadcasts the ${\tau_{\rm{c}}\times K}$ matrix $\mathbf{\widehat{V}}_K$ to all APs only once. Hence the ratio of the broadcast overhead from the CPU to all APs between Algorithms 1 and 2 is $T:K$; and the ratio of the unicast overhead from each AP to the CPU is $T:1$.

\section{Privacy and Estimation Error Tradeoff Analysis}
In this section, we first show that both Algorithms 1 and 2 are $\left(\epsilon,\delta\right)$-joint differentially private. We then provide their channel estimation error bounds in terms of the privacy parameters.
\subsection{Privacy Analysis}
\begin{thm}
Algorithm 1 is $\left(\epsilon,\delta\right)$-joint differentially private. 
\end{thm}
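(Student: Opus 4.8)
The plan is to establish $(\epsilon,\delta)$-joint DP through the Billboard Lemma. I would take the ``billboard'' to be the broadcast transcript $\{(\mathbf{\widehat{v}}^{(n)},\widetilde{\lambda}^{(n)})\}_{n=1}^{T}$ generated by $\mathcal{A}^{\rm{G}}$, prove it satisfies $(\epsilon,\delta)$-\emph{standard} DP, and note that each AP's output $\widehat{\mathbf{X}}_m$ (and therefore $\widehat{\mathbf{H}}_m$) is produced by $\mathcal{A}_m^{\rm{L},2}$ using only the private $\mathbf{Y}_m$ and this public transcript. The Billboard Lemma then converts standard DP of the transcript into $(\epsilon,\delta)$-joint DP of the collection $\{\widehat{\mathbf{H}}_m\}$, which is the claim.

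First I would treat a single iteration $n$. The only data-dependent object the CPU forms is the aggregate $\mathbb{W}^{(n-1)}=\sum_{m}\mathbb{J}_m^{(n-1)}$, released as $\widehat{\mathbb{W}}^{(n-1)}=\mathbb{W}^{(n-1)}+\sum_{m}\mathbf{G}_m^{(n-1)}$, and the broadcast eigen-pair $(\mathbf{\widehat{v}}^{(n)},\widetilde{\lambda}^{(n)})$ is a deterministic post-processing of $\widehat{\mathbb{W}}^{(n-1)}$ (Lemma \ref{lem_PP}). To invoke the Gaussian Mechanism (Lemma \ref{lem_GM}) I need the $\ell_2$-sensitivity of $\mathbb{W}^{(n-1)}$ under a single-AP change $\mathbf{Y}_m\rightarrow\mathbf{Y}_m'$, holding the previous transcript fixed. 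Here the clipping operator $\Xi_{L,\Omega_m}$ in (\ref{equ_XmFWPP}) and the input bound $\|\mathbf{Y}_m\|_{\mathcal{F}}\le L$ give $\|\mathbf{J}_m^{(n-1)}\|_{\mathcal{F}}\le\|(\mathbf{X}_m^{(n-1)})_{\Omega_m}\|_{\mathcal{F}}+\|\mathbf{Y}_m\|_{\mathcal{F}}\le 2L$, hence $\|\mathbb{J}_m^{(n-1)}\|_{\mathcal{F}}=\|(\mathbf{J}_m^{(n-1)})^{\rm{H}}\mathbf{J}_m^{(n-1)}\|_{\mathcal{F}}\le\|\mathbf{J}_m^{(n-1)}\|_{\mathcal{F}}^2\le 4L^2$, so the sensitivity is at most $8L^2$. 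Since the aggregate carries total per-entry noise variance $M\mu^2$, substituting $\mu$ from (\ref{equ_mu}) into Lemma \ref{lem_GM} makes each iteration $(\epsilon_0,\delta_0)$-DP with $\epsilon_0=\epsilon/\sqrt{8T\ln(1/\delta')}$ and a suitable $\delta_0$.

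I would then compose over the $T$ iterations. Each per-iteration release is DP conditioned on the history, so the advanced composition of Lemma \ref{lem_TFC} applies in its adaptive form and shows that the full sequence $\{\widehat{\mathbb{W}}^{(n-1)}\}_{n=1}^{T}$—and hence the broadcast stream, again by Lemma \ref{lem_PP}—is $(\epsilon,\delta)$-standard DP; the precise form of $\mu$ in (\ref{equ_mu}) is exactly the calibration that drives the per-round $\epsilon_0$ to the value required by Lemma \ref{lem_TFC} while splitting $\delta=T\delta_0+\delta'$. Applying the Billboard Lemma with this transcript as the billboard and the $\mathcal{A}_m^{\rm{L},2}$ as the per-AP post-processors then delivers $(\epsilon,\delta)$-joint DP for Algorithm 1.

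The hard part will be the feedback induced by the iteration: because $\mathbf{X}_m^{(n-1)}$ in (\ref{equ_XmFWPP}) depends on the entire broadcast history, a change in $\mathbf{Y}_m$ could in principle perturb every summand of $\mathbb{W}^{(n-1)}$, which would break both the sensitivity bound and the independence needed for composition. The observation that rescues the argument is that, once we condition on the broadcast transcript, each iterate $\mathbf{X}_{m'}^{(n-1)}$ is a deterministic function of $\mathbf{Y}_{m'}$ only; therefore under a single-AP change solely the term $\mathbb{J}_m^{(n-1)}$ varies, keeping the per-iteration sensitivity at $8L^2$ and justifying the treatment of the iterates as an adaptive composition rather than one intractable joint mechanism.
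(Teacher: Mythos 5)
Your proposal follows essentially the same route as the paper's proof: bound each AP's clipped residual to get a per-iteration $\ell_2$-sensitivity for $\sum_m\mathbb{J}_m^{(n-1)}$, invoke the Gaussian Mechanism (Lemma \ref{lem_GM}) and post-processing (Lemma \ref{lem_PP}) for the broadcast eigen-pair, compose over $T$ rounds via Lemma \ref{lem_TFC}, and convert to joint DP via the Billboard Lemma. The only structural difference is the order of the last two steps: the paper applies the Billboard Lemma inside each iteration and then composes $T$ joint-DP mechanisms, whereas you compose the $T$ standard-DP transcript releases first and apply the Billboard Lemma once at the end; both orders are valid since Lemma \ref{lem_TFC} covers joint DP as well. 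Your explicit handling of the adaptive feedback (conditioning on the broadcast history so that a single-AP change perturbs only one summand of $\mathbb{W}^{(n-1)}$) is a point the paper's proof passes over silently, and it is worth making. The one place your writeup does not cohere is the sensitivity constant: you derive the bound $8L^2$ by the triangle inequality (the paper asserts $4L^2$, which under the replace-one-AP neighboring relation of Definition 1 is loose reasoning on the paper's part), yet you then claim that $\mu$ from (\ref{equ_mu}) --- which is calibrated for sensitivity $4L^2$ --- delivers exactly the per-round budget $\epsilon_0=\epsilon/\sqrt{8T\ln(1/\delta')}$. Taken literally, your own chain yields $(2\epsilon,\delta)$-joint DP rather than $(\epsilon,\delta)$-joint DP; either the sensitivity must be argued down (e.g., using positive semidefiniteness of the $\mathbb{J}_m$'s one gets $4\sqrt{2}L^2$, still not $4L^2$) or the noise constant in (\ref{equ_mu}) must absorb the factor. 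This is a constant-factor bookkeeping issue shared with the paper, not a flaw in the approach.
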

\begin{proof}
First, we prove that in each iteration, the information released to the CPU is differentially private. Specifically, the received signal by the CPU at the $n$-th iteration is
\begin{equation}
    \sum\nolimits_{m=1}^M{\mathbb{\widehat{J}}_m^{(n-1)}}=\sum\nolimits_{m=1}^M{\mathbb{{E}}_m^{(n-1)}}+\sum\nolimits_{m=1}^M\mathbf{G}_m^{(n-1)},
\end{equation}
where $\sum\nolimits_{m=1}^M\mathbf{G}_m^{(n-1)}$ is the total perturbation noise from the APs, which is a Hermitian matrix. Its  upper triangular and diagonal elements are respectively i.i.d.  $\mathcal{N}_{\rm{c}}\left(0,M\mu^2\right)$ and $\mathcal{N}\left(0,M\mu^2\right)$ samples, and the lower triangular elements are complex conjugates of the upper triangular counterparts, where
\begin{equation}
    M{\mu}^2=\frac{\left(4L^2\right)^2{2\ln{\left(1.25/\frac{\delta}{2T}\right)}}}{\left({\epsilon}/{\sqrt{8T\ln{\left(2/\delta\right)}}}\right)^2}.
\end{equation}

Recall that $\sum\nolimits_{m=1}^M{\mathbb{{J}}_m^{(n-1)}}=\sum\nolimits_{m=1}^M{\left(\mathbf{J}_m^{(n-1)}\right)^{\rm{H}}\mathbf{J}_m^{(n-1)}}$ with $\mathbf{J}_m^{(n-1)}=\left(\mathbf{X}_m^{(n-1)}\right)_{\Omega_m}-\mathbf{Y}_m$; and for each AP $m\in\mathcal{M}$, we have $\left\|\mathbf{Y}_m\right\|_{\mathcal{F}}\le L$ and $\left\|\left(\mathbf{X}_m^{(n-1)}\right)_{\Omega_m}\right\|_{\mathcal{F}}\le L$. Hence, the ${\ell _2}$-sensitivity of the signal $\sum\nolimits_{m=1}^M{\mathbb{{J}}_m^{(n-1)}}$ is $4L^2$. Then according to Lemma \ref{lem_GM}, the information released to the CPU at the $n$-th iteration, $\sum\nolimits_{m=1}^M{\widehat{\mathbb{{J}}}_m^{(n)}}$, is $\left({{\epsilon}/{\sqrt{8T\ln{\left(2/\delta\right)}}}},\delta/2T\right)$-differentially private.
The CPU computes the dominant eigen components of the differentially private $\sum\nolimits_{m=1}^M{\widehat{\mathbb{{J}}}_m^{(n)}}$ and by Lemma \ref{lem_PP}, the obtained $\widetilde\lambda^{(n)}$ and $\mathbf{\widehat{v}}^{(n)}$ are also $\left({{\epsilon}/{\sqrt{8T\ln{\left(2/\delta\right)}}}},\delta/2T\right)$-differentially private. From Line 5 in Algorithm 1, each AP $m$ computes $\mathbf{X}_m^{(n)}$ using differentially private $\widetilde\lambda^{(n)}$ and $\mathbf{\widehat{v}}^{(n)}$ and its local signal as input. By Lemma 3, $\mathbf{X}_m^{(n)}$ is then $\left({{\epsilon}/{\sqrt{8T\ln{\left(2/\delta\right)}}}},\delta/2T\right)$-joint differentially private. 

Finally, because $\widehat{\mathbf{X}}_m=\mathbf{X}_m^{(T)}$ is the result of a $T$-fold composition of $\left({{\epsilon}/{\sqrt{8T\ln{\left(2/\delta\right)}}}},\delta/2T\right)$-joint differentially private algorithms, by Lemma \ref{lem_TFC}, $\widehat{\mathbf{X}}_m,\forall m\in\mathcal{M}$ satisfy $\left(\epsilon,\delta\right)$-joint DP. $\hfill\Box$
\end{proof}

Similar privacy analysis can be carried out for Algorithm 2 and we arrive at the following result.
\begin{thm}
Algorithm 2 is $\left(\epsilon,\delta\right)$-joint differentially private. 
\end{thm}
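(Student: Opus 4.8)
The plan is to follow the same chain of lemmas used for Theorem 1, while exploiting the fact that Algorithm 2 accesses the private data only once; consequently no $T$-fold composition is needed and the argument collapses into a single application of the Gaussian Mechanism. First I would observe that the only quantity the CPU ever receives is the aggregate
$\mathbb{\widehat{W}}=\sum\nolimits_{m=1}^M \mathbb{\widehat{J}}_m=\sum\nolimits_{m=1}^M \mathbb{J}_m+\sum\nolimits_{m=1}^M \mathbf{G}_m$, where $\mathbb{J}_m=\widetilde{\mathbf{Y}}_m^{\rm H}\widetilde{\mathbf{Y}}_m$ and the total noise $\sum\nolimits_{m=1}^M \mathbf{G}_m$ is Hermitian with per-entry variance $M\nu^2$. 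I would then identify the released statistic with the function $f(\mathbf{Y})=\sum\nolimits_{m=1}^M \widetilde{\mathbf{Y}}_m^{\rm H}\widetilde{\mathbf{Y}}_m$ and bound its $\ell_2$-sensitivity according to Definition \ref{def_sen}.

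The key step, and the one I expect to require the most care, is this sensitivity computation. When the received signal changes only at a single AP $m$, every term of $f$ except the $m$-th cancels, so $f(\mathbf{Y})-f(\mathbf{Y}')=\widetilde{\mathbf{Y}}_m^{\rm H}\widetilde{\mathbf{Y}}_m-(\widetilde{\mathbf{Y}}_m')^{\rm H}\widetilde{\mathbf{Y}}_m'$. Here I would argue that the trimming step, which merely zeroes out overly dense rows, cannot increase the Frobenius norm, so that $\|\widetilde{\mathbf{Y}}_m\|_{\mathcal{F}}\le\|\mathbf{Y}_m\|_{\mathcal{F}}\le L$, and hence $\|\widetilde{\mathbf{Y}}_m^{\rm H}\widetilde{\mathbf{Y}}_m\|_{\mathcal{F}}\le\|\widetilde{\mathbf{Y}}_m\|_{\mathcal{F}}^2\le L^2$. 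Following the same convention adopted in the proof of Theorem 1, this gives $\Delta_f=L^2$, which is exactly the sensitivity for which $\nu$ in (\ref{equ_nu}) was calibrated: the total variance satisfies $M\nu^2=2L^4\ln(1.25/\delta)/\epsilon^2=\Delta_f^2\cdot 2\ln(1.25/\delta)/\epsilon^2$, so Lemma \ref{lem_GM} guarantees that $\mathbb{\widehat{W}}$ is $(\epsilon,\delta)$-differentially private.

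Finally I would close the argument as in Theorem 1 but without any composition. By the post-processing property (Lemma \ref{lem_PP}), the eigenvector matrix $\widehat{\mathbf{V}}_K$ that the CPU extracts from the private $\mathbb{\widehat{W}}$ remains $(\epsilon,\delta)$-differentially private; and since each AP $m$ forms $\widehat{\mathbf{X}}_m$ in (\ref{equ_XmSVD}) from the public $\widehat{\mathbf{V}}_K$ together with its own private $\widetilde{\mathbf{Y}}_m$, the Billboard Lemma (Lemma 3) upgrades this to $(\epsilon,\delta)$-joint DP for the collection $\{\widehat{\mathbf{X}}_m\}_{m\in\mathcal{M}}$. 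Because Algorithm 2 is a single-shot procedure rather than an iteration, no appeal to Lemma \ref{lem_TFC} is required, and the target privacy level is attained directly. The only genuinely new element relative to Theorem 1 is the sensitivity bound for the trimmed Gram matrices; everything else is a verbatim specialization of the earlier reasoning.
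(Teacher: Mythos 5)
Your proof is correct and follows exactly the route the paper intends: the paper gives no separate proof of this theorem, stating only that an analysis ``similar'' to Theorem~1 applies, and your argument is precisely that specialization --- single-shot Gaussian mechanism on the aggregated trimmed Gram matrices with $\Delta_f=L^2$ matching the calibration of $\nu$ in (\ref{equ_nu}), then post-processing (Lemma~\ref{lem_PP}) and the Billboard Lemma, with no appeal to $T$-fold composition. The only caveat is one you already flag: your sensitivity bound $\Delta_f=L^2$ (rather than the $2L^2$ a strict reading of Definition~\ref{def_sen} would give via the triangle inequality) adopts the same convention the paper itself uses in the proof of Theorem~1, so it is consistent with the paper rather than a gap.
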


Hence in both matrix completion algorithms, by adding Gaussian noise with calibrated variance to perturb the released information from the APs to the CPU, joint differential privacy can be achieved. However, the perturbation noise inevitably increases the matrix completion error and therefore the channel estimation error. Next, we will provide an analysis on the estimation error bounds for Algorithms 1 and 2. In particular, for both algorithms, we will show that the estimation error decreases with the increase of the data payload size $\tau_{\rm{d}}$. 

\subsection{Error Bounds and Scaling Laws for Channel Estimation}
The error bounds for the FW and SVD-based noisy matrix completions are provided in \cite{freund2017extended} and \cite{keshavan2010matrix} respectively, both of which do not consider privacy. \cite{jain2017differentially} gives the error bounds for both FW and SVD-based differentiately private matrix completions, but the matrix is noise-free. Here we analyze the error bounds for differentiately private noisy matrix completions.

\subsubsection{Error Bound and Scaling Law of Algorithm 1}
The key step of Algorithm 1 in (\ref{equ_XmFWPP}) can be rewritten in terms of the whole matrix as
\begin{equation}
\label{equ_XupSVDPPt}
    {\mathbf{X}}^{(n)}=\Xi_{L,\Omega}\left(\left(1-\eta^{(n)}\right)\mathbf{X}^{(n-1)}-\frac{ \eta^{(n)}{K}}{\widetilde{\lambda}^{(n)}}{\mathbf{J}^{(n-1)}\mathbf{\widehat{v}}^{(n)}}\left(\mathbf{\widehat{v}}^{(n)}\right)^{\rm{H}}\right).
\end{equation}
Following Lemma D.5 in \cite{jain2017differentially}, the projection operator $\Xi_{L,\Omega}$ does not introduce additional error. Hence, we will ignore it in the following analysis. Comparing (\ref{equ_XupSVDPPt}) and  (\ref{equ_Yupd}), we can see that the privacy-preserving FW algorithm replaces $\mathbf{O}^{(n)}=-\frac{K}{\lambda^{(n)}}{\mathbf{J}^{(n-1)}\mathbf{v}^{(n)}}\left(\mathbf{v}^{(n)}\right)^{\rm{H}}$ in the original FW algorithm with $\mathbf{Q}^{(n)}=-\frac{ K}{\widetilde\lambda^{(n)}}{\mathbf{J}^{(n-1)}\mathbf{\widehat{v}}^{(n)}}\left(\mathbf{\widehat{v}}^{(n)}\right)^{\rm{H}}$. To quantify the additional error caused by this replacement, we first provide the following two lemmas, which are generalizations of Lemma D.4 in \cite{jain2017differentially} and Theorem 1 in \cite{jaggi2013revisiting} to the case of noisy matrix completion.

\begin{lem}
\label{lem_diffe}
The following bound holds with high probability
\footnote{``with high probability'' means with probability $1-1/{\tau_{\rm{c}}}^{\Theta\left(1\right)}$.}
\begin{equation}
    \left<\mathbf{Q}^{(n)},\frac{1}{\left|\Omega\right|}\mathbf{J}^{(n-1)}\right>_{\mathcal{F}}-\left<\mathbf{O}^{(n)},\frac{1}{\left|\Omega\right|}\mathbf{J}^{(n-1)}\right>_{\mathcal{F}}\le O\left(\frac{K}{\left|\Omega\right|}\sqrt{\mu}\left({M}{\tau_{\rm{c}}}\right)^{1/4}\right),\forall n.
\end{equation}
\end{lem}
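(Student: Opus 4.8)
The plan is to reduce both Frobenius inner products to closed-form scalars and then control their difference through perturbation theory for the top eigenpair of $J^{\mathrm H}J$. Writing $J=\mathbf{J}^{(n-1)}$, $v=\mathbf{v}^{(n)}$ and $\widehat v=\mathbf{\widehat v}^{(n)}$ (all unit-norm vectors), and using $\langle A,B\rangle_{\mathcal{F}}=\mathrm{Re}\,\mathrm{tr}(A^{\mathrm H}B)$ together with the cyclic property of the trace, I would first obtain
\begin{equation}
\left<\mathbf{O}^{(n)},\tfrac{1}{|\Omega|}J\right>_{\mathcal{F}}=-\frac{K}{|\Omega|\lambda^{(n)}}v^{\mathrm H}J^{\mathrm H}Jv=-\frac{K\lambda^{(n)}}{|\Omega|},
\end{equation}
where the last equality uses that $v$ and $(\lambda^{(n)})^2$ are the top eigenvector/eigenvalue of $J^{\mathrm H}J$; and similarly $\left<\mathbf{Q}^{(n)},\tfrac{1}{|\Omega|}J\right>_{\mathcal{F}}=-\frac{K}{|\Omega|\widetilde\lambda^{(n)}}\,\widehat v^{\mathrm H}J^{\mathrm H}J\,\widehat v$. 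Subtracting, the quantity to bound becomes $\frac{K}{|\Omega|}\big(\lambda^{(n)}-\widehat v^{\mathrm H}J^{\mathrm H}J\widehat v/\widetilde\lambda^{(n)}\big)$, so it suffices to show $\lambda^{(n)}-\widehat v^{\mathrm H}J^{\mathrm H}J\widehat v/\widetilde\lambda^{(n)}=O(\sqrt{\mu}(M\tau_{\mathrm c})^{1/4})$.

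Next I would make the perturbation explicit. Recall that $\widehat v$ and $(\widehat\lambda^{(n)})^2$ are the top eigenpair of the \emph{noisy} matrix $\widehat{\mathbb{W}}^{(n-1)}=J^{\mathrm H}J+G$, where $G=\sum_{m}\mathbf{G}_m^{(n-1)}$ is the aggregate perturbation, a Hermitian $\tau_{\mathrm c}\times\tau_{\mathrm c}$ matrix whose off-diagonal entries have variance $M\mu^2$. The crux, and the step I expect to be the main obstacle, is a high-probability spectral-norm estimate for this random matrix: by standard Gaussian random-matrix concentration, $\|G\|_2=O(\mu\sqrt{M\tau_{\mathrm c}})$ with probability $1-\tau_{\mathrm c}^{-\Theta(1)}$, which is exactly $O(c^2)$ for $c:=\sqrt{\mu}(M\tau_{\mathrm c})^{1/4}$. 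This is precisely why the correction term in $\widetilde\lambda^{(n)}=\widehat\lambda^{(n)}+c$ of (\ref{equ_wlu}) is taken to be $\sqrt{\mu}(M\tau_{\mathrm c})^{1/4}$: it matches $\sqrt{\|G\|_2}$ and, via Weyl's inequality, forces $\widetilde\lambda^{(n)}\ge\lambda^{(n)}$ with high probability, the property that keeps the ratio controlled.

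Finally I would close the estimate with Weyl's inequality and elementary algebra. Writing $g=\widehat v^{\mathrm H}G\widehat v$ with $|g|\le\|G\|_2$, I have $\widehat v^{\mathrm H}J^{\mathrm H}J\widehat v=(\widehat\lambda^{(n)})^2-g\ge(\widehat\lambda^{(n)})^2-\|G\|_2$, while Weyl gives $|(\widehat\lambda^{(n)})^2-(\lambda^{(n)})^2|\le\|G\|_2$. Substituting these into $\lambda^{(n)}-\widehat v^{\mathrm H}J^{\mathrm H}J\widehat v/\widetilde\lambda^{(n)}$ and clearing the denominator reduces the task to bounding $\big(\lambda^{(n)}\widehat\lambda^{(n)}+\lambda^{(n)}c-(\widehat\lambda^{(n)})^2+\|G\|_2\big)/(\widehat\lambda^{(n)}+c)$. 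A short case split then finishes it: if $\lambda^{(n)}\le 2c$ the bound is immediate since $J^{\mathrm H}J\succeq 0$ makes the subtracted term nonnegative; if $\lambda^{(n)}>2c$, then $\|G\|_2\le c^2<(\lambda^{(n)})^2/4$ forces $\widehat\lambda^{(n)}\ge\frac{\sqrt3}{2}\lambda^{(n)}$, so both $\lambda^{(n)}c/(\widehat\lambda^{(n)}+c)$ and $\|G\|_2/c$ are $O(c)$. Multiplying back by $K/|\Omega|$ yields the lemma. Apart from the random-matrix spectral-norm estimate, which carries the genuine technical content in the noisy setting, the argument parallels Lemma~D.4 of \cite{jain2017differentially} with $J^{\mathrm H}J$ playing the role of the noise-free Gram matrix.
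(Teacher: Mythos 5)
Your proposal is correct and follows essentially the same route as the paper's Appendix B: reduce both inner products to the scalars $-K\lambda^{(n)}/|\Omega|$ and $-K\,\widehat{v}^{\mathrm H}J^{\mathrm H}J\widehat{v}/(\widetilde{\lambda}^{(n)}|\Omega|)$, invoke the high-probability spectral-norm bound $O(\mu\sqrt{M\tau_{\rm c}})$ on the aggregate Hermitian Gaussian perturbation, and combine it with Weyl's inequality and the $\sqrt{\mu}(M\tau_{\rm c})^{1/4}$ inflation in $\widetilde{\lambda}^{(n)}$ to control the difference. The only cosmetic differences are that the paper cites Theorem 4 of the Dwork et al.\ reference for the lower bound on $\|J\widehat{v}\|_{\mathcal F}^2$ where you rederive it elementarily from $\widehat{v}^{\mathrm H}J^{\mathrm H}J\widehat{v}=(\widehat{\lambda}^{(n)})^2-\widehat{v}^{\mathrm H}G\widehat{v}$, and your final bookkeeping uses a case split where the paper argues via the ratio $\lambda^{(n)}/(\widehat{\lambda}^{(n)}+\sqrt{\mu}(M\tau_{\rm c})^{1/4})=O(1)$.
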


\begin{proof}
See Appendix A.
\end{proof}

\begin{lem}
\label{lem_ugFW}
If the update of $\mathbf{X}^{(n)}$ in (\ref{equ_Yupd}) is modified as
 \begin{equation}
 \label{equ_XupW}
     {\mathbf{X}}^{(n)}=\left(1-\eta^{(n)}\right)\mathbf{X}^{(n-1)}+\eta^{(n)}{\mathbf{W}}^{(n)},
 \end{equation}
where ${\mathbf{W}}^{(n)}$ satisfies $\left\|{\mathbf{W}}^{(n)}\right\|_{\rm{nuc}}\le K, \forall n$ and 
\begin{equation}
\label{equ_gammaC}
    \left<\mathbf{W}^{(n)},\frac{1}{\left|\Omega\right|}\mathbf{J}^{(n-1)}\right>_{\mathcal{F}}- \left<\mathbf{O}^{(n)},\frac{1}{\left|\Omega\right|}\mathbf{J}^{(n-1)}\right>_{\mathcal{F}}\le \gamma, \forall n,
\end{equation}
where $\left<\mathbf{A},\mathbf{B}\right>_{\mathcal{F}}={\rm{tr}}\left(\mathbf{A}^{\rm{H}}\mathbf{B}\right)$ is the Frobenius inner product. Then for $n=T$, we have
\begin{equation}
\label{equ_xkee}
\frac{1}{\left|\Omega\right|}\left\|\left(\mathbf{X}^{(T)}\right)_{\Omega}-\mathbf{Y}\right\|_{\mathcal{F}}^2\le  4\gamma+\frac{K^2}{\left|\Omega\right|}+ \frac{K^2}{\left|\Omega\right|T}+{\sigma^2}.
\end{equation}
\end{lem}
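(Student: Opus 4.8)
The plan is to read the modified update (\ref{equ_XupW}) as an \emph{inexact} Frank--Wolfe scheme minimizing the convex objective $g(\mathbf{X})=\frac{1}{\left|\Omega\right|}\|(\mathbf{X})_\Omega-\mathbf{Y}\|_{\mathcal{F}}^2$ over the nuclear-norm ball $\{\mathbf{X}:\|\mathbf{X}\|_{\rm{nuc}}\le K\}$, and to adapt the classical Frank--Wolfe convergence argument of \cite{jaggi2013revisiting} so that it absorbs both the oracle inaccuracy $\gamma$ of (\ref{equ_gammaC}) and the channel noise $\mathbf{N}$. The quantity $\frac{1}{\left|\Omega\right|}\mathbf{J}^{(n-1)}$ is exactly half the gradient $\nabla g(\mathbf{X}^{(n-1)})$, $\mathbf{O}^{(n)}$ is the exact linear minimizer of this gradient over the ball, and (\ref{equ_gammaC}) states that $\mathbf{W}^{(n)}$ is a $\gamma$-approximate minimizer. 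An induction on (\ref{equ_XupW}), which is a convex combination of feasible points, shows that every iterate $\mathbf{X}^{(n)}$ stays feasible. The comparator I would use is the ground-truth matrix $\mathbf{X}_0=\mathbf{H}\mathbf{S}$, which is feasible and satisfies $g(\mathbf{X}_0)=\frac{1}{\left|\Omega\right|}\|(\mathbf{N})_\Omega\|_{\mathcal{F}}^2$; by the law of large numbers this concentrates at $\sigma^2$, which will become the noise floor in the bound.

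First I would establish a one-step descent inequality. Since $g$ is quadratic, its second-order Taylor expansion around $\mathbf{X}^{(n-1)}$ along the update direction is exact, giving $g(\mathbf{X}^{(n)})=g(\mathbf{X}^{(n-1)})+\eta^{(n)}\langle\nabla g(\mathbf{X}^{(n-1)}),\mathbf{W}^{(n)}-\mathbf{X}^{(n-1)}\rangle+\frac{(\eta^{(n)})^2}{\left|\Omega\right|}\|(\mathbf{W}^{(n)}-\mathbf{X}^{(n-1)})_\Omega\|_{\mathcal{F}}^2$, and the last (curvature) term is $O((\eta^{(n)})^2 K^2/\left|\Omega\right|)$ because any two feasible points differ by $O(K)$ in Frobenius norm. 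I would then use the $\gamma$-approximation (\ref{equ_gammaC}), the exact optimality of $\mathbf{O}^{(n)}$ over the ball so that $\langle\mathbf{O}^{(n)},\frac{1}{\left|\Omega\right|}\mathbf{J}^{(n-1)}\rangle\le\langle\mathbf{X}_0,\frac{1}{\left|\Omega\right|}\mathbf{J}^{(n-1)}\rangle$ for the feasible comparator $\mathbf{X}_0$, and the convexity of $g$, to bound the linear term by the optimality gap: $\langle\nabla g(\mathbf{X}^{(n-1)}),\mathbf{W}^{(n)}-\mathbf{X}^{(n-1)}\rangle\le g(\mathbf{X}_0)-g(\mathbf{X}^{(n-1)})+O(\gamma)$. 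Writing $h_n=g(\mathbf{X}^{(n)})-g(\mathbf{X}_0)$, this yields the recursion $h_n\le(1-\eta^{(n)})h_{n-1}+O(\eta^{(n)}\gamma)+O((\eta^{(n)})^2 K^2/\left|\Omega\right|)$.

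Next I would unroll the recursion with the step sizes of Algorithm~1, namely $\eta^{(1)}=1$ and $\eta^{(n)}=1/T$ for $n\ge 2$. The first step gives $h_1=O(\gamma)+O(K^2/\left|\Omega\right|)$; iterating the remaining $T-1$ steps contracts $h_{n-1}$ by the factor $(1-1/T)$ while accumulating the $O(\gamma/T)$ and $O(K^2/(\left|\Omega\right|T^2))$ contributions, and summing the resulting geometric series gives $h_T\le O(\gamma)+O(K^2/\left|\Omega\right|)\cdot(1-1/T)^{T-1}+O(K^2/(\left|\Omega\right|T))$. Since $(1-1/T)^{T-1}\le 1$, the surviving terms are precisely of the form $\gamma$, $K^2/\left|\Omega\right|$ and $K^2/(\left|\Omega\right|T)$; note that the fixed step size leaves the non-vanishing floor $K^2/\left|\Omega\right|$, in contrast to the classical decaying schedule $2/(n+1)$. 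Finally, adding back $g(\mathbf{X}_0)$, which concentrates to $\sigma^2$, produces (\ref{equ_xkee}); matching the precise constants $4,1,1,1$ amounts to tracking the factors of two coming from $\nabla g=\frac{2}{\left|\Omega\right|}\mathbf{J}^{(n-1)}$ and from the real parts of the complex Frobenius inner products.

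I expect the main obstacle to be the bookkeeping rather than any single deep step. The delicate points are: justifying that the projection $\Xi_{L,\Omega}$ may be dropped without loss, appealing to Lemma~D.5 of \cite{jain2017differentially}; pinning down the curvature term with the correct numerical constant while working over the complex field; and correctly combining the special first step $\eta^{(1)}=1$ with the constant steps in the geometric sum, since it is exactly this choice that produces the $K^2/\left|\Omega\right|$ floor. By contrast, the noise enters cleanly: once $\mathbf{X}_0=\mathbf{H}\mathbf{S}$ is taken as the comparator, the only probabilistic (or massive-MIMO limiting) ingredient is the concentration of $\frac{1}{\left|\Omega\right|}\|(\mathbf{N})_\Omega\|_{\mathcal{F}}^2$ about $\sigma^2$.
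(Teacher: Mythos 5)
Your proposal is correct and follows essentially the same route as the paper's proof: an inexact Frank--Wolfe descent inequality with curvature bounded by $K^2/\left|\Omega\right|$, the $\gamma$-approximate linear oracle combined with convexity to control the per-step gap, unrolling the recursion under $\eta^{(1)}=1$, $\eta^{(n)}=1/T$, and concentration of $\frac{1}{\left|\Omega\right|}\left\|\left(\mathbf{N}\right)_\Omega\right\|_{\mathcal{F}}^2$ to $\sigma^2$ for the noise floor. The only cosmetic difference is that you compare directly against the ground truth $\mathbf{H}\mathbf{S}$, whereas the paper compares against the minimizer $\widehat{\mathbf{X}}$ of (\ref{equ_Pnorm}) and only then invokes feasibility of $\mathbf{H}\mathbf{S}$ to bound $\Gamma_\Omega\left(\widehat{\mathbf{X}}\right)$ by the noise energy; both steps require the same feasibility assumption and yield the same bound.
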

\begin{proof}
See Appendix B.
\end{proof}

Using Lemmas \ref{lem_ugFW} and \ref{lem_diffe},  we can then obtain the error bound for Algorithm 1 as follows. 
\begin{thm}
\label{thm_OE}
Denote $\widehat{\mathbf{X}}=\mathbf{X}^{(T)}$ as the output of Algorithm 1. Then the following error bound on the observed entries in $\Omega$ holds with high probability
\begin{equation}
\label{equ_ebo}
      \frac{1}{{\left|\Omega\right|}} \left\|\left(\widehat{\mathbf{X}}\right)_{\Omega}-\left(\mathbf{X}\right)_{\Omega}\right\|_{\mathcal{F}}^2=
        O\left({\frac{4K}{\left|\Omega\right|}\sqrt{\mu}\left({M}{\tau_{\rm{c}}}\right)^{1/4}+\frac{K^2}{\left|\Omega\right|}+\frac{K^2}{{\left|\Omega\right|}T}+2{\sigma^2}}\right).
\end{equation}

Furthermore, the generalization error $\mathcal{E}\left(\widehat{\mathbf{X}}\right)=\frac{1}{M{N_a}\tau_{\rm{c}}}\left\|\widehat{\mathbf{X}}-\mathbf{X}\right\|_{\mathcal{F}}^2$
is bounded as follows with high probability 
\begin{equation}
\label{equ_XeFW}
    \mathcal{E}\left(\widehat{\mathbf{X}}\right)=\widetilde{O}\left({\frac{4K}{\left|\Omega\right|}\sqrt{\mu}\left({M}{\tau_{\rm{c}}}\right)^{1/4}+\frac{K^2}{\left|\Omega\right|}+\frac{K^2}{{\left|\Omega\right|}T}+2{\sigma^2}}+\sqrt{\frac{T\left(M{N_a}+\tau_{\rm{c}}\right)}{\left|\Omega\right|}}\right),
\end{equation}
where $\widetilde{O}\left(\cdot\right)$ hides poly-logarithmic terms in $M{N_a}$ and $\tau_{\rm{c}}$. When the number of iterations is chosen as
$T=O\left(\frac{K^{4/3}}{\left(\left|\Omega\right|\left(M{N_a}+{\tau_{\rm{c}}}\right)\right)^{1/3}}\right)$, we can obtain the  generalization error bound as
\begin{equation}
\label{equ_XeFWb}
    \mathcal{E}\left(\widehat{\mathbf{X}}\right)=\widetilde{O}\left({\frac{4K}{\left|\Omega\right|}\sqrt{\mu}\left({M}{\tau_{\rm{c}}}\right)^{1/4}+\frac{K^2}{\left|\Omega\right|}+2{\sigma^2}}+{\frac{2K^{2/3}\left(M{N_a}+\tau_{\rm{c}}\right)^{1/3}}{\left|\Omega\right|^{2/3}}}\right).
\end{equation}
\end{thm}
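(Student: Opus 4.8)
The plan is to reduce Algorithm 1 to the abstract modified Frank--Wolfe recursion of Lemma~\ref{lem_ugFW} and then cross the two gaps that separate its guarantee from the claimed statement: the observation-noise gap (passing from $\mathbf{Y}$ to the clean low-rank matrix $\mathbf{X}$) and the generalization gap (passing from the observed entries $\Omega$ to the whole matrix). First I would identify the privacy-preserving update (\ref{equ_XupSVDPPt}) with (\ref{equ_XupW}) by taking $\mathbf{W}^{(n)}=\mathbf{Q}^{(n)}=-\frac{K}{\widetilde{\lambda}^{(n)}}\mathbf{J}^{(n-1)}\widehat{\mathbf{v}}^{(n)}(\widehat{\mathbf{v}}^{(n)})^{\rm{H}}$ and verifying the two hypotheses of Lemma~\ref{lem_ugFW}. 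The nuclear-norm budget holds because $\mathbf{Q}^{(n)}$ is rank one with $\widehat{\mathbf{v}}^{(n)}$ a unit vector, so $\|\mathbf{Q}^{(n)}\|_{\rm{nuc}}=\frac{K}{\widetilde{\lambda}^{(n)}}\|\mathbf{J}^{(n-1)}\widehat{\mathbf{v}}^{(n)}\|_{2}\le \frac{K}{\widetilde{\lambda}^{(n)}}\lambda^{(n)}\le K$, where the last step uses that the inflation (\ref{equ_wlu}) guarantees $\widetilde{\lambda}^{(n)}\ge \lambda^{(n)}=\|\mathbf{J}^{(n-1)}\|_{2}$ with high probability. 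The $\gamma$-approximation condition (\ref{equ_gammaC}) is exactly the content of Lemma~\ref{lem_diffe}, which yields $\gamma=O\left(\frac{K}{|\Omega|}\sqrt{\mu}(M\tau_{\rm{c}})^{1/4}\right)$. Substituting this $\gamma$ into (\ref{equ_xkee}) gives $\frac{1}{|\Omega|}\|(\widehat{\mathbf{X}})_{\Omega}-\mathbf{Y}\|_{\mathcal{F}}^{2}\le 4\gamma+\frac{K^2}{|\Omega|}+\frac{K^2}{|\Omega|T}+\sigma^2$.

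Next, since $\mathbf{Y}=(\mathbf{X}+\mathbf{N})_{\Omega}$, I would pass from $\mathbf{Y}$ to $(\mathbf{X})_{\Omega}$ using the triangle inequality together with $(a+b)^2\le 2a^2+2b^2$, which picks up the term $\frac{2}{|\Omega|}\|(\mathbf{N})_{\Omega}\|_{\mathcal{F}}^{2}$. Because the entries of $\mathbf{N}$ are i.i.d.\ $\mathcal{N}_{\rm{c}}(0,\sigma^2)$, the law of large numbers gives $\frac{1}{|\Omega|}\|(\mathbf{N})_{\Omega}\|_{\mathcal{F}}^{2}\to\sigma^2$; combining the resulting $\sigma^2$ with the one already present yields the $2\sigma^2$ contribution, and the numerical constants are absorbed into $O(\cdot)$, establishing (\ref{equ_ebo}).

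The step I expect to be the main obstacle is converting the empirical error on $\Omega$ into the full generalization error $\mathcal{E}(\widehat{\mathbf{X}})$. Since each Frank--Wolfe iteration adds a rank-one atom, $\widehat{\mathbf{X}}=\mathbf{X}^{(T)}$ has rank at most $T$ and nuclear norm at most $K$; invoking a uniform-convergence (Rademacher-complexity) bound for the class of such bounded-rank, bounded-nuclear-norm matrices, the discrepancy between the per-entry error averaged over $\Omega$ and that averaged over all $MN_a\tau_{\rm{c}}$ entries is $\widetilde{O}\!\left(\sqrt{\frac{T(MN_a+\tau_{\rm{c}})}{|\Omega|}}\right)$ with high probability. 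Adding this gap to (\ref{equ_ebo}) gives (\ref{equ_XeFW}). This generalization term is the delicate ingredient: it must hold uniformly over the iterates and rests on the sampling-model and incoherence assumptions embedded in the cited matrix-completion theory rather than on the FW recursion itself.

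Finally, (\ref{equ_XeFWb}) follows by elementary optimization over $T$. Only $\frac{K^2}{|\Omega|T}$ and $\sqrt{\frac{T(MN_a+\tau_{\rm{c}})}{|\Omega|}}$ depend on $T$, one decreasing and one increasing, so I would balance them by setting $\frac{K^2}{|\Omega|T}=\sqrt{\frac{T(MN_a+\tau_{\rm{c}})}{|\Omega|}}$. Solving gives $T=\Theta\!\left(\frac{K^{4/3}}{(|\Omega|(MN_a+\tau_{\rm{c}}))^{1/3}}\right)$, and substituting back collapses both terms into the single contribution $\frac{K^{2/3}(MN_a+\tau_{\rm{c}})^{1/3}}{|\Omega|^{2/3}}$, completing the bound.
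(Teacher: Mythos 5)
Your proposal follows essentially the same route as the paper's proof: reduce the update (\ref{equ_XupSVDPPt}) to the form of Lemma~\ref{lem_ugFW} with $\mathbf{W}^{(n)}=\mathbf{Q}^{(n)}$, verify $\left\|\mathbf{Q}^{(n)}\right\|_{\rm{nuc}}\le K$ via the rank-one structure and the inflated $\widetilde{\lambda}^{(n)}$, plug in the $\gamma$ from Lemma~\ref{lem_diffe}, pass from $\mathbf{Y}$ to $\left(\mathbf{X}\right)_{\Omega}$ by the triangle inequality and the concentration of $\left\|\left(\mathbf{N}\right)_{\Omega}\right\|_{\mathcal{F}}^2$, invoke the Srebro-type generalization bound (Theorem 1 of the cited rank/trace-norm generalization work) for the $\sqrt{T\left(M{N_a}+\tau_{\rm{c}}\right)/\left|\Omega\right|}$ term, and balance the two $T$-dependent terms. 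The only cosmetic difference is that you certify the nuclear-norm budget via $\left\|\mathbf{J}^{(n-1)}\mathbf{\widehat{v}}^{(n)}\right\|_2\le\lambda^{(n)}\le\widetilde{\lambda}^{(n)}$ (the latter needing Weyl's inequality), whereas the paper bounds $\left\|\mathbf{J}^{(n-1)}\mathbf{\widehat{v}}^{(n)}\right\|_{\mathcal{F}}$ directly by $\widehat{\lambda}+O\left(\sqrt{\mu}\left(M\tau_{\rm{c}}\right)^{1/4}\right)$; both rest on the same perturbation-compensation idea, so the argument is correct and equivalent.
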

\begin{proof}
Since $\mathbf{Q}^{(n)}=-\frac{ K}{\widetilde\lambda^{(n)}}{\mathbf{J}^{(n-1)}\mathbf{\widehat{v}}^{(n)}}\left(\mathbf{\widehat{v}}^{(n)}\right)^{\rm{H}}$ is rank-one, we have 
$\left\|\mathbf{Q}^{(n)}\right\|_{\rm{nuc}}=\left\|\mathbf{Q}^{(n)}\right\|_{\mathcal{F}}$, and
\begin{equation}
\begin{aligned}
    \left\|\mathbf{Q}^{(n)}\right\|_{\mathcal{F}}^2&={\rm{tr}}\left(\left(\mathbf{Q}^{(n)}\right)^{\rm{H}}\mathbf{Q}^{(n)}\right)\\
    &=\frac{K^2}{\left(\widetilde{\lambda}^{(n)}\right)^2}{\rm{tr}}\left({\mathbf{\widehat{v}}^{(n)}}\left(\mathbf{\widehat{v}}^{(n)}\right)^{\rm{H}}\left({\mathbf{J}^{(n-1)}}\right)^{\rm{H}}{\mathbf{J}^{(n-1)}\mathbf{\widehat{v}}^{(n)}}\left(\mathbf{\widehat{v}}^{(n)}\right)^{\rm{H}}\right)\\
    &={K^2}\frac{\left(\mathbf{\widehat{v}}^{(n)}\right)^{\rm{H}}\left({\mathbf{J}^{(n-1)}}\right)^{\rm{H}}{\mathbf{J}^{(n-1)}\mathbf{\widehat{v}}^{(n)}}}{\left(\widetilde{\lambda}^{(n)}\right)^2}\left(\mathbf{\widehat{v}}^{(n)}\right)^{\rm{H}}\mathbf{\widehat{v}}^{(n)}\\
    &={K^2}\frac{\left\|{\mathbf{J}^{(n-1)}\mathbf{\widehat{v}}^{(n)}}\right\|_{\mathcal{F}}^2}{\left(\widetilde{\lambda}^{(n)}\right)^2}.
\end{aligned}
\end{equation}
According to Lemma D.2 in \cite{jain2017differentially}, the following holds with high probability
\begin{equation}
    \left\|\mathbf{J}^{(n-1)}\mathbf{\widehat{v}}^{(n)}\right\|_{\mathcal{F}}\le \widehat{\lambda}+O\left(\sqrt{\mu}\left(M{\tau_{\rm{c}}}\right)^{1/4}\right).
\end{equation}
Hence, according to the definition of $\widetilde{\lambda}^{(n)}$ in (\ref{equ_wlu}), with high probability, we have $\left\|\mathbf{Q}^{(n)}\right\|_{\rm{nuc}}\le K, \forall n$. Then by Lemma \ref{lem_diffe} and Lemma \ref{lem_ugFW}, the following holds with high probability
\begin{equation}
    \frac{1}{\left|\Omega\right|}\left\|\left(\widehat{\mathbf{X}}\right)_{\Omega}-\mathbf{Y}\right\|_{\mathcal{F}}^2=
O\left({\frac{4K}{\left|\Omega\right|}\sqrt{\mu}\left({M}{\tau_{\rm{c}}}\right)^{1/4}+\frac{K^2}{\left|\Omega\right|}+\frac{K^2}{{\left|\Omega\right|}T}+{\sigma^2}}\right).
\end{equation}

Due to the triangular inequality property, we have
\begin{equation}
\label{equ_Nbound}
\begin{aligned}
    \left\|\left(\widehat{\mathbf{X}}\right)_{\Omega}-\left(\mathbf{X}\right)_{\Omega}\right\|_{\mathcal{F}}^2&\le\left(\left\|\left(\widehat{\mathbf{X}}\right)_{\Omega}-\mathbf{Y}\right\|_{\mathcal{F}}+\left\|\mathbf{Y}-\left(\mathbf{X}\right)_{\Omega}\right\|_{\mathcal{F}}\right)^2\\
    &\le 2\left\|\left(\widehat{\mathbf{X}}\right)_{\Omega}-\mathbf{Y}\right\|_{\mathcal{F}}^2+2\left\|\mathbf{Y}-\left(\mathbf{X}\right)_{\Omega}\right\|_{\mathcal{F}}^2.
\end{aligned}
\end{equation}
Note that $\left\|\mathbf{Y}-\left(\mathbf{X}\right)_{\Omega}\right\|_{\mathcal{F}}^2=\left\|\left(\mathbf{N}\right)_{\Omega}\right\|_{\mathcal{F}}^2\rightarrow {\left|\Omega\right|\sigma^2}$ when $\left|\Omega\right|=M{N_r}{\tau_{\rm{c}}}$ is large. Hence,
with high probability (\ref{equ_ebo}) holds.

Note that (\ref{equ_ebo}) gives the error bound on observed entries in $\Omega$. Using Theorem 1 in \cite{srebro2005rank}, we can then generalize the error bound to the entire matrix $\widehat{\mathbf{X}}$ given by (\ref{equ_XeFW}). It can be seen that the third term in (\ref{equ_XeFW}) decreases with $T$, while the last term increases with $T$. By setting these two terms as the same order, we obtain  $T=O\left(\frac{K^{4/3}}{\left(\left|\Omega\right|\left(M{N_a}+{\tau_{\rm{c}}}\right)\right)^{1/3}}\right)$, and the corresponding generalization error bound in (\ref{equ_XeFWb}).
$\hfill\Box$
\end{proof}

\begin{rem}
To achieve stronger privacy level, i.e., smaller $\epsilon$ and/or $\delta$, the perturbation noise variance $\mu$ in (\ref{equ_mu}) will increase, which in turn increases the matrix completion error according to (\ref{equ_XeFW}).
\end{rem}

Note that the matrix completion error in (\ref{equ_XeFWb}) has two sources: the first term is due to the perturbation noise added to achieve DP, and the other terms are the error inherent to the FW algorithm. Both error sources contribute to the channel estimation error through (\ref{eq_ce}). The following result shows that the channel estimation errors caused by both sources decreases with the increase of the payload size $\tau_d$, at different speed, for fixed pilot size $\tau_{\rm{p}}$.
 
\begin{coro}
For fixed privacy parameters $\epsilon$ and $\delta$, and fixed pilot length $\tau_{\rm{p}}$, the estimation error of the proposed privacy-preserving channel estimator that employs Algorithm 1 scales with the data payload size $\tau_{\rm{d}}$ as
\begin{equation}
\label{equ_ceeO}
    \left\|\widehat{\mathbf{H}}_m-\mathbf{H}_m\right\|_{\mathcal{F}}^2=O\left(\tau_{\rm{d}}^{-1/3}\right).
\end{equation}
Moreover, the portion of the estimation error due to the perturbation noise added to preserve privacy scales as $O\left({\tau_{\rm{d}}^{-5/12}}\right)$.
\end{coro}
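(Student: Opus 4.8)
The plan is to reduce the channel estimation error to the matrix completion generalization error of Theorem~\ref{thm_OE} and then substitute the $\tau_{\rm{d}}$-dependence of every quantity entering (\ref{equ_XeFWb}). From (\ref{eq_ce}) together with $\mathbf{X}_{m,{\rm{p}}}=\mathbf{H}_m\mathbf{P}$ (so that $\mathbf{H}_m=\mathbf{X}_{m,{\rm{p}}}\mathbf{P}^{\dagger}$ whenever $\mathbf{P}$ has full row rank), I would first write
\begin{equation}
\widehat{\mathbf{H}}_m-\mathbf{H}_m=\left(\widehat{\mathbf{X}}_{m,{\rm{p}}}-\mathbf{X}_{m,{\rm{p}}}\right)\mathbf{P}^{\dagger},
\end{equation}
so that $\|\widehat{\mathbf{H}}_m-\mathbf{H}_m\|_{\mathcal{F}}^2\le\|\mathbf{P}^{\dagger}\|_2^2\,\|\widehat{\mathbf{X}}_{m,{\rm{p}}}-\mathbf{X}_{m,{\rm{p}}}\|_{\mathcal{F}}^2$. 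Since the pilot block $\mathbf{X}_{m,{\rm{p}}}$ has the \emph{fixed} size $N_a\times\tau_{\rm{p}}$ while (\ref{equ_XeFWb}) controls the per-entry error $\mathcal{E}(\widehat{\mathbf{X}})$, the block error is of order $N_a\tau_{\rm{p}}\,\mathcal{E}(\widehat{\mathbf{X}})$; combined with $\|\mathbf{P}^{\dagger}\|_2^2=O(1/\tau_{\rm{p}})=O(1)$ (as $\mathbf{P}\mathbf{P}^{\rm{H}}\to\tau_{\rm{p}}\mathbf{I}_K$ for fixed $\tau_{\rm{p}}$), this gives $\|\widehat{\mathbf{H}}_m-\mathbf{H}_m\|_{\mathcal{F}}^2=O\!\left(\mathcal{E}(\widehat{\mathbf{X}})\right)$, reducing the claim to reading off the $\tau_{\rm{d}}$-scaling of (\ref{equ_XeFWb}).

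Next I would track each term of (\ref{equ_XeFWb}) with $M,N_a,N_r,K,\sigma,\epsilon,\delta$ held fixed and only $\tau_{\rm{d}}$ growing. With $\tau_{\rm{p}}$ fixed, $\tau_{\rm{c}}=\tau_{\rm{p}}+\tau_{\rm{d}}=\Theta(\tau_{\rm{d}})$ and $|\Omega|=MN_r\tau_{\rm{c}}=\Theta(\tau_{\rm{d}})$, and for large $\tau_{\rm{d}}$ also $MN_a+\tau_{\rm{c}}=\Theta(\tau_{\rm{d}})$. From (\ref{equ_L}), $L^2=\Theta(\tau_{\rm{c}})=\Theta(\tau_{\rm{d}})$; the optimal iteration count is $T=\Theta\!\left((|\Omega|(MN_a+\tau_{\rm{c}}))^{-1/3}\right)=\Theta(\tau_{\rm{d}}^{-2/3})$; and hence from (\ref{equ_mu}), up to the poly-logarithmic factors hidden by $\widetilde{O}(\cdot)$, $\mu=\Theta(L^2\sqrt{T})=\Theta(\tau_{\rm{d}}^{2/3})$, so $\sqrt{\mu}=\Theta(\tau_{\rm{d}}^{1/3})$. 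Substituting, the privacy term $\frac{K}{|\Omega|}\sqrt{\mu}(M\tau_{\rm{c}})^{1/4}=\Theta(\tau_{\rm{d}}^{-1}\tau_{\rm{d}}^{1/3}\tau_{\rm{d}}^{1/4})=\Theta(\tau_{\rm{d}}^{-5/12})$, the term $K^2/|\Omega|=\Theta(\tau_{\rm{d}}^{-1})$, and the inherent FW term $\frac{K^{2/3}(MN_a+\tau_{\rm{c}})^{1/3}}{|\Omega|^{2/3}}=\Theta(\tau_{\rm{d}}^{1/3}/\tau_{\rm{d}}^{2/3})=\Theta(\tau_{\rm{d}}^{-1/3})$. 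Since $-1/3>-5/12>-1$, the FW term dominates the decreasing contributions, yielding the total scaling $O(\tau_{\rm{d}}^{-1/3})$ in (\ref{equ_ceeO}), while the perturbation-induced portion decays at the slower rate $O(\tau_{\rm{d}}^{-5/12})$, which establishes both parts.

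The delicate point is two-fold. First, I must justify passing from the per-entry generalization bound to the pilot-block error $\|\widehat{\mathbf{X}}_{m,{\rm{p}}}-\mathbf{X}_{m,{\rm{p}}}\|_{\mathcal{F}}^2$: bounding it crudely by the \emph{whole} Frobenius error $\|\widehat{\mathbf{X}}-\mathbf{X}\|_{\mathcal{F}}^2=MN_a\tau_{\rm{c}}\,\mathcal{E}(\widehat{\mathbf{X}})$ is too loose, since it carries an extra $\tau_{\rm{c}}=\Theta(\tau_{\rm{d}})$ that would swamp the decay; what is needed is that the error is on average uniform per entry, so that restricting to the $N_a\tau_{\rm{p}}$ pilot entries retains only the fixed multiplicity. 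Second, the composite $\tau_{\rm{d}}$-dependence of $\mu$ is subtle, as $\mu$ simultaneously grows through $L^2=\Theta(\tau_{\rm{d}})$ and shrinks through $\sqrt{T}=\Theta(\tau_{\rm{d}}^{-1/3})$; it is exactly the net exponent $2/3$ that produces the $-5/12$ rate, so the interplay between the sensitivity bound $L$ and the optimized $T$ is the crux of the argument. Finally, the constant noise floor $2\sigma^2$ in (\ref{equ_XeFWb}) is $\tau_{\rm{d}}$-independent and hence does not affect the decay rate; the corollary characterizes precisely the $\tau_{\rm{d}}$-dependent (reducible) part of the error captured by the $O(\cdot)$ scalings.
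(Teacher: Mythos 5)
Your proposal is correct and follows essentially the same route as the paper: reduce the channel error to the pilot-block completion error via $\mathbf{P}^{\dagger}$, invoke the uniform-per-entry error assumption (the paper's (\ref{equ_mcem})) to keep only the fixed $N_a\tau_{\rm{p}}$ multiplicity, and substitute $L=O(\sqrt{\tau_{\rm{d}}})$, $T=\Theta(\tau_{\rm{d}}^{-2/3})$, $\mu=O(\tau_{\rm{d}}^{2/3})$ into (\ref{equ_XeFWb}) to read off the $-5/12$ and dominant $-1/3$ exponents. Your explicit remarks on why the crude whole-matrix bound is too loose and on the $\tau_{\rm{d}}$-independent $2\sigma^2$ floor are slightly more careful than the paper's own exposition but do not change the argument.
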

\begin{proof}
Note that from (\ref{eq_ce}), for a given pilot matrix $\mathbf{P}$, the channel estimation error at AP $m$ satisfies
\begin{equation}
\label{equ_CEE}
\begin{aligned}
    \left\|\widehat{\mathbf{H}}_m-\mathbf{H}_m\right\|_{\mathcal{F}}^2&=\left\|\left(\widehat{\mathbf{X}}_{m,{\rm{p}}}-\mathbf{X}_{m,{\rm{p}}}\right){\mathbf{P}}^{\dagger}\right\|_{\mathcal{F}}^2\\
    &\le \left\|\widehat{\mathbf{X}}_{m,{\rm{p}}}-\mathbf{X}_{m,{\rm{p}}}\right\|_{\mathcal{F}}^2\left\|{\mathbf{P}}^{\dagger}\right\|_{\mathcal{F}}^2.
\end{aligned}
\end{equation}

Now assuming that the matrix completion error is uniform among the entries of $\mathbf{X}$, then we have
\begin{equation}
\label{equ_mcem}
    \left\|\widehat{\mathbf{X}}_{m,{\rm{p}}}-\mathbf{X}_{m,{\rm{p}}}\right\|_{\mathcal{F}}^2={\frac{1}{M}\frac{\tau_{\rm{p}}}{\tau_{\rm{p}}+\tau_{\rm{d}}}}    \left\|\widehat{\mathbf{X}}-\mathbf{X}\right\|_{\mathcal{F}}^2.
\end{equation}

For fixed privacy parameters $\epsilon$ and $\delta$, and fixed pilot length $\tau_{\rm{p}}$, when $T=O\left(\frac{K^{4/3}}{\left(\left|\Omega\right|\left(M{N_a}+{\tau_{\rm{c}}}\right)\right)^{1/3}}\right)$, we have $L=O\left(\sqrt{\tau_{\rm{d}}}\right)$ and $\mu=O\left(\tau_{\rm{d}}^{2/3}\right)$ hiding all other parameters and the logarithmic term  $\sqrt{\ln\left(\tau_{\rm{d}}^{-2/3}\right)}$ according to (\ref{equ_L}) and (\ref{equ_mu}), respectively.
The term in (\ref{equ_XeFWb}) that has $\mu$ is due to the perturbation noise, and scales as $O\left(\tau_{\rm{d}}^{-5/12}\right)$ since $\left|\Omega\right|=M{N_r}{\tau_{\rm{c}}}$. In addition, the last term in (\ref{equ_XeFWb}) scales as $O\left(\tau_{\rm{d}}^{-1/3}\right)$, which dominates the matrix completion error. 
Then by (\ref{equ_CEE}) and (\ref{equ_mcem}), the statements of the corollary hold.  $\hfill\Box$
\end{proof}

\subsubsection{Error Bound and Scaling Law of Algorithm 2}
The key step of Algorithm 2 in (\ref{equ_XmSVD}) can be rewritten in terms of the whole matrix as
\begin{equation}
\label{equ_XSVDp}
    \widehat{\mathbf{X}}=\frac{{N_a}}{N_r}\widetilde{\mathbf{Y}}{\mathbf{\widehat{V}}_K}{\mathbf{\widehat{V}}_K^{\rm{H}}}.
\end{equation}
Compared to the original SVD-based matrix completion in (\ref{equ_Xsvd}), (\ref{equ_XSVDp}) uses $\bm{\widehat\Pi}_K= \mathbf{\widehat{V}}_K\mathbf{\widehat{V}}_K^{\rm{H}}$ instead of $\bm\Pi_K= \mathbf{V}_K\mathbf{V}_K^{\rm{H}}$. Denote the $K$-th and $(K+1)$-th singular values of $\widetilde{\mathbf{Y}}$ as $\lambda_K$ and $\lambda_{K+1}$, respectively. When there is a large
gap between $\lambda_K^2$ and $\lambda_{K+1}^2$, the space spanned by the $K$ largest
eigenvectors of the noise-perturbed version of $\widetilde{\mathbf{Y}}^{\rm{H}}\widetilde{\mathbf{Y}}$, i.e., $\widehat{\mathbb{W}}$ is very close to the space spanned by the $K$ largest right singular vectors of $\widetilde{\mathbf{Y}}$\cite{dwork2014analyze}. In massive MIMO with sufficiently large $M{N_a}$, such a large gap holds and then we have the following matrix completion error bound for Algorithm 2.

\begin{thm}
 If $\lambda_K^2-\lambda_{K+1}^2=\omega\left(\nu\sqrt{M{\tau_{\rm{c}}}}\right)$, then the following error bound on the output $\widehat{\mathbf{X}}$ of Algorithm 2 holds with high probability
\begin{equation}
\label{equ_XerSVD}
        \frac{1}{\sqrt{M{N_a}{\tau_{\rm{c}}}}}\left\|\widehat{\mathbf{X}}-\mathbf{X}\right\|_{\mathcal{F}}=O\left(\left(\frac{L^4K^2M{N_a^3}}{{N_r^{2}}{\tau_{\rm{c}}^3}}\right)^{1/4}+\frac{\sqrt{K{N_a}{{N_r}{\ln{\tau_{\rm{c}}}}\sigma^2}}}{{N_r}\sqrt{{\tau_{\rm{c}}}}}+\frac{\sqrt{N_a}L{\nu\sqrt{KM{\tau_{\rm{c}}}}}}{{N_r}\sqrt{\tau_{\rm{c}}}\omega\left(\nu\sqrt{M{\tau_{\rm{c}}}}\right)}\right).
\end{equation}
\end{thm}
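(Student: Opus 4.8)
The plan is to decompose the total error into a privacy-free matrix-completion error and an additional error induced by the differentially private perturbation, and to bound each piece separately. Writing $\bm\Pi_K=\mathbf{V}_K\mathbf{V}_K^{\rm{H}}$ and $\bm{\widehat\Pi}_K=\mathbf{\widehat{V}}_K\mathbf{\widehat{V}}_K^{\rm{H}}$, the output of Algorithm 2 is $\widehat{\mathbf{X}}=\frac{N_a}{N_r}\widetilde{\mathbf{Y}}\bm{\widehat\Pi}_K$, whereas the non-private SVD estimator of (\ref{equ_Xsvd}) is $\frac{N_a}{N_r}\widetilde{\mathbf{Y}}\bm\Pi_K$. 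Adding and subtracting this term and applying the triangle inequality gives
\begin{equation}
\left\|\widehat{\mathbf{X}}-\mathbf{X}\right\|_{\mathcal{F}}\le \underbrace{\frac{N_a}{N_r}\left\|\widetilde{\mathbf{Y}}\left(\bm{\widehat\Pi}_K-\bm\Pi_K\right)\right\|_{\mathcal{F}}}_{\text{privacy perturbation}}+\underbrace{\left\|\frac{N_a}{N_r}\widetilde{\mathbf{Y}}\bm\Pi_K-\mathbf{X}\right\|_{\mathcal{F}}}_{\text{non-private error}},
\end{equation}
after which I would divide by $\sqrt{M N_a\tau_{\rm{c}}}$ and treat the two terms in turn.

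For the non-private error I would invoke the trimmed-SVD matrix-completion bound of \cite{keshavan2010matrix}, which applies to the noisy low-rank matrix $\mathbf{X}=\mathbf{H}\mathbf{S}$ of rank at most $K$, sampled on $\Omega$ with $\left|\Omega\right|=M N_r\tau_{\rm{c}}$ and corrupted by the channel noise $\mathbf{N}$ of per-entry variance $\sigma^2$. This yields the first two terms of (\ref{equ_XerSVD}): the structural term $\big(L^4 K^2 M N_a^3/(N_r^2\tau_{\rm{c}}^3)\big)^{1/4}$ arising from the rank and the sampling density, and the noise term $\sqrt{K N_a N_r\ln\tau_{\rm{c}}\,\sigma^2}/(N_r\sqrt{\tau_{\rm{c}}})$ arising from $\mathbf{N}$, both carried through the $N_a/N_r$ rescaling and the trimming step.

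The core of the proof is bounding the privacy perturbation term, i.e. the subspace mismatch $\bm{\widehat\Pi}_K-\bm\Pi_K$. Here $\mathbf{\widehat{V}}_K$ consists of the top-$K$ eigenvectors of $\widehat{\mathbb{W}}=\widetilde{\mathbf{Y}}^{\rm{H}}\widetilde{\mathbf{Y}}+\sum_{m=1}^M\mathbf{G}_m$, so the perturbation of the Gram matrix $\widetilde{\mathbf{Y}}^{\rm{H}}\widetilde{\mathbf{Y}}$ is the aggregate Gaussian noise $\mathbf{E}=\sum_{m=1}^M\mathbf{G}_m$, a $\tau_{\rm{c}}\times\tau_{\rm{c}}$ Hermitian matrix with per-entry variance $M\nu^2$. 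First I would invoke random-matrix theory (a Wigner-type tail bound) to get $\left\|\mathbf{E}\right\|_2=O(\nu\sqrt{M\tau_{\rm{c}}})$ with high probability. Next, using the eigengap hypothesis $\lambda_K^2-\lambda_{K+1}^2=\omega(\nu\sqrt{M\tau_{\rm{c}}})$ together with a Davis--Kahan $\sin\Theta$ subspace-perturbation argument as in the ``Analyze Gauss'' analysis of \cite{dwork2014analyze}, I would obtain $\left\|\bm{\widehat\Pi}_K-\bm\Pi_K\right\|_2=O\big(\left\|\mathbf{E}\right\|_2/(\lambda_K^2-\lambda_{K+1}^2)\big)=O\big(\nu\sqrt{M\tau_{\rm{c}}}/\omega(\nu\sqrt{M\tau_{\rm{c}}})\big)$; the hypothesis guarantees this ratio is $o(1)$, so the two subspaces are genuinely close. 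Converting to Frobenius norm via ${\rm{rank}}(\bm{\widehat\Pi}_K-\bm\Pi_K)\le 2K$ introduces a factor $\sqrt{K}$, and bounding $\left\|\widetilde{\mathbf{Y}}\right\|_2\le\left\|\widetilde{\mathbf{Y}}\right\|_{\mathcal{F}}\le\sqrt{M}\,L$ via the per-AP bound $\left\|\widetilde{\mathbf{Y}}_m\right\|_{\mathcal{F}}\le L$ supplies the remaining $\sqrt{M}\,L$. Multiplying by $N_a/N_r$ and normalizing by $1/\sqrt{M N_a\tau_{\rm{c}}}$ collapses exactly to the third term $\sqrt{N_a}L\nu\sqrt{KM\tau_{\rm{c}}}/\big(N_r\sqrt{\tau_{\rm{c}}}\,\omega(\nu\sqrt{M\tau_{\rm{c}}})\big)$.

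The main obstacle is the subspace-perturbation step: Davis--Kahan controls the principal angles only when the eigengap dominates the perturbation, which is precisely why the hypothesis $\lambda_K^2-\lambda_{K+1}^2=\omega(\nu\sqrt{M\tau_{\rm{c}}})$ is imposed, and one must verify that this gap genuinely holds for the finite-sample Gram matrix $\widetilde{\mathbf{Y}}^{\rm{H}}\widetilde{\mathbf{Y}}$ (justified in massive MIMO by the large spectral gap created by $M N_a\gg K$). A secondary subtlety is that the spectral-norm bound on $\mathbf{E}$ and the completion bound of \cite{keshavan2010matrix} each hold only with high probability, so I would take a union bound to combine them, and track constants carefully through the trimming operation and the $N_a/N_r$ rescaling so that the three error contributions remain additive at the claimed orders.
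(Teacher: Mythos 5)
Your proposal is correct and follows essentially the same route as the paper: the identical triangle-inequality decomposition into a non-private trimmed-SVD completion error (bounded via Theorems 1.1 and 1.3 of \cite{keshavan2010matrix}) plus a subspace-perturbation term controlled by the eigengap condition via the Analyze-Gauss-style bound of \cite{dwork2014analyze}, with the same $\sqrt{K}$ Frobenius conversion and the same $\sqrt{M}L$ bound on the trimmed observation matrix. The only cosmetic difference is that you unpack the cited subspace result into its Wigner-norm and Davis--Kahan ingredients, which the paper imports wholesale as Theorem 7 of \cite{dwork2014analyze}.
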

\begin{proof}
Denoting $\bm{\Psi}=\frac{N_a}{N_r}\widetilde{\mathbf{Y}}$, we can write
\begin{equation}
\label{equ_XeSVD}
\begin{aligned}
    \left\|\widehat{\mathbf{X}}-\mathbf{X}\right\|_{\mathcal{F}}&=\left\|\bm{\Psi}\bm{\widehat\Pi}_K-\mathbf{X}\right
    \|_{\mathcal{F}}\\
    &\le \left\|\bm{\Psi}\bm{\Pi}_K-\mathbf{X}\right\|_{\mathcal{F}}+\left\|\bm{\Psi}\bm{\widehat\Pi}_K-\bm{\Psi}\bm{\Pi}_K\right\|_{\mathcal{F}}\\
    &\le \left\|\bm{\Psi}\bm{\Pi}_K-\mathbf{X}\right\|_{\mathcal{F}}+\left\|\bm{\Psi}\right\|_{\mathcal{F}}\left\|\bm{\widehat\Pi}_K-\bm{\Pi}_K\right\|_{\mathcal{F}}.
\end{aligned}
\end{equation}

First, according to Theorem 7 of \cite{dwork2014analyze}, if $\lambda_K^2-\lambda_{K+1}^2=\omega\left(\nu\sqrt{M{\tau_{\rm{c}}}}\right)$, then with high probability
\begin{equation}
    \left\|\bm{\widehat\Pi}_K-\bm{\Pi}_K\right\|_2=O\left(\frac{\nu\sqrt{M{\tau_{\rm{c}}}}}{\omega\left(\nu\sqrt{M{\tau_{\rm{c}}}}\right)}\right),
\end{equation}
and hence
\begin{equation}
\label{equ_pike}
    \left\|\bm{\widehat\Pi}_K-\bm{\Pi}_K\right\|_\mathcal{F}=O\left(\frac{\nu\sqrt{KM{\tau_{\rm{c}}}}}{\omega\left(\nu\sqrt{M{\tau_{\rm{c}}}}\right)}\right).
\end{equation}
Furthermore, we have
\begin{equation}
\label{equ_psie}
    \left\|\bm{\Psi}\right\|_{\mathcal{F}}=\frac{N_a}{N_r}\left\|\widetilde{\mathbf{Y}}\right\|_{\mathcal{F}}\le \frac{N_a}{N_r}\left\|{\mathbf{Y}}\right\|_{\mathcal{F}}\le \frac{N_a}{N_r}\sqrt{M}L.
\end{equation}

Next, using Theorem 1.1 in \cite{keshavan2010matrix}, with high probability, there exist constants $c_0$ and $c_1$ such that
\begin{equation}
\label{equ_psipi}
    \frac{1}{\sqrt{M{N_a}{\tau_{\rm{c}}}}}\left\|\bm{\Psi}\bm{\Pi}_K-\mathbf{X}\right\|_{\mathcal{F}}\le c_0{x_{\rm{max}}}\left(\frac{K^2M{N_a^3}}{{N_r}^{2}{\tau_{\rm{c}}^3}}\right)^{1/4}+c_1\frac{\sqrt{K{N_a}}}{{N_r}\sqrt{M{\tau_{\rm{c}}}}}\left\|\left(\widetilde{\mathbf{N}}\right)_\Omega\right\|_2,
\end{equation}
where $x_{\rm{max}}=\max_{(i,j)}\left|\mathbf{X}\left({i,j}\right)\right|$ and $\left(\widetilde{\mathbf{N}}\right)_\Omega$ denotes the matrix obtained from $\left({\mathbf{N}}\right)_\Omega$ after the trimming step. 
By using the Theorem 1.3 in \cite{keshavan2010matrix}, with high probability, there exists a constant $c_2$ such that
\begin{equation}
\label{equ_wne}   
    \left\|\left(\widetilde{\mathbf{N}}\right)_\Omega\right\|_2\le {c_2}{\sigma}\left({M{N_r}\log{\tau_{\rm{c}}}}\right)^{1/2}.
\end{equation}

Plugging (\ref{equ_pike})-(\ref{equ_wne}) and $x_{\rm{max}}\le L$ into (\ref{equ_XeSVD}), we obtain (\ref{equ_XerSVD}). $\hfill\Box$
\end{proof}

\begin{rem}
Since $L=O\left(\sqrt{\tau_{\rm{d}}}\right)$ and $\nu=O\left(\tau_{\rm{d}}\right)$, the sum of the first two terms in (\ref{equ_XerSVD}) scales with $\tau_{\rm{d}}$ as $O({\tau_{\rm{d}}^{-1/4}})$, and the third term in (\ref{equ_XerSVD}) represents the completion error caused by the perturbation noise, that scales as $O\left(\frac{\tau_{\rm{d}}^{3/2}}{\omega\left({\tau_{\rm{d}}^{3/2}}\right)}\right)$, which dominates the matrix completion error. Hence using (\ref{equ_CEE}) and (\ref{equ_mcem}), we arrive at the following corollary regarding the scaling of the channel estimation error by Algorithm 2 at AP $m$ with respect to the data payload size $\tau_{\rm{d}}$.
\end{rem}

\begin{coro}
For fixed privacy parameters $\epsilon$ and $\delta$, and fixed pilot length $\tau_{\rm{p}}$, the estimation error of the proposed privacy-preserving channel estimator that employs Algorithm 2 scales with the data payload size $\tau_{\rm{d}}$ as
\begin{equation}
\label{equ_ceeO}
    \left\|\widehat{\mathbf{H}}_m-\mathbf{H}_m\right\|_{\mathcal{F}}=O\left(\frac{\tau_{\rm{d}}^{3/2}}{\omega\left({\tau_{\rm{d}}^{3/2}}\right)}\right).
\end{equation}
\end{coro}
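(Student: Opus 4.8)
The plan is to reduce the channel estimation error at AP $m$ to the matrix completion error of Theorem 5 via the two already-established relations (\ref{equ_CEE}) and (\ref{equ_mcem}), and then substitute the $\tau_{\rm d}$-scaling of the parameters $L$ and $\nu$. First I would invoke (\ref{equ_CEE}), which gives $\|\widehat{\mathbf{H}}_m-\mathbf{H}_m\|_{\mathcal{F}}^2\le\|\widehat{\mathbf{X}}_{m,{\rm p}}-\mathbf{X}_{m,{\rm p}}\|_{\mathcal{F}}^2\,\|\mathbf{P}^{\dagger}\|_{\mathcal{F}}^2$, where $\|\mathbf{P}^{\dagger}\|_{\mathcal{F}}^2$ is a constant once the pilot matrix $\mathbf{P}$ (hence $\tau_{\rm p}$) is fixed. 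I would then use the uniform-error assumption (\ref{equ_mcem}) to write $\|\widehat{\mathbf{X}}_{m,{\rm p}}-\mathbf{X}_{m,{\rm p}}\|_{\mathcal{F}}^2=\frac{1}{M}\frac{\tau_{\rm p}}{\tau_{\rm c}}\|\widehat{\mathbf{X}}-\mathbf{X}\|_{\mathcal{F}}^2$, using $\tau_{\rm p}+\tau_{\rm d}=\tau_{\rm c}$. This isolates the full matrix completion error $\|\widehat{\mathbf{X}}-\mathbf{X}\|_{\mathcal{F}}$ as the only $\tau_{\rm d}$-dependent quantity that remains to be controlled.

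Next I would establish the scaling of $\|\widehat{\mathbf{X}}-\mathbf{X}\|_{\mathcal{F}}$ through (\ref{equ_XerSVD}). For fixed $\tau_{\rm p}$ we have $\tau_{\rm c}=\tau_{\rm p}+\tau_{\rm d}=\Theta(\tau_{\rm d})$; moreover (\ref{equ_L}) yields $L=O(\sqrt{\tau_{\rm d}})$ and, since $\nu\propto L^2$, (\ref{equ_nu}) yields $\nu=O(\tau_{\rm d})$. Substituting these into the three terms of (\ref{equ_XerSVD}), the first term scales as $O(\tau_{\rm d}^{-1/4})$ and the second as $\widetilde{O}(\tau_{\rm d}^{-1/2})$, whereas the third (privacy) term, after the cancellation $\sqrt{\tau_{\rm c}}/\sqrt{\tau_{\rm c}}$, reduces to $O\bigl(L\nu/\omega(\nu\sqrt{M\tau_{\rm c}})\bigr)=O\bigl(\tau_{\rm d}^{3/2}/\omega(\tau_{\rm d}^{3/2})\bigr)$, using $L\nu=O(\tau_{\rm d}^{3/2})$ and $\nu\sqrt{M\tau_{\rm c}}=O(\tau_{\rm d}^{3/2})$. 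As recorded in the preceding remark, this privacy term dominates (it decays more slowly than the two inherent-error terms when the spectral gap sits near its threshold order), so the normalized completion error is $\frac{1}{\sqrt{MN_a\tau_{\rm c}}}\|\widehat{\mathbf{X}}-\mathbf{X}\|_{\mathcal{F}}=O\bigl(\tau_{\rm d}^{3/2}/\omega(\tau_{\rm d}^{3/2})\bigr)$.

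Finally I would chain the pieces together. Writing $\|\widehat{\mathbf{X}}-\mathbf{X}\|_{\mathcal{F}}^2=MN_a\tau_{\rm c}\cdot O\bigl((\tau_{\rm d}^{3/2}/\omega(\tau_{\rm d}^{3/2}))^2\bigr)$ and inserting it into the product of (\ref{equ_CEE}) and (\ref{equ_mcem}) gives $\|\widehat{\mathbf{H}}_m-\mathbf{H}_m\|_{\mathcal{F}}^2=O\bigl(\frac{\tau_{\rm p}}{\tau_{\rm c}}N_a\tau_{\rm c}\,(\tau_{\rm d}^{3/2}/\omega(\tau_{\rm d}^{3/2}))^2\bigr)$; the factors $M$ and $\tau_{\rm c}$ cancel, leaving only the fixed quantity $\tau_{\rm p}N_a\|\mathbf{P}^{\dagger}\|_{\mathcal{F}}^2$ times the square of the target rate, and taking square roots produces exactly (\ref{equ_ceeO}). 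The main obstacle, and the step demanding the most care, is the bookkeeping of the $\frac{1}{\sqrt{MN_a\tau_{\rm c}}}$ normalization in Theorem 5: one must carry the $\sqrt{\tau_{\rm c}}$ factor correctly through the conversion to the un-normalized squared error and verify that it is absorbed precisely by the pilot fraction $\tau_{\rm p}/\tau_{\rm c}$, and one must confirm that the privacy term genuinely dominates the inherent-error terms under the hypothesis $\lambda_K^2-\lambda_{K+1}^2=\omega(\nu\sqrt{M\tau_{\rm c}})$ rather than the gap being so large that a faster-decaying term takes over.
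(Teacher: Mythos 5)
Your proposal is correct and follows essentially the same route the paper takes (in the remark preceding the corollary): substitute $L=O(\sqrt{\tau_{\rm d}})$, $\nu=O(\tau_{\rm d})$, and $\tau_{\rm c}=\Theta(\tau_{\rm d})$ into the three terms of (\ref{equ_XerSVD}), identify the privacy term $O(\tau_{\rm d}^{3/2}/\omega(\tau_{\rm d}^{3/2}))$ as dominant, and then convert to the channel estimation error via (\ref{equ_CEE}) and (\ref{equ_mcem}). Your bookkeeping of the $\sqrt{MN_a\tau_{\rm c}}$ normalization, which the paper leaves implicit, is carried through correctly.
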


In summary, we see that for both Algorithms 1 and 2, the channel estimator error consists of a privacy independent component, that is due to the channel noise and matrix completion error, and a privacy-induced component, that is due to the perturbation noise. A higher privacy level leads to a higher privacy-induced channel estimation error, and vice versa. As the payload size increases, both error components decrease. However, for Algorithm 1, the channel estimation error is dominated by the privacy-independent component; whereas for Algorithm 2, it is dominated by the privacy-induced component.

\section{SIMULATION RESULTS}
\subsection{Simulation Setup}
We consider a cell-free massive MIMO system covering a hexagonal region with radius $R=1{\rm{km}}$, where APs and users are randomly and uniformly distributed. The channel model in (\ref{equ_channel}) is adopted to generate channel matrices $\mathbf{H}_m$ with the large-scale fading $\beta_{k,m}$ modeled as
\begin{equation}
    \beta_{k,m}=10^{-\frac{{\rm{PL}}\left(d_{k,m}\right)+{\sigma_{\rm{sh}}{z_{k,m}}}}{10}},
\end{equation}
where ${\rm{PL}}\left(d_{k,m}\right)\left({\rm{dB}}\right)$ is the path loss between AP $m$ and user $k$ with distance $d_{k,m}$; $\sigma_{\rm{sh}}\left({\rm{dB}}\right)$ is the standard deviation of shadow fading and ${z_{k,m}}\sim\mathcal{N}_{\rm{c}}\left(0,1\right)$.  $\tau_{\rm{p}}=K$ orthonormal pilot sequences are used, resulting in an orthonormal pilot matrix  $\mathbf{P}$. Data symbols are independently drawn from the QPSK constellation with unit average power. We consider two settings of the number of users: $K=5$ and $K=25$.
All simulation parameters are shown in Table I. The channel estimation performance is evaluated by the normalized mean squared error (NMSE) defined as
\begin{equation}
    {\rm{NMSE}}=\mathbb{E}\left\{\frac{\left\|\widehat{\mathbf{H}}-\mathbf{H}\right\|_{\mathcal{F}}^2}{\left\|\mathbf{H}\right\|_{\mathcal{F}}^2}\right\}.
\end{equation}
The data detection performance is evaluated by symbol error rate (SER). Both NMSE and SER are obtained through Monte-Carlo simulations with fixed large-scale fadings $\{\beta_{k,m}\}$ and a minimum of 500 independent fast channel realizations $\{\mathbf{g}_{k,m}\}$.
\begin{table}
  \caption{Basic Simulation Parameters}
  \centering
  \begin{tabular}{|c|c|c|}
  \hline
  Parameter & Meaning & Value\\
  \hline
  $M$   & The number of APs & 100\\
  \hline
  $K$   & The number of users & 5, 25\\
  \hline
  $N_a$ & The number of antennas at each AP & 4\\
  \hline
  $N_r$ & The number of RF chains at each AP & 2\\
  \hline
  $\sigma_{\rm{sh}}$  & The standard deviation of shadow fading & 8 dB\\
  \hline
  ${\rm{PL}}\left(d\right)$ & The path loss with distance $d\left({\rm{m}}\right)$ & $36.8+36.7\log_{10}\left(d\right)$ dB\\
  \hline
  $\sigma^2$ & The variance of received noise sample & $10^{-13}$ Watts\\
  \hline
\end{tabular}
\end{table}

In Algorithm 1, we approximate the rank constraint ${\rm{rank}}\left(\mathbf{X}\right)\le K$ with the nuclear norm constraint $\left\|\mathbf{X}\right\|_{\rm{nuc}}\le K$. However, the true $\left\|\mathbf{X}\right\|_{\rm{nuc}}$ is far smaller than $K$ due to the large-scale fading. Hence, in our implementation, we replace the rank bound $K$ in (\ref{equ_XmFWPP}) with an appropriate nuclear norm bound $r$.
According to (\ref{equ_Hmap}) and (\ref{equ_Sap}), for massive MIMO, we have
\begin{equation}
    \left\|\mathbf{X}\right\|_{\mathcal{F}}=\left\|\mathbf{H}\mathbf{S}\right\|_{\mathcal{F}}\le \left\|\mathbf{H}\right\|_{\mathcal{F}}\left\|\mathbf{S}\right\|_{\mathcal{F}}=\sqrt{K{\tau_{\rm{c}}}{N_a}\sum\nolimits_{m=1}^M\sum\nolimits_{k=1}^K{\beta_{k,m}}}.
\end{equation}
Since $\left\|\mathbf{X}\right\|_{\rm{nuc}}\le \sqrt{{\rm{rank}}\left(\mathbf{X}\right)}\left\|\mathbf{X}\right\|_{\mathcal{F}}$\cite{XPengconnec}, we can bound $\left\|\mathbf{X}\right\|_{\rm{nuc}}$ by
\begin{equation}
\label{equ_rbound}
    r=\sqrt{K^2{\tau_{\rm{c}}}{N_a}\sum\nolimits_{m=1}^M\sum\nolimits_{k=1}^K{\beta_{k,m}}}.
\end{equation}
For $K=5$, we choose $r\in[0.001,0.01]$ by cross-validation, e.g., we uniformly choose 10 values of $r$ and run Algorithm 1 using them. The $r$ value that has the lowest NMSE is then chosen for the simulations. 
For $K=25$, we choose $r\in[0.01,0.1]$ by cross-validation. Moreover, for the number of iterations $T$, we choose  $T\in \left\{K,2K,\cdots,5K\right\}$ by cross-validation. 

For comparison, we consider the following three channel estimators:

(1) Non-private FW (NPFW): To show the performance upper bound of Algorithm 1 when privacy is not considered, we set $\mu=0$ and $T=200$.

(2) Non-private SVD (NPSVD): To show the performance upper bound of Algorithm 2 when privacy is not considered, we set $\nu=0$.

(3)  Pilot-only (PO): We also consider the pilot-only method, where each AP $m$ estimates its channel matrix $\mathbf{H}_m$ locally based on its received pilots $\{\mathbf{Y}_m(:, t), t \in {\cal T}_p \}$ only. 
Specifically, each AP $m$ first computes its least squares (LS) estimate as
\begin{equation}
   \widehat{\mathbf{H}}_m=\mathbf{Y}_m(:, 1:\tau_{\rm{p}}){\mathbf{P}}^{\rm{H}}.
\end{equation}
Then the local linear minimum mean-squared error (LMMSE) estimate of data symbols ${\mathbf{D}}_m$ is computed as follows. Denoting $\mathbf{y}_m[t]$ as the $N_r$-dimensional vector consisting of the non-zero elements of $\mathbf{Y}_m\left(:,t\right)$. Then from (\ref{equ_YmR}) we can write $\mathbf{y}_m\left[t\right]=\mathbf{C}_m\left[t\right]\mathbf{r}_m\left[t\right]$, where $\mathbf{C}_m\left[t\right]\in\mathbb{C}^{{N_r}\times{N_a}}$ and $\mathbf{C}_m\left[t\right]\left(i,j\right)=1$ if the $j$-th antenna is connected to the $i$-th RF chain, and it is 0 otherwise. Then LMMSE estimate is given by
\begin{equation}
    \widehat{\mathbf{D}}_m\left[t\right]=\left(\mathbf{F}_m^{\rm{H}}\mathbf{F}_m+\sigma^2{\mathbf{I}_{K}}\right)^{-1}\mathbf{F}_m^{\rm{H}}\mathbf{y}_m\left[t\right], \tau_{\rm{p}}+1\le t\le \tau_{\rm{c}},
\end{equation}
with $\mathbf{F}_m=\mathbf{C}_m\left[t\right]\widehat{\mathbf{H}}_m$. At last, $\left\{\widehat{\mathbf{D}}_m\right\}$ are sent to the CPU which performs data detection based on the combined statistic according to (\ref{equ_dtc}). Since the PO method does not send any private signal to the CPU, it is perfectly privacy-preserving.

\subsection{Results}

Fig. \ref{fig:vsEpsNMSEK5} and \ref{fig:vsEpsSERK5} respectively show the NMSE performance of channel estimation and the corresponding SER performance of data detection versus the privacy parameter $\epsilon$ with $K=5$, $\tau_{\rm{c}}=100$, and $\delta=0.1$ for different methods. It can be seen that the performance of both Algorithm 1 and 2 improve as $\epsilon$ increases, which means the privacy level degrades. Algorithm 1 significantly outperforms Algorithm 2 under both private and non-private cases. Moreover, despite of the added perturbation noise to achieve privacy, both algorithms outperform the PO method, by exploiting the received data payload signal. 

Fig. \ref{fig:vsTCNMSEK5} and \ref{fig:vsTCSERK5} respectively show the NMSE performance of channel estimation and the corresponding SER performance of data detection versus the payload size $\tau_{\rm{d}}$ with $K=5$, fixed privacy parameter $\epsilon=1$ and $\delta=0.1$ for different methods. Since the PO method makes use of the received pilot signal only for channel estimation, its performance remains the same as $\tau_{\rm{d}}$ increases. On the contrary, the performances of Algorithm 1 and 2 improve as $\tau_{\rm{d}}$ increases. Hence, the proposed methods can utilize the received data payload signal to improve the accuracy of channel estimation and data detection, while maintaining the same privacy level at the same time. Algorithm 1 significantly outperforms Algorithm 2 under both private and non-private cases for all payload size.

For $K=25$, we also show the performances of five methods versus privacy parameter $\epsilon$ and payload size $\tau_d$ in Fig.
\ref{fig:vsEpsK25} and \ref{fig:vsTCK25}, respectively. The performances of both Algorithm 1 and Algorithm 2 still improve as $\epsilon$ or $\tau_{\rm{d}}$ increases. Algorithm 1 still significantly outperforms Algorithm 2 and the PO method for all considered $\epsilon$ and $\tau_{\rm{d}}$. However, Algorithm 2 is less effective when the number of users is high.

\begin{figure}[H]
\centering
\subfigure{
\label{fig:vsEpsNMSEK5}
\includegraphics[width=0.45\textwidth]{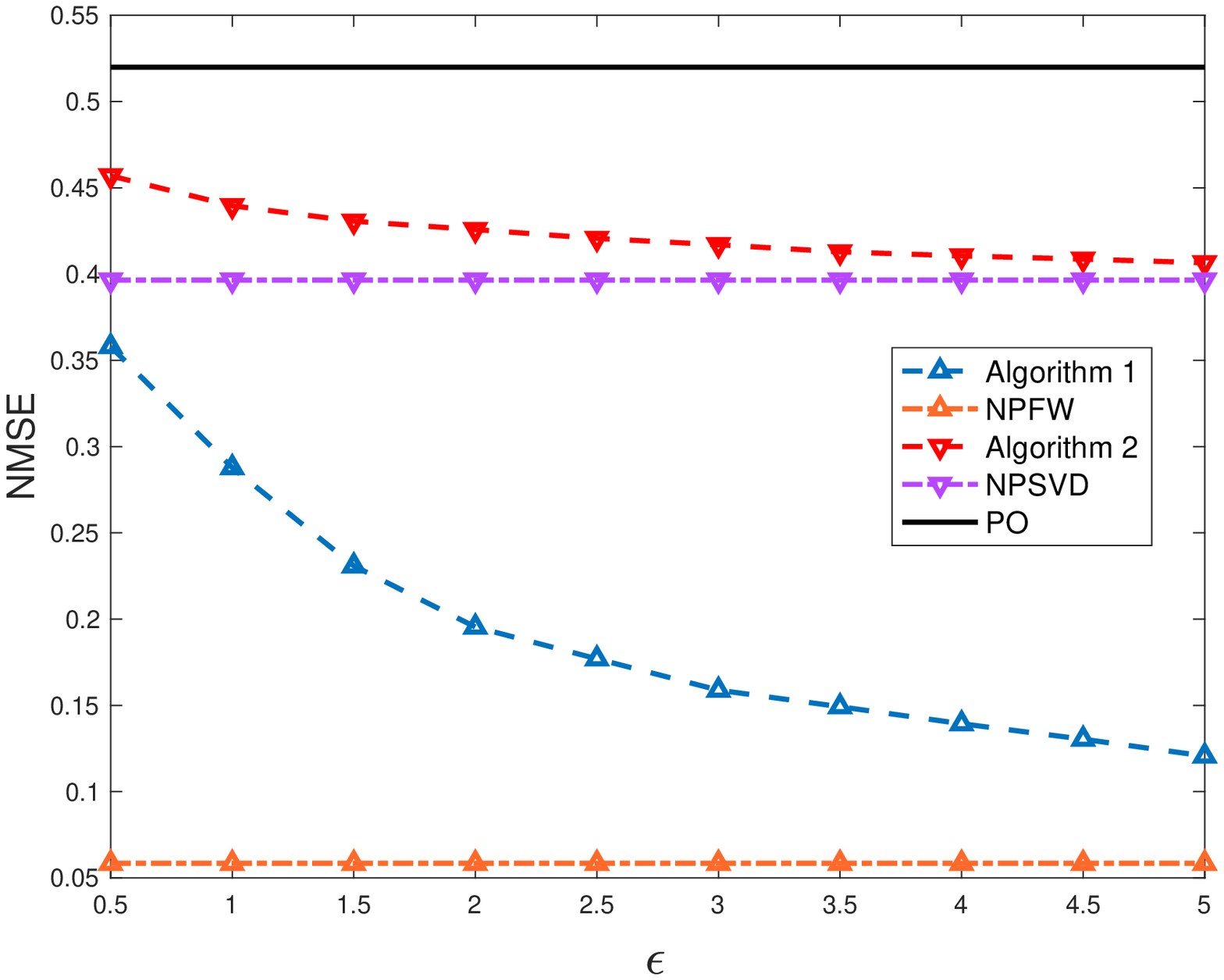}}
\subfigure{
\label{fig:vsEpsSERK5}
\includegraphics[width=0.45\textwidth]{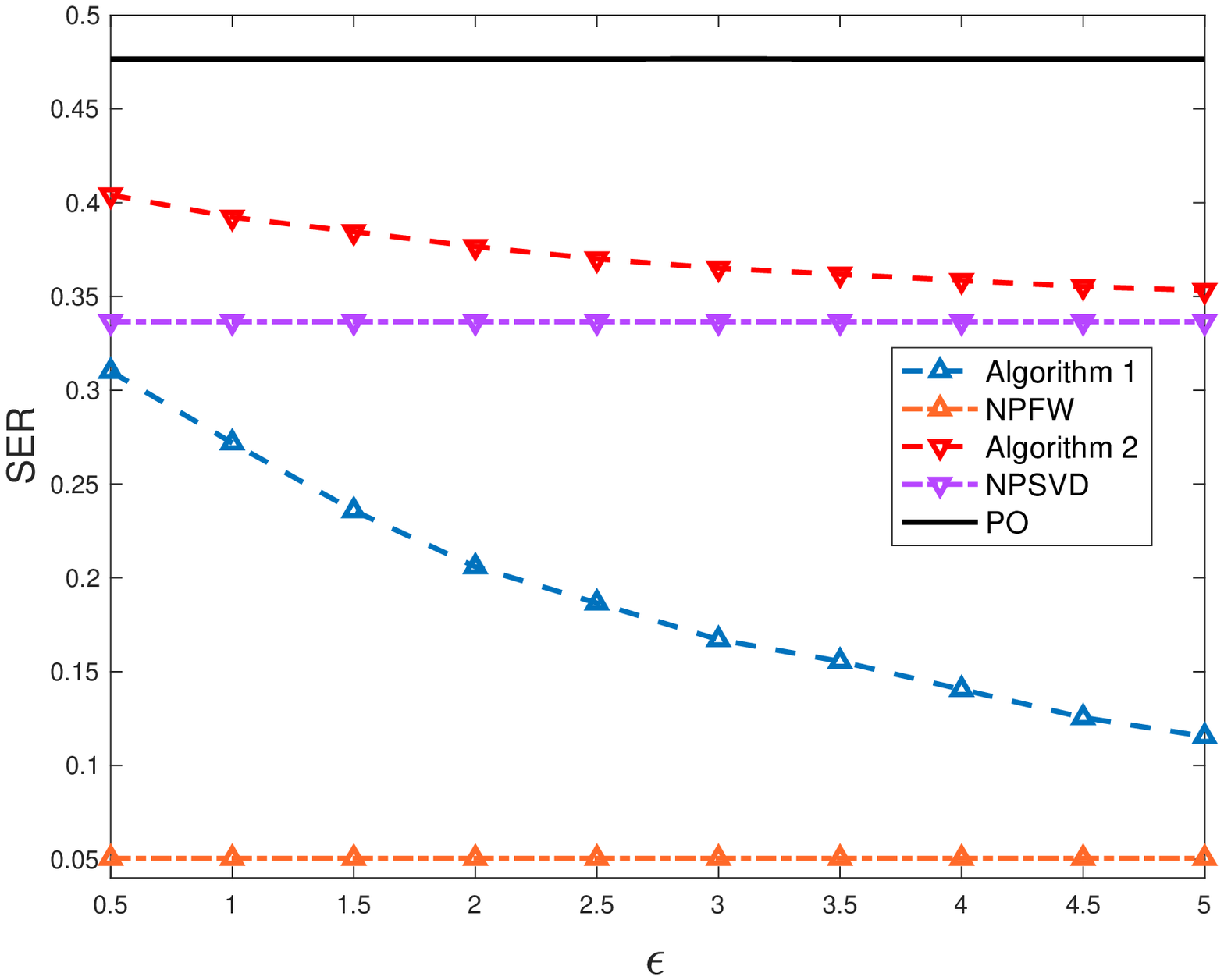}}
\caption{The performance versus $\epsilon$ with $K=5$, $\tau_{\rm{c}}=100$ and $\delta=0.1$. (a) NMSE of channel estimation. (b) SER of data detection.}
\label{fig:vsEpsK5}
\end{figure}

\begin{figure}[H]
\centering
\subfigure{
\label{fig:vsTCNMSEK5}
\includegraphics[width=0.45\textwidth]{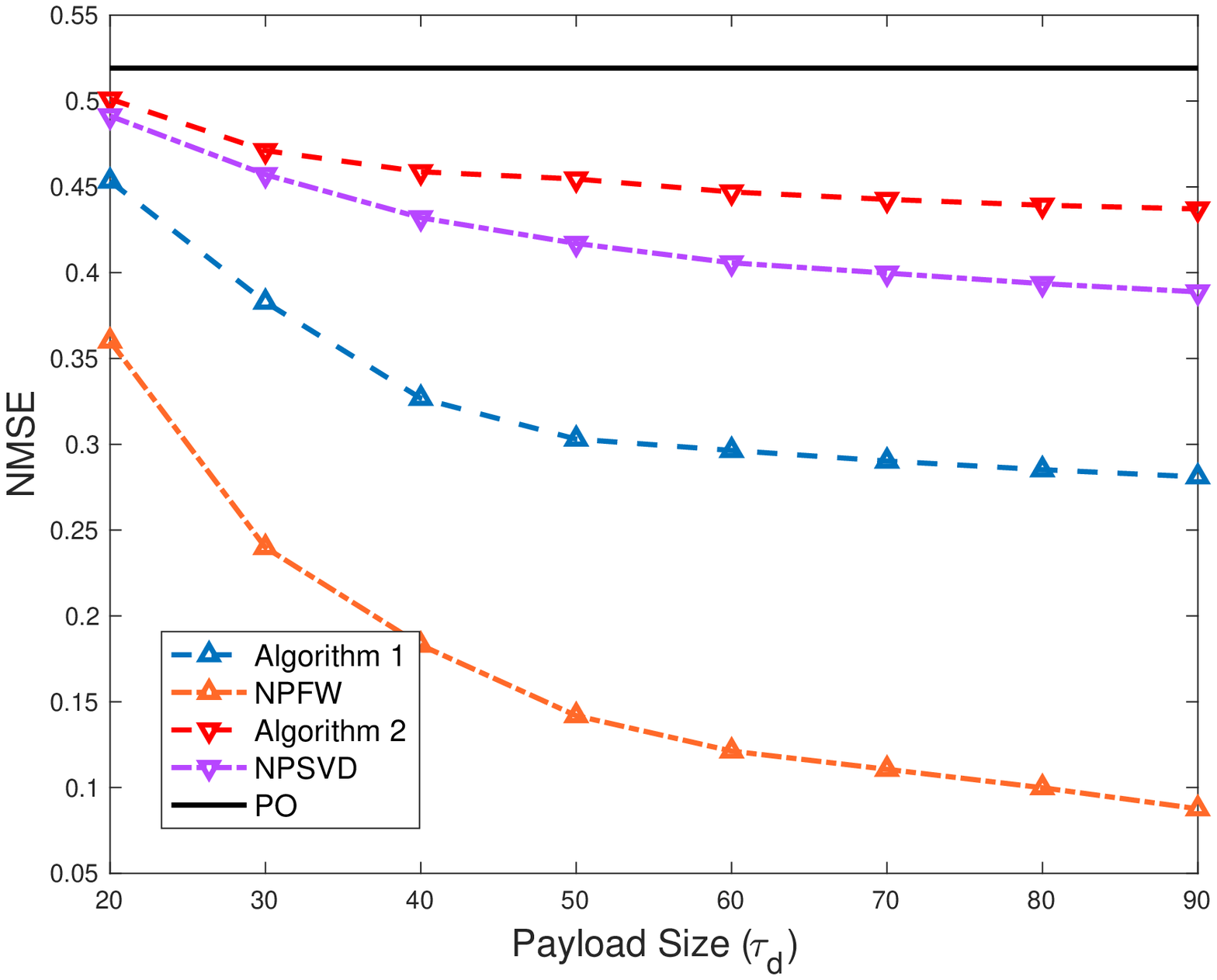}}
\subfigure{
\label{fig:vsTCSERK5}
\includegraphics[width=0.45\textwidth]{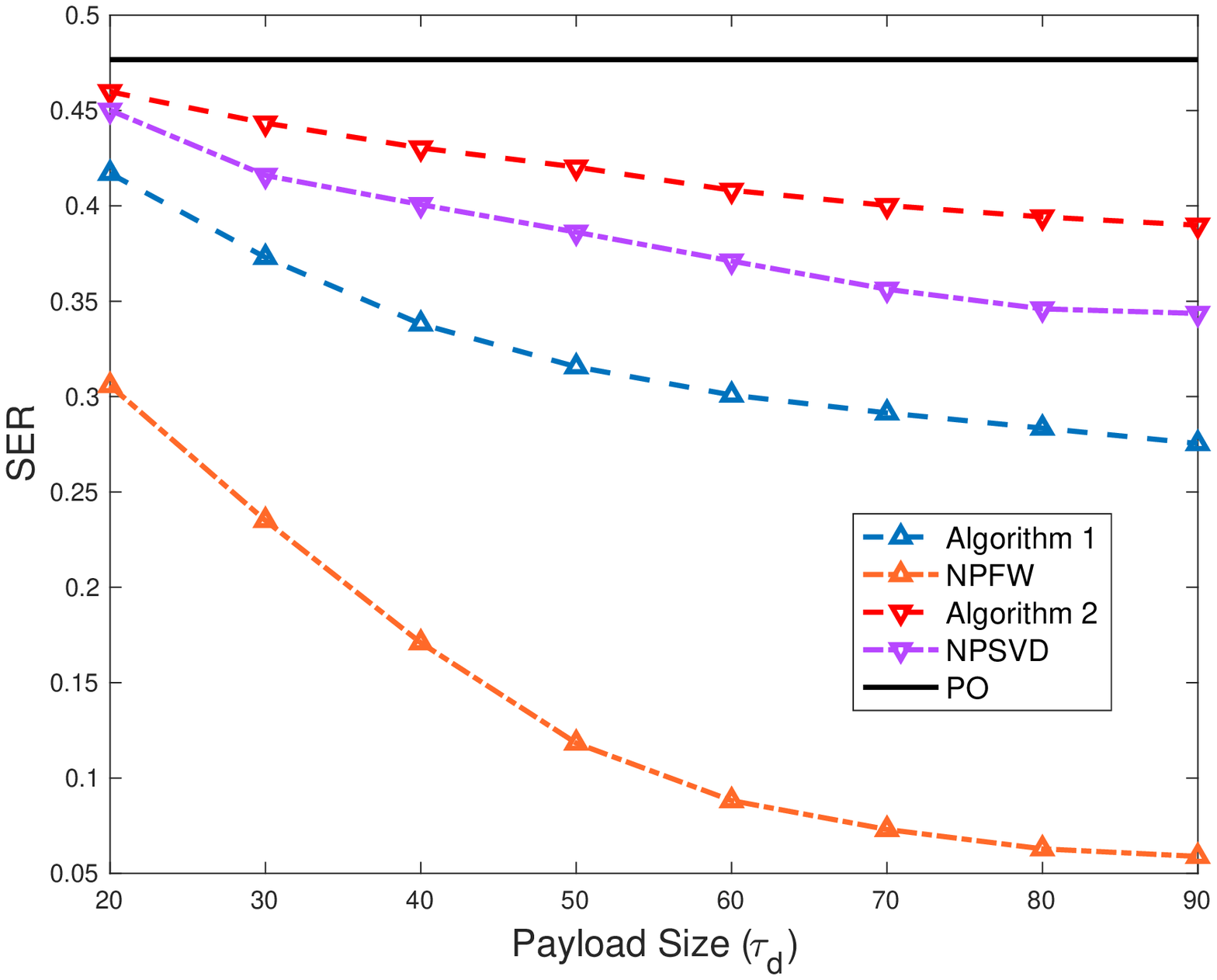}}
\caption{The performance versus payload size $\tau_{\rm{d}}$ with $K=5$, $\epsilon=1$ and $\delta=0.1$. (a) NMSE of channel estimation. (b) SER of data detection.}
\label{fig:vsTCK5}
\end{figure}

\begin{figure}[H]
\centering
\subfigure{
\label{fig:vsEpsNMSEK25}
\includegraphics[width=0.45\textwidth]{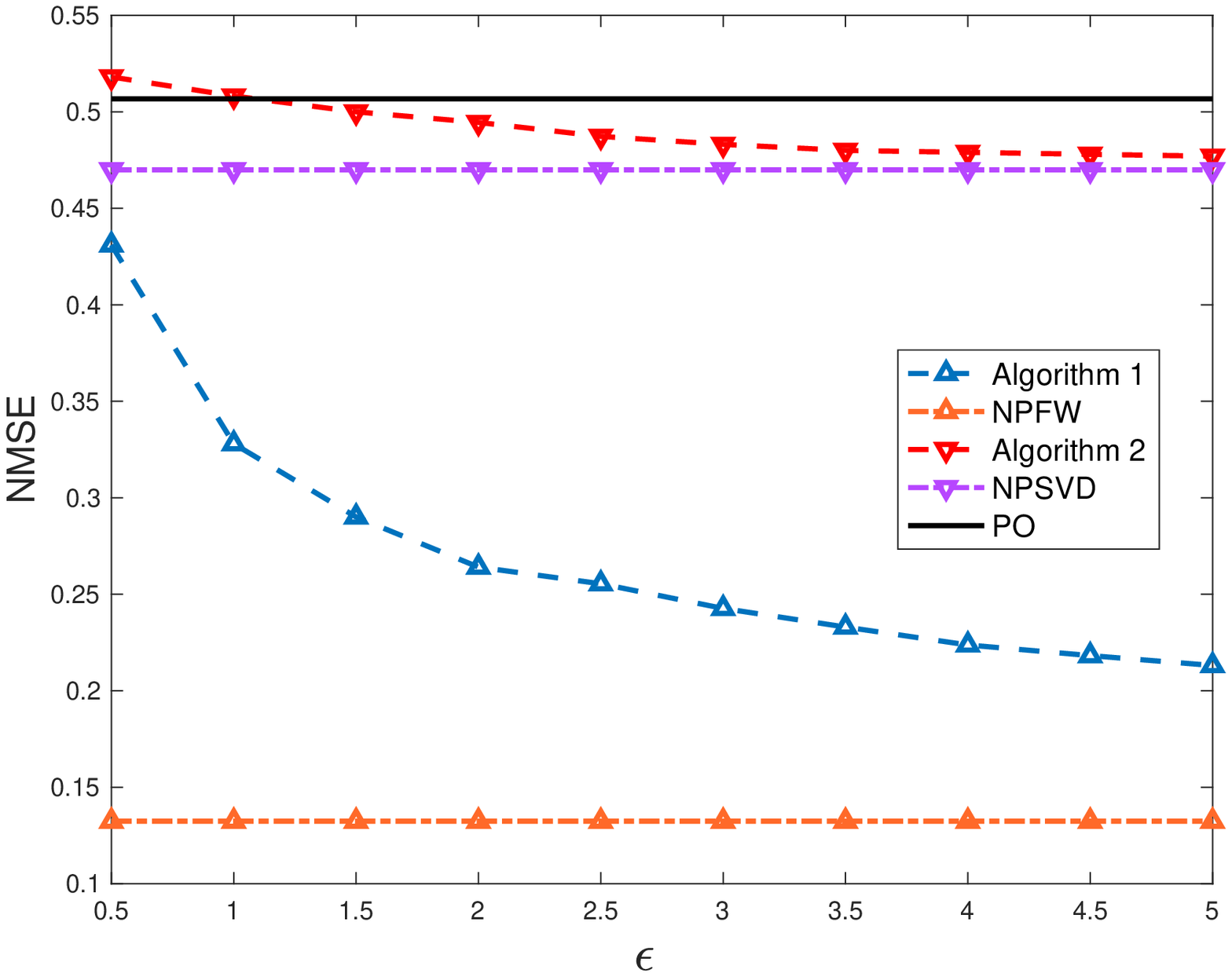}}
\subfigure{
\label{fig:vsEpsSERK25}
\includegraphics[width=0.45\textwidth]{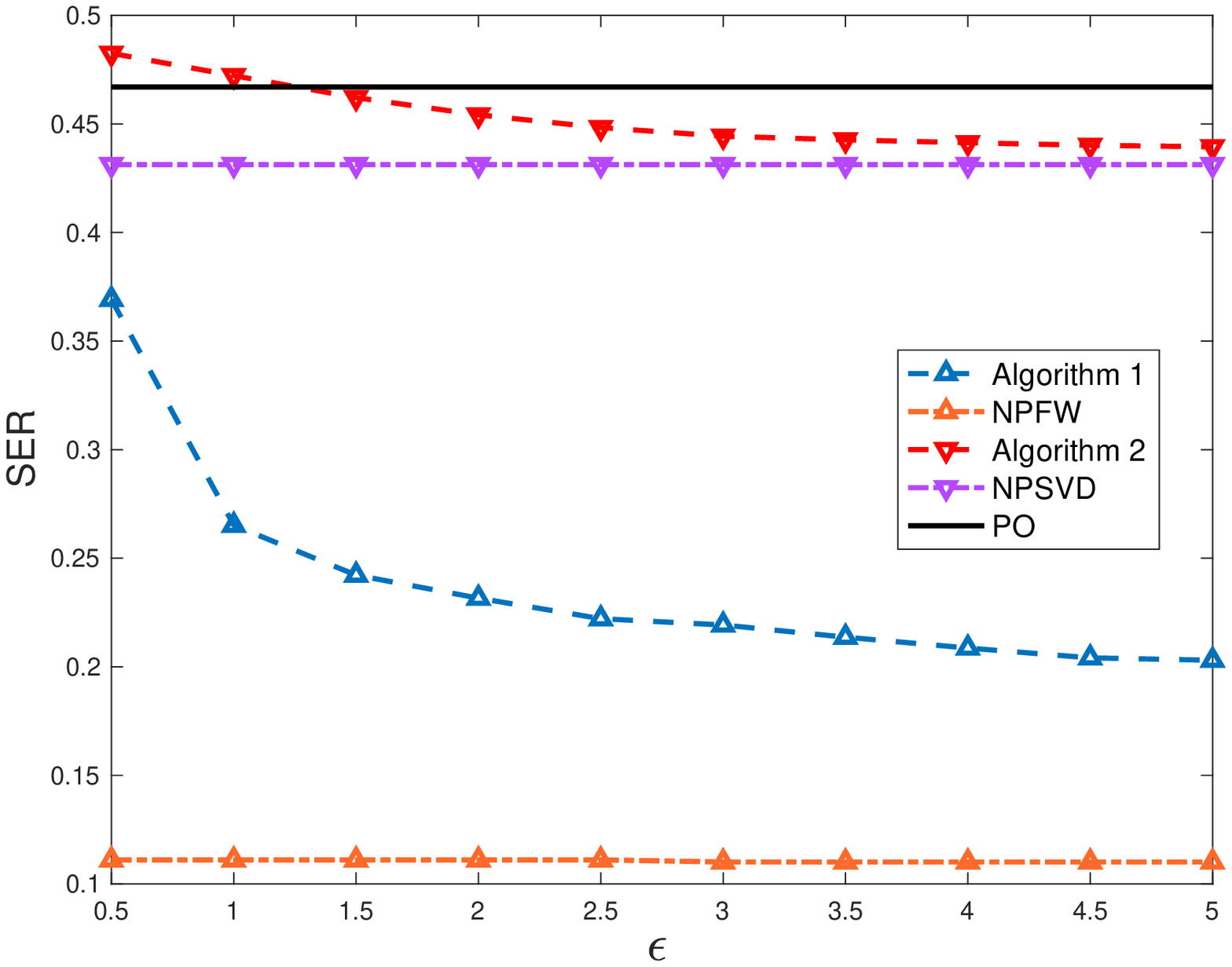}}
\caption{The performance versus $\epsilon$ with $K=25$, $\tau_{\rm{c}}=100$ and $\delta=0.1$. (a) NMSE of channel estimation. (b) SER of data detection.}
\label{fig:vsEpsK25}
\end{figure}

\section{Conclusions}
This paper considers a cell-free hybrid massive MIMO system, where the number of users is typically much smaller than the total number of antennas. Efficient uplink channel estimation and data detection with reduced number of pilots can be performed based on low-rank matrix completion. However, such a scheme requires the CPU to collect received signals from all APs, which may enable the CPU to infer the private information of user locations. To solve this problem, we develop and analyze privacy-preserving channel estimation schemes under the framework of differential privacy. The key ingredient of such a channel estimator is a joint differentially private noisy matrix completion algorithm, which consists of a global component implemented at the CPU and local components implemented at APs. Two joint differentially private channel estimators based respectively on FW and SVD are proposed and analyzed. In particular, we have shown that for both algorithms the estimation error can be mitigated while maintaining the same privacy level by increasing the payload size with fixed pilot size; and the scaling laws of both the privacy-induced and privacy-independent error components in terms of payload size are characterized. Simulation results corroborate the theoretical analysis and clearly demonstrate the tradeoff between privacy and channel estimation performance.

\begin{figure}[H]
\centering
\subfigure{
\label{fig:vsTCNMSEK25}
\includegraphics[width=0.45\textwidth]{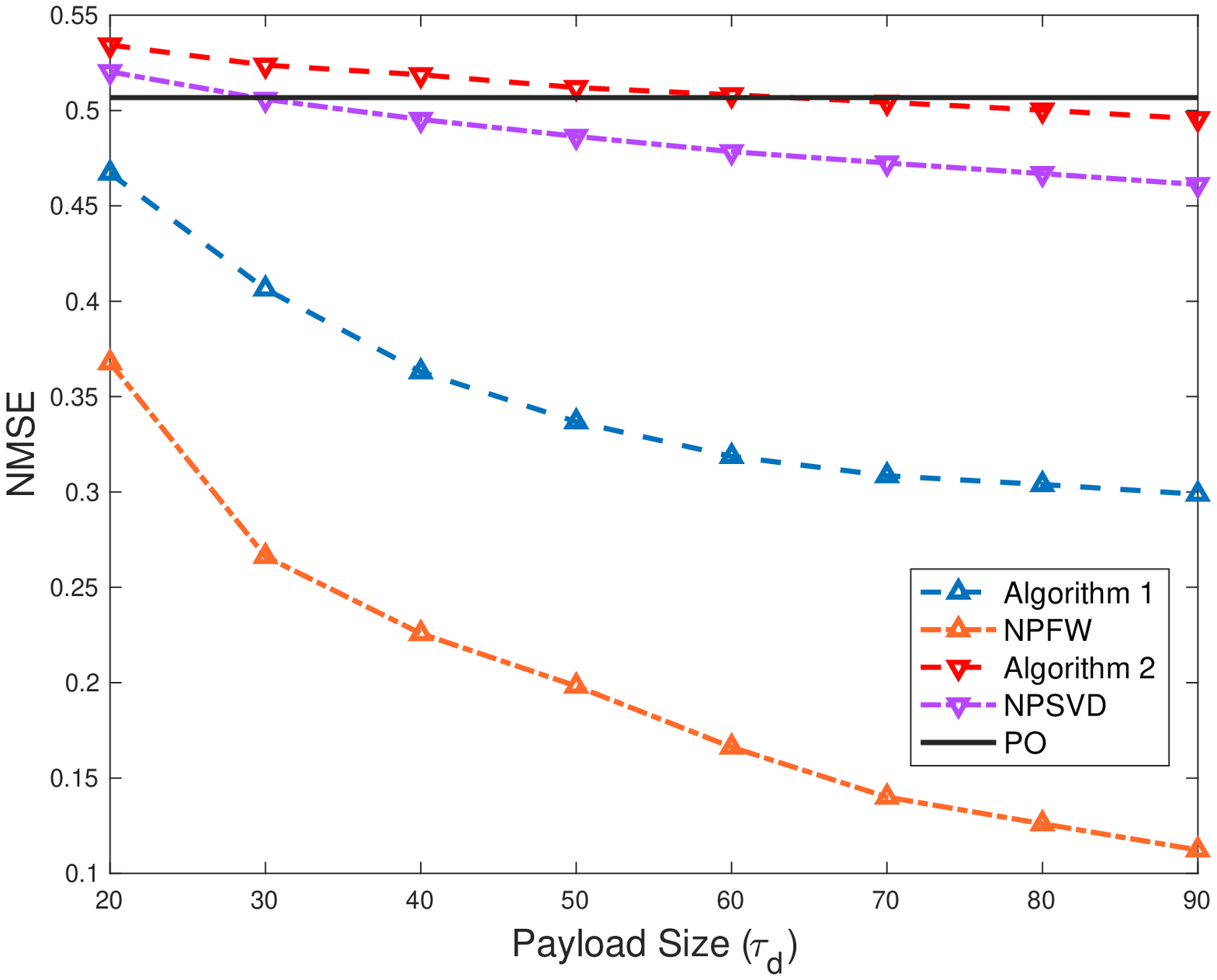}}
\subfigure{
\label{fig:vsTCSERK25}
\includegraphics[width=0.45\textwidth]{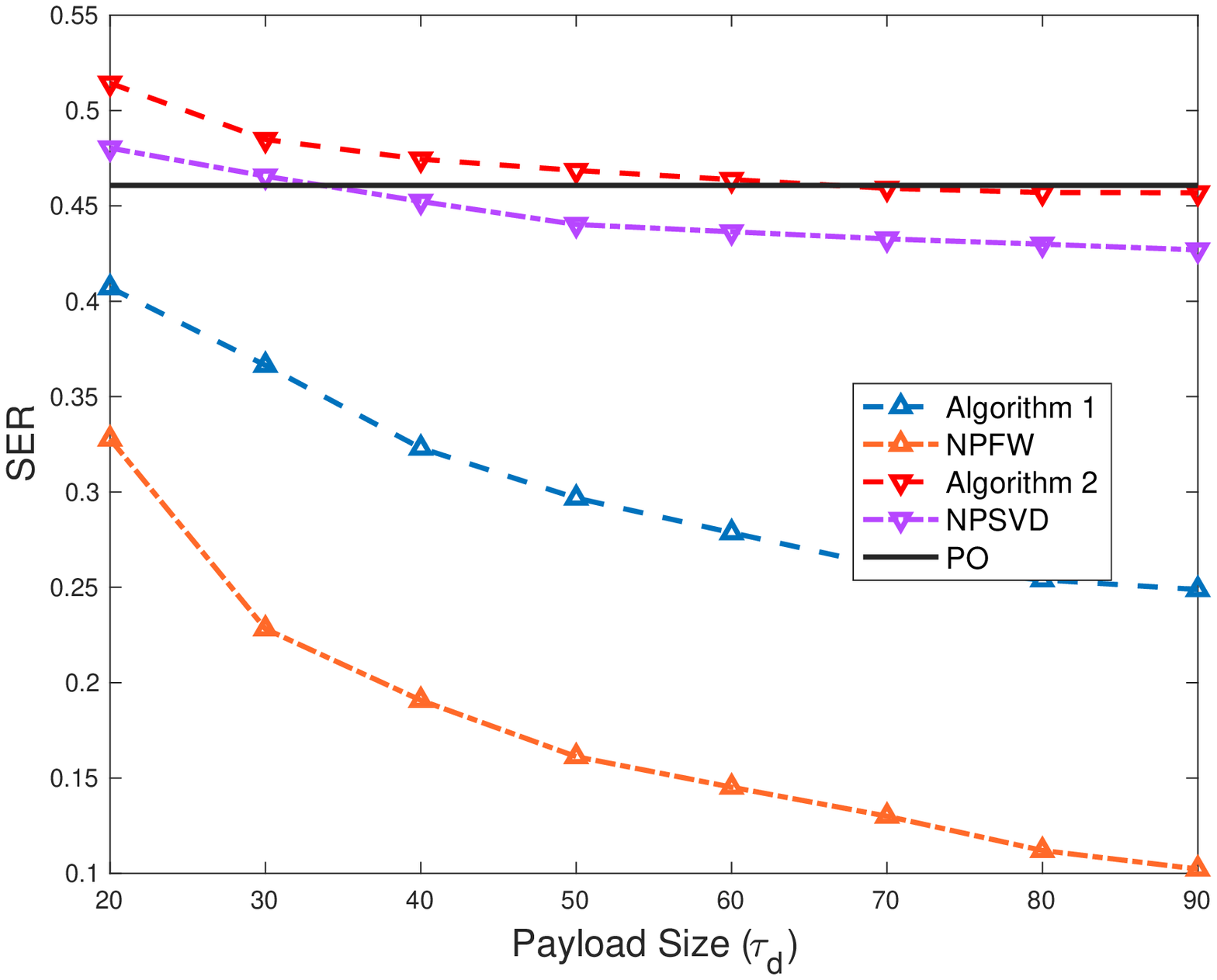}}
\caption{The performance versus payload size $\tau_{\rm{d}}$ with $K=25$, $\epsilon=1$ and $\delta=0.1$. (a) NMSE of channel estimation. (b) SER of data detection.}
\label{fig:vsTCK25}
\end{figure}

\appendix
\appendices

\subsection{Proof of Lemma \ref{lem_ugFW}}
Define a function $\Gamma_{\Omega}\left(\mathbf{X}\right)=\frac{1}{2\left|\Omega\right|}\left\|\left(\mathbf{X}\right)_{\Omega}-\mathbf{Y}\right\|_{\mathcal{F}}^2$ on the feasible set $\mathcal{D}:\left\|\mathbf{X}\right\|_{\rm{nuc}}\le K$. The curvature parameter $C_{\Gamma}$ of the above function can be defined as
\begin{equation}
\label{equ_Cgamma}
    C_{\Gamma}=\max_{\mathbf{X}_a,\mathbf{S}\in\mathcal{D}}\frac{2}{\kappa^2}\left(\Gamma_{\Omega}\left(\mathbf{X}_b\right)-\Gamma_{\Omega}\left(\mathbf{X}_a\right)-\left<\mathbf{X}_b-\mathbf{X}_a, \nabla\Gamma_{\Omega}\left(\mathbf{X}_a\right)\right>_{\mathcal{F}}\right),
\end{equation}
where $\kappa\in[0,1]$; $\mathbf{X}_b=\mathbf{X}_a+\kappa\left(\mathbf{S}-\mathbf{X}_a\right)$; $\nabla\Gamma_{\Omega}\left(\mathbf{X}_a\right)=\frac{1}{\left|\Omega\right|}\left(\left(\mathbf{X}_a\right)_{\Omega}-\mathbf{Y}\right)$ is the gradient of $\Gamma_{\Omega}\left(\mathbf{X}\right)$ at $\mathbf{X}_a$. It then follows from the definition in (\ref{equ_Cgamma}) that for any $\mathbf{X}_a$ and $\mathbf{S}$
\begin{equation}
\label{equ_sf}
    \Gamma_{\Omega}\left(\mathbf{X}_b\right)\le \Gamma_{\Omega}\left(\mathbf{X}_a\right)+\left<\mathbf{X}_b-\mathbf{X}_a, \nabla\Gamma_{\Omega}\left(\mathbf{X}_a\right)\right>_{\mathcal{F}}+\frac{C_\Gamma{\kappa^2}}{2}.
\end{equation}

According to (\ref{equ_XupW}), we have
\begin{equation}
    \mathbf{X}^{(n)}=\mathbf{X}^{(n-1)}+\eta^{(n)}\left(\mathbf{W}^{(n)}-\mathbf{X}^{(n-1)}\right).
\end{equation}
Recall that $\left\|{\mathbf{W}}^{(n)}\right\|_{\rm{nuc}}\le K, \forall n$, $\eta^{(1)}=1$ and $\eta^{(n)}=\frac{1}{T}, n=2,\cdots,T$, thus we have $\left\|{\mathbf{X}}^{(n)}\right\|_{\rm{nuc}}\le K, \forall n$. By letting $\mathbf{X}_b=\mathbf{X}^{(n)}$, $\mathbf{X}_a=\mathbf{X}^{(n-1)}$, $\mathbf{S}=\mathbf{W}^{(n)}$ and $\kappa=\eta^{(n)}$ in (\ref{equ_sf}), we have 
\begin{equation}
\label{equ_Gammayn}
\begin{aligned}
    \Gamma_{\Omega}\left(\mathbf{X}^{(n)}\right)\le \Gamma_{\Omega}\left(\mathbf{X}^{(n-1)}\right)+\eta^{(n)}\left<\mathbf{W}^{(n)}-\mathbf{X}^{(n-1)}, \nabla\Gamma_{\Omega}\left(\mathbf{X}^{(n-1)}\right)\right>_{\mathcal{F}}+\frac{C_\Gamma{\left(\eta^{(n)}\right)^2}}{2}.
\end{aligned}
\end{equation}

Note that $\nabla\Gamma_{\Omega}\left(\mathbf{X}^{(n-1)}\right)=\frac{1}{\left|\Omega\right|}\mathbf{J}^{(n-1)}$. If $\mathbf{W}^{(n)}$ satisfies (\ref{equ_gammaC}), we have 
\begin{equation}
\label{equ_gammaapprox}
    \left<\mathbf{W}^{(n)}-\mathbf{X}^{(n-1)},\nabla\Gamma_{\Omega}\left(\mathbf{X}^{(n-1)}\right)\right>_{\mathcal{F}}\le \left<\mathbf{O}^{(n)}-\mathbf{X}^{(n-1)},\nabla\Gamma_{\Omega}\left(\mathbf{X}^{(n-1)}\right)\right>_{\mathcal{F}}+\gamma.
\end{equation}
According to \cite{jain2017differentially}, $\mathbf{O}^{(n)}=\arg\min_{\left\|\mathbf{O}\right\|_{\rm{nuc}}\le K} \left<\mathbf{O},\nabla{\Gamma_\Omega}\left(\mathbf{X}^{(n-1)}\right)\right>_{\mathcal{F}}$. Therefore, we have $\mathbf{O}^{(n)}=\arg\max_{\left\|\mathbf{O}\right\|_{\rm{nuc}}\le K}\left<\mathbf{X}^{(n-1)}-\mathbf{O},\nabla{\Gamma_\Omega}\left(\mathbf{X}^{(n-1)}\right)\right>_{\mathcal{F}}$. We define $\Upsilon\left(\Theta\right) =\Gamma_\Omega\left(\Theta\right)-\Gamma_\Omega\left(\widehat{\mathbf{X}}\right)$, where $\widehat{\mathbf{X}}$ is given by (\ref{equ_Pnorm}). The convexity of $\Gamma_{\Omega}\left(\mathbf{X}\right)$ implies \cite{jaggi2013revisiting}
\begin{equation}
\label{equ_conve}
    \left<\mathbf{X}^{(n-1)}-\mathbf{O}^{(n)},\nabla{\Gamma_\Omega}\left(\mathbf{X}^{(n-1)}\right)\right>_{\mathcal{F}}\ge \Upsilon\left(\mathbf{X}^{(n-1)}\right).
\end{equation}
Plugging (\ref{equ_gammaapprox}) and (\ref{equ_conve}) into (\ref{equ_Gammayn}), we have
\begin{equation}
\label{equ_Gammaynr}
\begin{aligned}
    \Gamma_{\Omega}\left(\mathbf{X}^{(n)}\right)\le \Gamma_{\Omega}\left(\mathbf{X}^{(n-1)}\right)-\eta^{(n)} \Upsilon\left(\mathbf{X}^{(n-1)}\right)+\eta^{(n)}\left(\gamma+\frac{C_\Gamma{\eta^{(n)}}}{2}\right).
\end{aligned}
\end{equation}
Letting $n=T$ and subtracting $\Gamma_\Omega\left(\widehat{\mathbf{X}}\right)$ from both sides, we have
\begin{equation}
\label{equ_xrb}
\begin{aligned}
    \Upsilon\left(\mathbf{X}^{(T)}\right)&\le \Upsilon\left(\mathbf{X}^{(T-1)}\right)-\eta^{(T)}\Upsilon\left(\mathbf{X}^{(T-1)}\right)+\eta^{(T)}\left(\gamma+\frac{C_{\Gamma}\eta^{(T)}}{2}\right)\\
    &= \left(1-\eta^{(T)}\right)\Upsilon\left(\mathbf{X}^{(T-1)}\right)+\eta^{(T)}\left(\gamma+\frac{C_{\Gamma}\eta^{(T)}}{2}\right)\\
    &= \sum\limits_{n=T}^1{\left(\prod_{j=n+1}^{T}\left(1-\eta^{(j)}\right)\right)\eta^{(n)}\left(\gamma+\frac{C_{\Gamma}\eta^{(n)}}{2}\right)}+\prod_{n=1}^{T}\left(1-\eta^{(n)}\right)\Upsilon\left(\mathbf{X}^{(0)}\right).
\end{aligned}
\end{equation}
Recall that $\eta^{(1)}=1$ and $\eta^{(n)}=\frac{1}{T}, n=2,\cdots,T$, thus we have $\prod_{n=1}^{T}\left(1-\eta^{(n)}\right)=0$ and $0\le\prod_{j=n+1}^{T}\left(1-\eta^{(j)}\right)\le 1, n=1,\cdots, T$. Then (\ref{equ_xrb}) can be written as
\begin{equation}
    \Upsilon\left(\mathbf{X}^{(T)}\right)\le \sum\limits_{n=1}^T\eta^{(n)}\left(\gamma+\frac{C_{\Gamma}\eta^{(n)}}{2}\right).
\end{equation}
Recall that $\Upsilon\left(\mathbf{X}^{(T)}\right)=\Gamma_\Omega\left(\mathbf{X}^{(T)}\right)-\Gamma_\Omega\left(\widehat{\mathbf{X}}\right)$, we have
\begin{equation}
\label{equ_gammaxk}
\begin{aligned}
    \Gamma_\Omega\left(\mathbf{X}^{(T)}\right)&\le \gamma+\frac{C_{\Gamma}}{2}+ \frac{T-1}{T}\left(\gamma+\frac{C_{\Gamma}}{2T}\right)+\Gamma_\Omega\left(\widehat{\mathbf{X}}\right)\\
    &\le 2\gamma+\frac{C_{\Gamma}}{2}+ \frac{C_{\Gamma}}{2T}+\Gamma_\Omega\left(\widehat{\mathbf{X}}\right).
\end{aligned}
\end{equation}
Note that
\begin{equation}
\begin{aligned}
\label{equ_gammahx}
\Gamma_\Omega\left(\widehat{\mathbf{X}}\right)&=\frac{1}{2\left|\Omega\right|}\left\|\left(\widehat{\mathbf{X}}\right)_{\Omega}-\mathbf{Y}\right\|_{\mathcal{F}}^2\\
&\buildrel{(a)}\over\le\frac{1}{2\left|\Omega\right|}\left\|\left(\mathbf{X}\right)_{\Omega}-\mathbf{Y}\right\|_{\mathcal{F}}^2\\
&\buildrel{(b)}\over=\frac{1}{2\left|\Omega\right|}\left\|\left(\mathbf{N}\right)_\Omega\right\|_{\mathcal{F}}^2\buildrel{(c)}\over\rightarrow \frac{\sigma^2}{2},
\end{aligned}
\end{equation}
where (a) is due to (\ref{equ_Pnorm}); (b) is because $\mathbf{Y}=\left(\mathbf{X}\right)_\Omega+\left(\mathbf{N}\right)_\Omega$; (c) is satisfied when $\left|\Omega\right|=M{N_r}{\tau_{\rm{c}}}$ is large, which holds in massive MIMO. Plugging (\ref{equ_gammahx}) into (\ref{equ_gammaxk}), we have
\begin{equation}
    \Gamma_\Omega\left(\mathbf{X}^{(T)}\right)\le 2\gamma+\frac{C_{\Gamma}}{2}+ \frac{C_{\Gamma}}{2T}+\frac{\sigma^2}{2}.
\end{equation}
Note that $C_{\Gamma}$ is upper bounded by $ \frac{K^2}{\left|\Omega\right|}$\cite{clarkson2010coresets}. Hence, we obtain (\ref{equ_xkee}).

\subsection{Proof of Lemma \ref{lem_diffe}}
First, we introduce the following lemma.
\begin{lem}[Theorem 4 of \cite{dwork2014analyze}]
\label{lem7}
    Let $\mathbf{v}\in\mathbb{C}^{n\times 1}$ be the largest right singular vector of matrix $\mathbf{A}\in\mathbb{C}^{m\times n}$ ($m>n$) and let $\mathbf{\widehat{v}}$ be the largest eigenvector of matrix $\mathbf{B}=\mathbf{A}^{\rm{H}}\mathbf{A}+\mathbf{C}$, where $\mathbf{C}\in\mathbb{C}^{n\times n}$ is a Hermitian matrix whose upper triangular and diagonal elements are i.i.d samples from $\mathcal{N}_{\rm{c}}\left(0,\sigma^2\right)$ and $\mathcal{N}\left(0,\sigma^2\right)$ respectively. Then with high probability
    \begin{equation}
        \left\|\mathbf{A}\mathbf{\widehat{v}}\right\|_{\mathcal{F}}^2\ge \left\|\mathbf{A}\mathbf{{v}}\right\|_{\mathcal{F}}^2-O\left(\sigma\sqrt{n}\right).
    \end{equation}
\end{lem}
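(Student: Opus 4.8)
The plan is to reduce the claim to a spectral-norm bound on the Hermitian perturbation $\mathbf{C}$ by means of an elementary Rayleigh-quotient argument. First I would write $\mathbf{M}=\mathbf{A}^{\rm{H}}\mathbf{A}$ and observe that both quantities in the statement are Rayleigh quotients of $\mathbf{M}$, since $\mathbf{v}$ and $\widehat{\mathbf{v}}$ are unit vectors:
\begin{equation}
\left\|\mathbf{A}\mathbf{v}\right\|_{\mathcal{F}}^2=\mathbf{v}^{\rm{H}}\mathbf{M}\mathbf{v}=\lambda_1\left(\mathbf{M}\right),\qquad
\left\|\mathbf{A}\widehat{\mathbf{v}}\right\|_{\mathcal{F}}^2=\widehat{\mathbf{v}}^{\rm{H}}\mathbf{M}\widehat{\mathbf{v}},
\end{equation}
where $\lambda_1\left(\mathbf{M}\right)$ is the largest eigenvalue of $\mathbf{M}$, attained at the top right singular vector $\mathbf{v}$ of $\mathbf{A}$.

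Next I would exploit that $\widehat{\mathbf{v}}$ maximizes the Rayleigh quotient of $\mathbf{B}=\mathbf{M}+\mathbf{C}$ over all unit vectors; testing this optimality against $\mathbf{v}$ gives
\begin{equation}
\widehat{\mathbf{v}}^{\rm{H}}\mathbf{M}\widehat{\mathbf{v}}+\widehat{\mathbf{v}}^{\rm{H}}\mathbf{C}\widehat{\mathbf{v}}
=\widehat{\mathbf{v}}^{\rm{H}}\mathbf{B}\widehat{\mathbf{v}}
\ge\mathbf{v}^{\rm{H}}\mathbf{B}\mathbf{v}
=\mathbf{v}^{\rm{H}}\mathbf{M}\mathbf{v}+\mathbf{v}^{\rm{H}}\mathbf{C}\mathbf{v}.
\end{equation}
Rearranging isolates exactly the target difference and leaves only two quadratic forms in $\mathbf{C}$:
\begin{equation}
\widehat{\mathbf{v}}^{\rm{H}}\mathbf{M}\widehat{\mathbf{v}}-\mathbf{v}^{\rm{H}}\mathbf{M}\mathbf{v}
\ge\mathbf{v}^{\rm{H}}\mathbf{C}\mathbf{v}-\widehat{\mathbf{v}}^{\rm{H}}\mathbf{C}\widehat{\mathbf{v}}
\ge-2\left\|\mathbf{C}\right\|_2,
\end{equation}
because each Rayleigh quotient of the Hermitian $\mathbf{C}$ against a unit vector lies in $[\lambda_{\min}(\mathbf{C}),\lambda_{\max}(\mathbf{C})]$ and hence is bounded in magnitude by $\left\|\mathbf{C}\right\|_2$. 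This already yields the deterministic inequality $\left\|\mathbf{A}\widehat{\mathbf{v}}\right\|_{\mathcal{F}}^2\ge\left\|\mathbf{A}\mathbf{v}\right\|_{\mathcal{F}}^2-2\left\|\mathbf{C}\right\|_2$.

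The remaining step is the probabilistic one: bounding $\left\|\mathbf{C}\right\|_2$ for the Hermitian Gaussian (complex Wigner) matrix $\mathbf{C}$. Here I would invoke the standard random-matrix concentration result that an $n\times n$ Hermitian matrix whose upper-triangular and diagonal entries are independent, mean-zero, variance-$\sigma^2$ Gaussians satisfies $\left\|\mathbf{C}\right\|_2=O\left(\sigma\sqrt{n}\right)$ with probability $1-n^{-\Theta(1)}$; this follows either from the Wigner operator-norm law or, self-containedly, from an $\varepsilon$-net over the unit sphere combined with a Gaussian tail bound on $\mathbf{u}^{\rm{H}}\mathbf{C}\mathbf{u}$ and a union bound. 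Substituting this high-probability estimate into the deterministic inequality gives the claim. The hard part is precisely this spectral-norm tail bound: the Rayleigh-quotient manipulation is elementary and holds pointwise, whereas controlling the extreme eigenvalue of the random Hermitian perturbation \emph{uniformly} over the sphere is where the genuine work, and the ``with high probability'' qualifier, resides.
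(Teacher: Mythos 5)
Your proof is correct. The paper itself does not prove this statement---it is imported verbatim as Theorem 4 of the cited ``Analyze Gauss'' reference---and your Rayleigh-quotient argument (testing the optimality of $\widehat{\mathbf{v}}$ for $\mathbf{B}$ against $\mathbf{v}$, which yields the deterministic bound $\|\mathbf{A}\widehat{\mathbf{v}}\|_{\mathcal{F}}^2\ge\|\mathbf{A}\mathbf{v}\|_{\mathcal{F}}^2-2\|\mathbf{C}\|_2$, followed by the Wigner-type operator-norm bound $\|\mathbf{C}\|_2=O(\sigma\sqrt{n})$ with high probability) is precisely the standard proof given in that reference, so there is nothing to compare against within the paper and no gap to report.
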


Recall that $\mathbf{v}^{(n)}$ and $\mathbf{\widehat{v}}^{(n)}$ are respectively the largest right singular vector and eigenvector of $\mathbf{J}^{(n-1)}$ and $\mathbb{\widehat{W}}^{(n-1)}=\sum\limits_{m=1}^M{\mathbb{\widehat{J}}_m^{(n-1)}}={\left(\mathbf{J}^{(n-1)}\right)^{\rm{H}}\mathbf{J}^{(n-1)}}+\sum\limits_{m=1}^M\mathbf{G}_m^{(n-1)}$, where $\sum\limits_{m=1}^M\mathbf{G}_m^{(n)}$ is a Hermitian matrix whose upper triangular and diagonal elements are i.i.d samples from $\mathcal{N}_{\rm{c}}\left(0,M\mu^2\right)$ and $\mathcal{N}\left(0,M\mu^2\right)$ respectively.
According to Lemma \ref{lem7}, with high probability, we have
\begin{equation}
    \left\|\mathbf{J}^{(n-1)}\mathbf{\widehat{v}}^{(n)}\right\|_{\mathcal{F}}^2\ge \left\|\mathbf{J}^{(n-1)}\mathbf{v}^{(n)}\right\|_{\mathcal{F}}^2-O\left(\mu\sqrt{M\tau_{\rm{c}}}\right).
\end{equation}

We define ${\alpha}=\left<\mathbf{O}^{(n)},\frac{1}{\left|\Omega\right|}\mathbf{J}^{(n-1)}\right>_{\mathcal{F}}$ and $\widehat{\alpha}=\left<\mathbf{Q}^{(n)},\frac{1}{\left|\Omega\right|}\mathbf{J}^{(n-1)}\right>_{\mathcal{F}}$.
Then with high probability, the following holds
\begin{equation}
\begin{aligned}
    \widehat{\alpha}&= \left<\mathbf{Q}^{(n)},\frac{1}{\left|\Omega\right|}\mathbf{J}^{(n-1)}\right>_{\mathcal{F}} \buildrel{(a)}\over=-\frac{K\left\|\mathbf{J}^{(n-1)}\mathbf{\widehat{v}}^{(n)}\right\|_{\mathcal{F}}^2}{{\widetilde\lambda}^{(n)}{\left|\Omega\right|}}\\
    &\le -\frac{K\left(\left\|\mathbf{J}^{(n-1)}\mathbf{v}^{(n)}\right\|_{\mathcal{F}}^2-O\left(\mu\sqrt{M{\tau_{\rm{c}}}}\right)\right)}{{\widetilde\lambda^{(n)}}{\left|\Omega\right|}}\\
    &\buildrel{(b)}\over=\frac{K\left(\frac{\lambda^{(n)}\left|\Omega\right|}{K}\alpha+O\left(\mu\sqrt{M{\tau_{\rm{c}}}}\right)\right)}{{\widetilde\lambda^{(n)}}{\left|\Omega\right|}}
\end{aligned}
\end{equation}
where (a) follows from the Frobenius inner product; (b) follows from the definition of $\alpha$. Then we have 
\begin{equation}
\begin{aligned}
\label{equ_diff}
    \widehat{\alpha}-\alpha&\le\left(\frac{\lambda^{(n)}}{\widetilde{\lambda}^{(n)}}-1\right)\alpha+O\left(\frac{K\mu\sqrt{M{\tau_{\rm{c}}}}}{{\widetilde\lambda^{(n)}}{\left|\Omega\right|}}\right)\\
    &\buildrel{(a)}\over=\left(\frac{\widehat\lambda^{(n)}-\lambda^{(n)}+ \sqrt{\mu} {\left(M \tau_{\rm{c}}\right)^{1/4}}}{\widehat\lambda^{(n)}+\sqrt{\mu} {\left(M \tau_{\rm{c}}\right)^{1/4}}}\right)\frac{\lambda^{(n)}K}{\left|\Omega\right|}+O\left(\frac{K\mu\sqrt{M{\tau_{\rm{c}}}}}{\left({\widehat\lambda^{(n)}+\sqrt{\mu} {\left(M \tau_{\rm{c}}\right)^{1/4}}}\right){\left|\Omega\right|}}\right)\\
    &=\left({\widehat\lambda^{(n)}-\lambda^{(n)}+ \sqrt{\mu} {\left(M \tau_{\rm{c}}\right)^{1/4}}}\right)\frac{\lambda^{(n)}}{\widehat\lambda^{(n)}+\sqrt{\mu} {\left(M \tau_{\rm{c}}\right)^{1/4}}}\frac{K}{\left|\Omega\right|}\\
    &+O\left(\frac{K\mu\sqrt{M{\tau_{\rm{c}}}}}{\left({\widehat\lambda^{(n)}+\sqrt{\mu} {\left(M \tau_{\rm{c}}\right)^{1/4}}}\right){\left|\Omega\right|}}\right),\\
\end{aligned}
\end{equation}
where (a) follows from the definition of ${\widetilde\lambda^{(n)}}$ in (\ref{equ_wlu}) and $\alpha=-{\lambda^{(n)} {K}}/{\left|\Omega\right|}$.

By Corollary 2.3.6 from \cite{tao2012topics}, we have $\left\|\mathbb{\widehat{W}}^{(n-1)}-\left(\mathbf{J}^{(n-1)}\right)^{\rm{H}}\mathbf{J}^{(n-1)}\right\|_2=O\left(\mu\sqrt{{M}{\tau_{\rm{c}}}}\right)$ with high probability. Recall that $\widehat{\lambda}^{(n)}$ and $\lambda^{(n)}$ are respectively the largest eigenvalues of $\mathbb{\widehat{W}}^{(n-1)}$ and $\left(\mathbf{J}^{(n-1)}\right)^{\rm{H}}\mathbf{J}^{(n-1)}$. Then we have $\left|\widehat{\lambda}^{(n)}-\lambda^{(n)}\right|=O\left(\sqrt{\mu}\left({M}{\tau_{\rm{c}}}\right)^{1/4}\right)$ according to Weyl's inequality\cite{dwork2014analyze}, which implies that
\begin{equation}
    {\widehat\lambda^{(n)}-\lambda^{(n)}+ \sqrt{\mu} {\left(M \tau_{\rm{c}}\right)^{1/4}}}=O\left(\sqrt{\mu}\left({M}{\tau_{\rm{c}}}\right)^{1/4}\right),
\end{equation}
and therefore
\begin{equation}
\frac{\lambda^{(n)}}{\widehat\lambda^{(n)}+\sqrt{\mu} {\left(M \tau_{\rm{c}}\right)^{1/4}}}=O\left(1\right).
\end{equation}
Hence, the first term in (\ref{equ_diff}) is $O\left(\frac{K}{\left|\Omega\right|}\sqrt{\mu}\left({M}{\tau_{\rm{c}}}\right)^{1/4}\right)$. Since $\widehat{\lambda}^{(n)}\ge0$, the second term in (\ref{equ_diff}) is $O\left(\frac{K}{\left|\Omega\right|}\sqrt{\mu}\left({M}{\tau_{\rm{c}}}\right)^{1/4}\right)$. In conclusion, we have
$\widehat{\alpha}-\alpha\le O\left(\frac{K}{\left|\Omega\right|}\sqrt{\mu}\left({M}{\tau_{\rm{c}}}\right)^{1/4}\right)$ with high probability.

\bibliographystyle{IEEEtran}
\bibliography{IEEEabrv,ref}
\end{document}